\documentclass[11pt]{amsart}
\usepackage{geometry}                
\geometry{letterpaper}                   
\usepackage{graphicx}
\usepackage{amssymb}
\usepackage{epstopdf}
\usepackage{comment}
\usepackage{color}
\DeclareGraphicsRule{.tif}{png}{.png}{`convert #1 `dirname #1`/`basename #1 .tif`.png}

\title{Asking the metaquestions in constraint tractability}
\author{Hubie Chen}

\address{
  University of the Basque Country (UPV/EHU),
E-20018 San Sebasti\'{a}n,
Spain
\emph{and}
IKERBASQUE, Basque Foundation for Science,
E-48011 Bilbao,
Spain
}

\author{Benoit Larose}

\address{Department of Mathematics and Statistics \\
 Concordia University \\
 1455 de Maisonneuve West\\
Montr\'eal, Qc \\
Canada, H3G 1M8} \email{benoit.larose@concordia.ca}
\urladdr{http://cicma.mathstat.concordia.ca/faculty/larose/}

\begin{document}
\newtheorem{dummy}{Dummy}[section]

\newtheorem{conj}[dummy]{Conjecture}
\newtheorem{lemma}[dummy]{Lemma}
\newtheorem{prop}[dummy]{Proposition}
\newtheorem{theorem}[dummy]{Theorem}
\newtheorem{corollary}[dummy]{Corollary}
\newtheorem{definition}[dummy]{Definition}
\newtheorem{example}[dummy]{Example}

\newcommand{\cA}{\mathcal{A}}
\newcommand{\cB}{\mathcal{B}}
\newcommand{\cC}{\mathcal{C}}
\newcommand{\cD}{\mathcal{D}}
\newcommand{\cE}{\mathcal{E}}
\newcommand{\cF}{\mathcal{F}}
\newcommand{\cG}{\mathcal{G}}
\newcommand{\cH}{\mathcal{H}}
\newcommand{\cI}{\mathcal{I}}
\newcommand{\cJ}{\mathcal{J}}
\newcommand{\cK}{\mathcal{K}}
\newcommand{\cL}{\mathcal{L}}
\newcommand{\cM}{\mathcal{M}}
\newcommand{\cN}{\mathcal{N}}
\newcommand{\cO}{\mathcal{O}}
\newcommand{\cP}{\mathcal{P}}
\newcommand{\cQ}{\mathcal{Q}}
\newcommand{\cR}{\mathcal{R}}
\newcommand{\cS}{\mathcal{S}}
\newcommand{\cT}{\mathcal{T}}
\newcommand{\cU}{\mathcal{U}}
\newcommand{\cV}{\mathcal{V}}
\newcommand{\cW}{\mathcal{W}}
\newcommand{\cX}{\mathcal{X}}
\newcommand{\cY}{\mathcal{Y}}
\newcommand{\cZ}{\mathcal{Z}}

\newcommand{\cMBW}{\mathcal{M}_{\mathsf{BW}}}

\newcommand{\bA}{\mathbf{A}}
\newcommand{\bB}{\mathbf{B}}
\newcommand{\bC}{\mathbf{C}}
\newcommand{\bD}{\mathbf{D}}
\newcommand{\bE}{\mathbf{E}}
\newcommand{\bF}{\mathbf{F}}
\newcommand{\bG}{\mathbf{G}}
\newcommand{\bH}{\mathbf{H}}
\newcommand{\bI}{\mathbf{I}}
\newcommand{\bJ}{\mathbf{J}}
\newcommand{\bK}{\mathbf{K}}
\newcommand{\bL}{\mathbf{L}}
\newcommand{\bM}{\mathbf{M}}
\newcommand{\bN}{\mathbf{N}}
\newcommand{\bO}{\mathbf{O}}
\newcommand{\bP}{\mathbf{P}}
\newcommand{\bQ}{\mathbf{Q}}
\newcommand{\bR}{\mathbf{R}}
\newcommand{\bS}{\mathbf{S}}
\newcommand{\bT}{\mathbf{T}}
\newcommand{\bU}{\mathbf{U}}
\newcommand{\bV}{\mathbf{V}}
\newcommand{\bW}{\mathbf{W}}
\newcommand{\bX}{\mathbf{X}}
\newcommand{\bY}{\mathbf{Y}}
\newcommand{\bZ}{\mathbf{Z}}
\newcommand{\BSS}{{\bf S}}


\newcommand{\bbA}{\mathbb{A}}
\newcommand{\bbB}{\mathbb{B}}
\newcommand{\bbC}{\mathbb{C}}
\newcommand{\bbD}{\mathbb{D}}
\newcommand{\bbE}{\mathbb{E}}
\newcommand{\bbF}{\mathbb{F}}
\newcommand{\bbG}{\mathbb{G}}
\newcommand{\bbH}{\mathbb{H}}
\newcommand{\bbI}{\mathbb{I}}
\newcommand{\bbJ}{\mathbb{J}}
\newcommand{\bbK}{\mathbb{K}}
\newcommand{\bbL}{\mathbb{L}}
\newcommand{\bbM}{\mathbb{M}}
\newcommand{\bbN}{\mathbb{N}}
\newcommand{\bbO}{\mathbb{O}}
\newcommand{\bbP}{\mathbb{P}}
\newcommand{\bbQ}{\mathbb{Q}}
\newcommand{\bbR}{\mathbb{R}}
\newcommand{\bbS}{\mathbb{S}}
\newcommand{\bbT}{\mathbb{T}}
\newcommand{\bbU}{\mathbb{U}}
\newcommand{\bbV}{\mathbb{V}}
\newcommand{\bbW}{\mathbb{W}}
\newcommand{\bbX}{\mathbb{X}}
\newcommand{\bbY}{\mathbb{Y}}
\newcommand{\bbZ}{\mathbb{Z}}

\newcommand{\hnote}[1]{{\color{red} [{\bf Hubie note:} #1]}}
\newcommand{\bnote}[1]{{\color{red} [{\bf Benoit note:} #1]}}

\newcommand{\csp}{\mathsf{CSP}}
\newcommand{\CSP}{\mathsf{CSP}}

\begin{abstract}
The \emph{constraint satisfaction problem (CSP)} involves deciding, 
given a set of variables and a set of constraints on the
variables, whether or not there is an assignment to the variables satisfying all of the constraints. 
One formulation of the CSP is as the problem of deciding,
given a pair $(\bbG, \bbH)$ of relational structures,
whether or not there is a homomorphism from the first structure
to the second structure.
The CSP is in general NP-hard;
a common way to restrict this problem 
is to fix the second structure $\bbH$, so that
each structure $\bbH$ gives rise to a problem $\csp(\bbH)$.
The problem family $\csp(\bbH)$ has been studied
using an algebraic approach, which links the algorithmic and
complexity properties of each problem $\csp(\bbH)$
to a set of operations, the so-called \emph{polymorphisms}
of $\bbH$. 
Certain types of polymorphisms are known to imply the
polynomial-time tractability of $\csp(\bbH)$, and
others are conjectured to do so.
This article systematically studies---for various classes
of polymorphisms---the computational complexity
of deciding whether or not a given structure $\bbH$
admits a polymorphism from the class. Among other results, we prove
the NP-completeness of deciding a condition conjectured
to characterize the tractable problems $\csp(\bbH)$,
as well as the NP-completeness of deciding
if $\csp(\bbH)$ has bounded width.
\end{abstract}

\maketitle

\section{Introduction}

The \emph{constraint satisfaction problem (CSP)} involves deciding, 
given a set of variables and a set of constraints on the
variables, whether or not there is an assignment to the variables satisfying all of the constraints. Cases of the constraint
satisfaction problem appear in many fields of study, including artificial intelligence, spatial and temporal reasoning,
logic, combinatorics, and algebra. Indeed, the constraint satisfaction problem is flexible in that it admits a number of
equivalent formulations. 
In this paper, we work with the well-known formulation 
as the relational homomorphism problem, namely: given
two similar relational structures $\bbG$ and $\bbH$, does there exist a homomorphism from $\bbG$ to $\bbH$? In this formulation, one
can view each relation of $\bbG$
as containing variable tuples that are constrained together, and the corresponding relation
of $\bbH$ as containing the permissible values for the variable tuples.
In this article, we assume that all structures under discussion are finite,
that is, have finite universe.

The constraint satisfaction problem is in general NP-hard; this general intractability has motivated the study of
restricted versions of the CSP that have various desirable complexity and algorithmic properties. A natural and well-studied
way to restrict the CSP is to fix the second structure $\bbH$
(often referred to as the \emph{right-hand side structure}),
which amounts to restricting the relations that can be used
to specify permissible value tuples.
Each structure $\bbH$ then gives rise to a problem
$\csp(\bbH)$: given a structure $\bbG$, decide if
it has a homomorphism to $\bbH$; 
and, 
the resulting family of problems is a rich one that includes
Boolean satisfiability problems, graph homomorphism problems, and satisfiability problems on algebraic equations.
While each problem $\csp(\bbH)$ is in NP, 
for certain structures $\bbH$ it can be shown that the problem
$\csp(\bbH)$ is polynomial-time decidable.
Indeed, in a now-classic result from 1978, 
Schaefer~\cite{Schaefer78-satisfiability} 
presented a classification theorem,
showing that for
each structure $\bbH$ having a two-element universe,
the problem $\csp(\bbH)$ is either polynomial-time decidable,
or is NP-hard.
Schaefer left open and suggested the research program of
classifying structures having finite universe of size strictly
greater than two.

Over the past two decades, an algebraic approach to studying
complexity aspects of
the problem family
$\csp(\bbH)$
has emerged.
A \emph{polymorphism} of a structure $\bbH$ with universe $H$
is defined as a finitary operation $f: H^k \to H$ that is a homomorphism
from $\bbH^k$ to $\bbH$; note that a polymorphism of arity $k=1$
is precisely an endomorphism.
A cornerstone of the algebraic approach is a theorem stating
that when two structures $\bbH$, $\bbH'$ have the same polymorphisms,
the problems $\csp(\bbH)$ and $\csp(\bbH')$ are polynomial-time
interreducible~\cite{BulatovJeavonsKrokhin05-finitealgebras}.\footnote{In fact,
under the stated assumption, the problems $\csp(\bbH)$ and $\csp(\bbH')$
are logarithmic-space interreducible~\cite{LaroseTesson09-hardness}.
Let us mention here that, under the assumption,
one also has interreducibility for some other computational problems
of interest, such as
the \emph{quantified CSP}~\cite{BBCJK09-qcsp,Chen11-qcspandpgp,Chen12-meditations}
and various comparison problems involving primitive positive
formulas~\cite{BovaChenValeriote13-generic}.
}  
Intuitively, this theorem
can be read as saying that the polymorphisms of a structure
contain all of the information one needs to know to understand
the complexity of $\csp(\bbH)$, 
at least up to polynomial-time computation.
At the present, it is well-known that certain types of polymorphisms
are desirable in that they
guarantee polynomial-time tractability of $\csp(\bbH)$.
As an example, it is now a classic theorem in the area
that, for any structure $\bbH$ having a \emph{semilattice polymorphism},
the problem $\csp(\bbH)$ is polynomial-time decidable;
a semilattice polymorphism is, by definition, an arity $2$ polymorphism
that is associative, commutative and idempotent.
Here, it should be further pointed out that
a conjecture known as the 
\emph{algebraic dichotomy conjecture}~\cite{BulatovJeavonsKrokhin05-finitealgebras}
predicts the polynomial-time tractability of 
each problem $\csp(\bbH)$ not satisfying
a known sufficient condition for NP-completeness,
and that this conjecture can be formulated as predicting
the tractability of each problem $\csp(\bbH)$ 
where $\bbH$ admits
a certain type of polymorphism
(see 
Conjecture~\ref{conj:algebraic-dichotomy}
and the surrounding discussion).


In this article, 
we systematically study---for various 
classes of polymorphisms---the computational problem
of deciding whether or not a given structure
$\bbH$ admits a polymorphism from the class.
This form of decision problem is often popularly referred to 
as a \emph{metaquestion}.
All of the polymorphisms that we study are either known 
to guarantee tractability of $\csp(\bbH)$, 
or predicted to do so by the algebraic dichotomy conjecture
(see the discussion in Section~\ref{sect:defs}).

Let us overview our principal technical results.
\begin{itemize}
\item We formalize and demonstrate a connection between the polynomial-time
tractability of a particular type of metaquestion
and the existence of a so-called uniform
polynomial-time algorithm for the condition that the metaquestion
asks about (Section~\ref{sect:uniformity-and-metaquestions}).

\item On the positive side, we prove that the metaquestion
for conservative binary commutative polymorphisms is
solvable in NL, non-deterministic logspace
(Section~\ref{subsect:cc-polymorphisms}).

\item We prove a generic NP-hardness result that applies
to the metaquestions corresponding to a range of Maltsev conditions
(Section~\ref{subsect:generic-hardness}).
One consequence of this result is that deciding if a given
structure gives rise to a CSP with bounded width is NP-complete
(Corollary~\ref{cor:deciding-bounded-width});
this answers a question of L. Barto~\cite{Barto14-collapse}.
Another consequence of this result is the
NP-completeness of
deciding if a given structure satisfies an algebraic condition
which has been conjectured
 to characterize the structures having a tractable CSP
(see Corollary~\ref{cor:deciding-quasi-siggers}).

\item We provide a simple proof
 that the metaquestion for semilattice polymorphisms
is NP-complete
(Section~\ref{subsect:semilattice}).

\item We give a general hardness result showing that, 
for a number of types of conservative polymorphisms, 
the metaquestion is NL-hard
(Section~\ref{subsect:nl-hardness}).
In particular, this result applies to the metaquestion
for conservative binary commutative polymorphisms,
and hence provides a hardness result tightly complementing
the positive result for such polymorphisms.

\end{itemize}
We summarize some consequences both of our results and known results in Table~\ref{table:summary}.

We view the complexity study of metaquestions as a naturally motivated
research topic.  
In general,
an instance $(\bbG, \bbH)$ of the CSP encountered in the wild 
or on the street
does not, of course, 
come with any guarantee about the
properties of the right-hand side structure $\bbH$;
in order to know if any of the polymorphism-based
tractability results can be exploited to solve the instance,
one must first detect if $\bbH$ has a relevant polymorphism.
From this perspective, the present study can thus be viewed
as an effort to bridge practice and the algebraic theory
of tractability.




\begin{center}
\begin{table}
\begin{tabular}[t]{|c||c|c|c|c|} 
\hline
 Polymorphism&free &idempotent &conservative&conservative  \\
 & & &&at most binary structure  \\
  \hline &&&& \\
  2-TS & NP-c &(*)                        &  NL-c      & NL-c \\
 \hline &&&& \\
$k$-TS ($k \geq 3$) &NP-c&P               &  P  &P/NL-hard \\
 \hline &&&& \\
$k$-symmetric ($k \geq 3$, even) &NP-c&(*)&  P  &  P/NL-hard \\
 \hline &&&& \\
 $k$-symmetric ($k \geq 3$, odd) &NP-c&(*)& not known &P \\
  \hline &&&& \\
$k$-cyclic ($k \geq 3$, even) &NP-c&(*)   & P   &  P/NL-hard \\
  \hline &&&& \\
  $k$-cyclic ($k \geq 3$, odd) &NP-c&(*)& not known &P \\
  \hline &&&& \\
 Set polymorphism & EXPTIME/ & EXPTIME & EXPTIME & EXPTIME/ \\
  &NP-hard  & & & NL-hard \\
  \hline &&&& \\
   Maltsev &(*)&(*)&P \cite{Frenchguypaper} &P \\
  \hline &&&& \\
    Siggers &(*)&(*)& not known &P \\
  \hline &&&& \\
 semilattice &NP-c&NP-c&NP-c&NP-c \\
  
\hline
\end{tabular}

\caption{{\bf Summary of some results.}
An entry of (*) indicates that the existence of the type of polymorphism
in question can be formulated as an
idempotent strong linear Maltsev condition,
implying the applicability of
Corollary~\ref{cor:metaquestion-uniformity},
which connects the metaquestion for the Maltsev condition
to the existence of a uniform polynomial-time algorithm
(see Section~\ref{sect:uniformity-and-metaquestions} for details).
Note that positive results propagate to the right, and
hardness results propagate to the left; for readability,
we omit explicitly placing NL-hardness claims in some of the entries.
The NP-completeness results on semilattices
come from Theorem~\ref{thm:semilattice-hard};
the other NP-completeness results and the NP-hardness result
come from Corollary~\ref{cor:deciding-ts-symm-cyc}.
All NL-hardness results come from Theorem~\ref{thm-nl-hard},
and NL containment for conservative $2$-TS 
 comes from 
Theorem~\ref{thm:cc-nl}.
The EXPTIME containment result for set polymorphisms
comes from Proposition~\ref{prop:set-polymorphism-exptime}.
The P containment result for $k$-TS comes from 
Corollary~\ref{cor:idempotent-ts-p},
and the P containment results for even $k$-symmetric
and even $k$-cyclic come from Corollary~\ref{cor:even-conservative}.
The P containment result for conservative Maltsev
comes from
\cite{Frenchguypaper}.
Finally, in the case of conservative at most binary structures,
the not yet covered P containment results follow from
Theorem~\ref{thm:at-most-binary-ptime} 
in conjunction with
Proposition~\ref{prop:polymorphism-to-siggers}.  
}
\label{table:summary}
\end{table}

\end{center}
\vspace{.5cm}

\section{Definitions, Notation and Terminology}
\label{sect:defs}

A {\em relational structure} 
is a tuple $\bbH = \langle H; \theta_1,\dots,\theta_s \rangle$ where $H$ is a non-empty finite set and each $\theta_i$ is a  relation of  arity $r_i$ on $H$; the sequence $r_1,\dots,r_s$ is the {\em type} of $\bbH$. 
A relational structure is \emph{at most binary} if
the arity of each relation is less than or equal to $2$.
In this article, most of the computational problems considered
take as input a relational structure; 
as is quite standard in the literature,
\emph{we always assume that each relation of a relational structure
is specified by an explicit listing of its tuples.}
Two structures with the same type are said to be {\em similar}. If $\bbG$, $\bbH$, $\bbK$, ... are relational structures, we denote their respective universes by $G$, $H$, $K$, ... The {\em product} of similar structures is the usual one, viz. if    $\bbG = \langle G; \theta_1,\dots,\theta_s \rangle$ and $\bbH = \langle H; \rho_1,\dots,\rho_s \rangle$ then $\bbG \times \bbH = \langle G \times H, \sigma_1,\dots,\sigma_s \rangle$ where $\sigma_i = \{((g_1,h_1),\dots,(g_r,h_r)): (g_1,\dots,g_r) \in \theta_i, (h_1,\dots,h_r) \in \rho_i\}$. We denote the product of the structure $\bbH$ with itself $k$ times by $\bbH^k$. Given a map $f:G \rightarrow H$ and a $k$-tuple $u = (u_1,\dots,u_k) \in G^k$, let $f(u) = (f(u_1),\dots,f(u_k))$; if $\theta$ is a $k$-ary relation on $\bbG$ then $f(\theta) = \{(f(u): u \in \theta\}$.
A map $f:G\rightarrow H$ is a {\em homomorphism from $\bbG$ to $\bbH$ } if $f(\theta_i) \subseteq \rho_i$ for all $i=1,\dots,s$.
For an integer $k \geq 1$, a {\em  $k$-ary operation on $H$} is a map from $H^k$ to $H$. 

\begin{definition} Let $\bbH$ be a relational structure. 
A $k$-ary operation $f$ on $H$ is a {\em polymorphism} of $\bbH$ if $f$ is a homomorphism from $\bbH^k$ to $\bbH$;
in this case, we also say that $f$ \emph{preserves} $\bbH$.
\end{definition}

We are concerned with polymorphisms obeying various interesting identities. In order to avoid undue algebraic technicalities, 
we present certain concepts in a slightly unorthodox way 
(for the standard equivalents, see for instance \cite{KozikKrokhinValerioteWillard15-characterizations}.)

An expression $\cE$ of the form 
$$f(x_1,\dots,x_k) \approx g(y_1,\dots, y_n)$$
is a {\em linear identity};
it is satisfied by two interpretations for $f$ and $g$ on
a set $H$ if, for any assignment to the variables,
it holds that both sides of the identity evaluate to the same value.
Without fear of confusion, we shall usually make no distinction between the operation symbols used in an identity and the actual operations satisfying it, for example, we will simply write that $f$ and $g$ satisfy $\cE$ and so on.
Note that we allow linear identities of the form 
$$f(x_1,\dots,x_k) \approx y_i$$
that is, containing only one operation symbol;
such an identity can be formally viewed as an
expression of the above form
where one of the operations is a projection. Following 
\cite{BartoOprsalPinsker15-wonderland}, if a linear identity is not of this form, i.e. has explicit operation symbols on both sides, we say it has {\em height 1}.

A {\em strong linear Maltsev condition} is a finite set of linear identities $\{\cE_1,\dots,\cE_r\}$. A sequence of operations $f_1,\dots,f_m$ 
{\em satisfies} the strong Maltsev condition if it satisfies each identity $\cE_i$.\footnote{
 The standard definition of Maltsev conditions concerns varieties of algebras. The modifier ``strong'' refers  to the fact that set of identities is  finite, as opposed to a condition as in Lemma \ref{lemma-BKS} below.

}

\begin{definition} Let $\bbH$ be a relational structure. We say that $\bbH$ {\em satisfies}  a strong linear Maltsev condition if there exist polymorphisms of $\bbH$ that satisfy it. \end{definition}


We now present some strong Maltsev conditions we shall investigate.

A $k$-ary operation $f$ (with $k \geq 1$)
is {\em idempotent} if it satisfies $$f(x,x,\dots,x) \approx x.$$ 
The operation $f$ is {\em cyclic} if it obeys 
$$f(x_1,\dots,x_k) \approx f(x_k,x_1,\dots,x_{k-1}). $$ 
It is {\em symmetric} if,
for every permutation $\sigma$ of the set $\{1,\dots,k\}$,
it obeys the identities
$$f(x_1,\dots,x_k) \approx f(x_{\sigma(1)},\dots,x_{\sigma(k)}).$$
It is {\em totally symmetric (TS)} if, whenever
$\{x_1,\dots,x_k\} = \{y_1,\dots,y_k\}$,
it satisfies the identity
$$f(x_1,\dots,x_k) \approx f(y_1,\dots,y_k).$$

 Notice that 
$$f \text{ TS } \Rightarrow f \text{ symmetric } \Rightarrow f \text{ cyclic }$$ and that, for a binary operation, the properties of
being commutative, TS, symmetric, and cyclic all coincide.

For $k \geq 3$, the operation $f$ is a {\em near-unanimity (NU) operation} if it obeys the identities
$$f(x,\dots,x,y,x,\dots,x) \approx x$$ for any position of the lone $y$. A 3-ary NU operation is called a {\em majority} operation. 

A 3-ary operation $f$ is {\em Maltsev} if it obeys the identities
$$f(y,y,x) \approx f(x,y,y) \approx x.$$

A 4-ary operation $f$ is {\em Siggers} if it is idempotent and satisfies the identity
$$f(a,r,e,a) \approx f(r,a,r,e).$$

We shall also require the following conditions on operations, 
which are {\em not} presented by linear identities. The $k$-ary operation $f$ is {\em conservative} if it satisfies $$f(x_1,\dots,x_k) \in \{x_1,\dots,x_k\}$$ for all $x_i$. A {\em semilattice} operation is an associative, idempotent, commutative binary operation. 

We now gather some well-known implications involving the special polymorphisms defined here; as some of these results are folklore, we give general references only \cite{BulatovValeriote08-recent-results-csp,Siggers2010,Kearnes2014,KozikKrokhinValerioteWillard15-characterizations}. 

\begin{prop} 
\label{prop:polymorphism-to-siggers}
If a structure admits an idempotent polymorphism $f$ 
which is cyclic, TS, symmetric, NU, or Maltsev 
then it admits a Siggers polymorphism; 
moreover, in each case, if $f$ is conservative, so is the Siggers polymorphism. 
\end{prop}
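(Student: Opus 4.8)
The plan is to reduce all five cases to a single known theorem: every \emph{finite idempotent} algebra possessing a Taylor term possesses a $4$-ary term satisfying the Siggers identity \cite{Kearnes2014} (see also \cite{Siggers2010,KozikKrokhinValerioteWillard15-characterizations}). As already observed above, $f$ TS $\Rightarrow$ $f$ symmetric $\Rightarrow$ $f$ cyclic, so it suffices to treat the three cases where the idempotent polymorphism $f$ is cyclic, a near-unanimity operation, or Maltsev. In each case we will check that $f$ is a \emph{Taylor operation} of the algebra $\bbA$ whose universe is $H$ and whose basic operation is $f$; since $H$ is finite and $f$ is idempotent, $\bbA$ is a finite idempotent algebra, so the cited theorem yields a term operation $s$ of $\bbA$ of arity $4$ with $s(x,x,x,x)\approx x$ and $s(a,r,e,a)\approx s(r,a,r,e)$.

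First, the bookkeeping that makes this suffice. Every term operation of $\bbA$ lies in the clone generated by $f$ together with the projections; since the set $\mathrm{Pol}(\bbH)$ of polymorphisms of $\bbH$ is a clone containing $f$, this clone is contained in $\mathrm{Pol}(\bbH)$, whence the operation $s$ above is a polymorphism of $\bbH$, and it is idempotent by construction. Moreover, if $f$ is conservative then so is every operation in the clone it generates: projections are conservative, and a composition $g(h_1,\dots,h_m)$ of conservative operations is conservative since each $h_j(\bar x)\in\{x_1,\dots,x_n\}$ forces $g(h_1(\bar x),\dots,h_m(\bar x))\in\{x_1,\dots,x_n\}$. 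Hence $s$ is conservative whenever $f$ is, and no separate argument for the conservative case is needed.

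Second, the verification that $f$ is a Taylor operation in each of the three remaining cases, i.e.\ that $f$ satisfies $f(x,\dots,x)\approx x$ (given, by idempotence) and, for every coordinate $i$, an identity between two $f$-terms in the variables $x,y$ having $x$ in coordinate $i$ on one side and $y$ in coordinate $i$ on the other. If $f$ is cyclic of arity $k$, repeated application of the cyclic identity gives the chain of equalities $f(y,x,\dots,x)\approx f(x,y,x,\dots,x)\approx\cdots\approx f(x,\dots,x,y)$; for coordinate $i$ one compares the term of this chain with $y$ in position $i$ to the term with $y$ in the cyclically neighbouring position. If $f$ is a $k$-ary near-unanimity operation, then for each $i$ the identity $f(x,\dots,x,y,x,\dots,x)\approx f(x,\dots,x)$, with the lone $y$ in position $i$, works, both sides evaluating to $x$. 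If $f=p$ is Maltsev, the identities $p(y,y,x)\approx p(x,x,x)$ and $p(x,y,y)\approx p(x,x,x)$ handle coordinates $1,2$ and coordinate $3$ respectively.

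The one nontrivial ingredient is the cited theorem producing a $4$-ary Siggers term from a finite idempotent Taylor algebra; this is the heart of the matter and the single step we do not reprove. One could instead exhibit an explicit $4$-ary Siggers term directly from $p$ in the Maltsev case, but the cyclic case --- already for arbitrary arity $k$ --- genuinely requires this theorem (or the original argument of Siggers), so it is cleanest to invoke it uniformly; everything else above is routine clone-theoretic bookkeeping, and, pleasantly, conservativity comes for free because the Siggers term is a composite of $f$ with projections.
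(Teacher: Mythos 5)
Your proof is correct and is essentially the argument the paper intends: the paper gives no proof of this proposition at all, only the references \cite{Siggers2010,Kearnes2014,KozikKrokhinValerioteWillard15-characterizations}, and your reduction --- each listed condition makes the idempotent $f$ a Taylor operation of the finite idempotent algebra $(H,f)$, the Kearnes--Markovi\'c--McKenzie theorem then supplies a $4$-ary Siggers term of that algebra, and this term, lying in the clone generated by $f$, is automatically a polymorphism of $\bbH$ that inherits idempotence and conservativity --- is precisely the folklore content of those citations. The only caveat, which is really a defect of the statement rather than of your proof, is that the cyclic/symmetric/TS cases implicitly require arity $k \geq 2$ (a unary idempotent ``cyclic'' polymorphism is just the identity and yields nothing).
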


\section{Known/Preliminary Results} \label{section-known}

Let $\bbH$ be a relational structure. We denote by $\CSP(\bbH)$ the set of finite structures that admit a homomorphism to $\bbH$. The problem
 $\CSP(\bbH)$ is clearly in NP.
The dichotomy conjecture of Feder and Vardi, that states that every $\CSP(\bbH)$ is either tractable or NP-complete, has been the source of intense scrutiny over the past two decades, see for instance \cite{BartoKozik09-boundedwidth, BIMMVW10-fewsubpowers} and the surveys \cite{DBLP:conf/dagstuhl/BulatovKL08,BulatovValeriote08-recent-results-csp}.
A very deep theory has been developed, relating the nature of the identities satisfied by the polymorphisms of the structure $\bbH$ and the complexity of the associated constraint satisfaction problem. Simplifiying to the extreme, the theory states that, the nicer the identities, the easier the problem is.  

We say the structure $\bbR$ is {\em a retract} of the structure $\bbH$ if there exist homomorphisms $r:\bbH \rightarrow \bbR$ and $e:\bbR \rightarrow \bbH$ such that $r \circ e$ is the identity on $R$. 
A structure $\bbH$ is a {\em core} if the only homomorphisms from $\bbH$ to itself are automorphisms, or equivalently, if the structure has no proper retract. 
The retracts of minimal size of a finite relational structure $\bbH$ are cores, and are all isomorphic to each other; we refer to these retracts
as the \emph{cores of $\bbH$}, and due to their being mutually isomorphic, by a slight abuse we speak of
{\em the core} of a structure.
 Obviously if $\bbH'$ is the core of $\bbH$ then $\CSP(\bbH) = \CSP(\bbH')$. 
It is known that
for a core $\bbH$, the problem $\CSP(\bbH)$ is interreducible with
the problem $\CSP(\widehat{\bbH})$ where the structure $\widehat{\bbH}$ 
is obtained by expanding the structure $\bbH$ with all one-element
unary relations (sometimes called \emph{constants} in this context).
Here, interreducibility can actually be proved
with respect to first-order reductions~\cite{LaroseTesson09-hardness}. 
As such a structure $\widehat{\bbH}$ has only idempotent polymorphisms
(indeed, it is straightforwardly verified that
the polymorphisms of $\widehat{\bbH}$ are precisely
the idempotent polymorphisms of $\bbH$),
for many complexity issues on the problem family $\CSP(\bbG)$,
one can restrict attention to idempotent algebras,
which are known to have good behavior.
Note also that, up to logspace interreducibility,  we can assume the equality relation is also a relation of 
a structure $\bbH$ where $\csp(\bbH)$ is under study \cite{BulatovJeavonsKrokhin05-finitealgebras}. Finally,  the following is immediate: a structure $\bbH$ admits a conservative operation $f$ satisfying some identities if and only if the structure $\bbH'$ obtained from $\bbH$ by adding all non-empty subsets as basic relations admits an operation satisfying those same identities.

The following is one of many possible formulations of a refinement of the dichotomy conjecture, due to Bulatov, Jeavons and 
Krokhin~\cite{BulatovJeavonsKrokhin05-finitealgebras} (see also \cite{KozikKrokhinValerioteWillard15-characterizations}):


\begin{conj}  
\label{conj:algebraic-dichotomy}
If the core of a relational structure $\bbH$ admits a Siggers polymorphism,
then $\CSP(\bbH)$ is tractable. \end{conj}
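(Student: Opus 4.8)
We sketch a possible line of attack on Conjecture~\ref{conj:algebraic-dichotomy}, which is stated here as an open problem; the emphasis is on locating where the real difficulty lies. The plan is to reduce first to the idempotent case. By hypothesis the core $\bbH_0$ of $\bbH$ admits a Siggers polymorphism, which is idempotent by definition; since $\CSP(\bbH) = \CSP(\bbH_0)$ and, as recalled in Section~\ref{section-known}, $\CSP(\bbH_0)$ is interreducible with $\CSP(\widehat{\bbH_0})$ --- whose polymorphisms are exactly the idempotent polymorphisms of $\bbH_0$ --- it suffices to show that $\CSP(\widehat{\bbH_0})$ is tractable. One is thereby reduced to the statement: if a finite relational structure $\bbA$ has only idempotent polymorphisms and admits a Siggers polymorphism, then $\CSP(\bbA)$ is tractable. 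For idempotent algebras, admitting a Siggers polymorphism is equivalent to admitting a cyclic polymorphism of some arity $k \geq 2$, equivalently a Taylor polymorphism, equivalently a sequence of weak near-unanimity polymorphisms (see the implications surrounding Proposition~\ref{prop:polymorphism-to-siggers}); so the genuine content is an algorithmic theorem for structures preserved by a Taylor operation.

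Next I would bring in the structure theory of finite idempotent algebras. By tame congruence theory, such an algebra admits a Taylor term exactly when the variety it generates omits the unary type, and two sub-cases are already understood. If the variety moreover omits the affine type, then $\CSP(\bbA)$ has bounded width and is decided by a bounded-level local-consistency (``Prague instance'') algorithm, following Barto and Kozik. If instead $\bbA$ possesses an edge term --- the ``few subpowers'' regime --- then $\CSP(\bbA)$ is decided by the algorithm for algebras with few subpowers of Idziak, Markovi\'{c}, McKenzie, Valeriote and Willard, which maintains compact generating sets for the space of partial solutions. The decisive situation is the remaining one, in which $\bbA$ is neither of bounded width nor of few subpowers, so that affine behaviour coexists with a non-affine ``rich'' part.

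The hard part is precisely this general case, and essentially all of the difficulty of the conjecture is concentrated in it. The approach is to decompose a given instance $\bbG$ of $\CSP(\bbA)$ using finer structural information about $\bbA$: on the algebra naturally attached to the variables of $\bbG$ one searches either for a nontrivial congruence, permitting passage to a strictly smaller quotient instance, or for an absorbing subuniverse, permitting a safe restriction; one then shows that iterating such reductions terminates at instances governed by bounded-width or few-subpowers algebras, which are solved as above. Carrying this through demands a fine analysis of the subdirectly irreducible algebras of the variety, of how constraint propagation interacts with the congruence lattice, and of the ``linear'' (affine) components together with their centralizers. This is the substance of the independent solutions of Bulatov and of Zhuk, and a self-contained treatment of it is well beyond the scope of the present article.
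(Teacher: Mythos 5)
The statement you are addressing is stated in the paper as a \emph{conjecture} (the algebraic dichotomy conjecture of Bulatov, Jeavons and Krokhin); the paper offers no proof of it, and indeed treats it throughout as an open hypothesis whose \emph{metaquestion} it analyzes. Your preliminary reductions are all sound and match standard practice: passing to the core, then to $\widehat{\bbH_0}$ so that only idempotent polymorphisms remain, and invoking the known equivalences (for finite idempotent algebras) between having a Siggers term, a cyclic term, a Taylor term, and weak near-unanimity terms. Your identification of the two solved regimes --- bounded width via Barto--Kozik local consistency, and few subpowers via the Idziak--Markovi\'{c}--McKenzie--Valeriote--Willard algorithm --- is likewise accurate.

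However, what you have written is not a proof but a correct diagnosis of why no proof appears in the paper. The entire content of the conjecture is concentrated in the general case you defer at the end: an idempotent Taylor algebra that is neither of bounded width nor of few subpowers. Your description of the intended strategy there (quotienting by congruences, restricting to absorbing subuniverses, analyzing subdirectly irreducible algebras and affine components) is a fair gloss on the eventual resolutions of Bulatov and of Zhuk (2017), but you carry out none of it, and no step of your text could be checked as establishing tractability in that case. So there is a genuine gap --- namely, the theorem itself. This is not a criticism of your understanding: the honest conclusion is that the statement cannot be proved by the methods of this paper, and your writeup should be framed as a reduction-plus-survey rather than as a proof proposal. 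If you wish to cite the result as a theorem, you must cite Bulatov's and Zhuk's papers, whose arguments postdate and vastly exceed anything reproduced here.
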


A form of converse to this statement is known to hold, namely,
it holds that
a  structure whose core has no Siggers polymorphism 
has an NP-complete CSP \cite{BulatovJeavonsKrokhin05-finitealgebras}. 


Let us remark here
that the conservative case was completely settled by Bulatov:

\begin{theorem} \cite{Bulatov11-conservative-csp}
 If  a relational structure admits a conservative Siggers polymorphism then its CSP is tractable. 
\end{theorem}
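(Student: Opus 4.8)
The plan is to leverage the extreme rigidity of conservative algebras at the two-element level: a conservative Siggers polymorphism forces, on every pair of elements of $H$, a semilattice, majority, or minority operation, and one then assembles an algorithm for $\csp(\bbH)$ out of the three classical tractable cases, organized around a colouring of the pairs of elements of $H$.

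First I would pass to the algebraic side. A conservative operation is automatically idempotent, and by the reductions recalled in Section~\ref{section-known} we may add to $\bbH$ all non-empty subsets of $H$ as new relations; so we work with the conservative idempotent algebra $\bA = (H; F)$, where $F$ is the set of conservative polymorphisms of $\bbH$, and we note that $\bbH$ is a core. For a two-element subset $B = \{a,b\} \subseteq H$, the operations of $F$ restricted to $B$ form an idempotent clone on $B$; by Post's description of Boolean clones this clone is either the clone of projections, or it contains a semilattice operation, a majority operation, or a minority (affine) operation on $B$. Now the restriction to $B$ of a conservative Siggers polymorphism is a well-defined Siggers operation on $B$ (conservativity keeps the values inside $B$), and a direct check against the four binary projections shows that the Siggers identity is satisfied by no projection; hence the restricted clone is never trivial, so \emph{every} pair $B$ is covered by a semilattice, majority, or minority operation of $\bA$.

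Next I would colour the complete graph on $H$: an edge $\{a,b\}$ is \emph{yellow} if $\bA|_B$ has a semilattice operation, \emph{red} if it has a majority but no semilattice operation, and \emph{blue} if it has a minority but neither; yellow edges are oriented using a chosen semilattice operation. The substantive algebraic work is a collection of \emph{consistency lemmas}, proved by applying well-chosen polymorphisms to well-chosen tuples, which state that this colouring has no bad configurations — e.g.\ the orientations of yellow edges inside a subalgebra cannot close into certain cycles, and the colours on a three-element subset are constrained — so that the local two-element data propagates coherently across an instance. With these in hand the algorithm runs in three layers: (i) enforce $(2,3)$-consistency on the input $\bbG$, passing to a minimal instance and rejecting if a domain empties; (ii) \emph{yellow reduction}: as long as some variable's current domain internally contains an oriented yellow edge $a \to b$, delete $a$ and re-establish consistency, this being sound by a propagation argument that uses the semilattice operation to push any solution off $a$; (iii) in the residual instance every variable domain is spanned only by red and blue edges, and one argues that such an instance is solved by local consistency together with the Maltsev (few-subpowers) algorithm — the red part behaving like a bounded-width instance coming from majority operations, the blue part handled by the Maltsev algorithm, and their interaction controlled because in a conservative algebra the subdirect products of two-element majority-type and affine-type algebras are essentially rectangular and thus transparent to both procedures.

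The main obstacle is exactly steps (ii)--(iii) together with the consistency lemmas: one must show that the yellow reduction never destroys a solution, and that the residual red/blue instance is genuinely tractable, the latter requiring a delicate analysis of constraints linking a red-type variable to a blue-type variable — in effect a proof that, under conservativity and the absence of trivial two-element subalgebras, no hybrid obstruction survives the local-consistency and Maltsev phases. Establishing the forbidden-configuration lemmas for the colouring, on which the whole argument rests, is where the bulk of the effort goes; everything else is assembly of known tractable-case machinery. (A shortcut would be to invoke the full CSP dichotomy theorem, since a conservative structure with a Siggers polymorphism is a core admitting one, but that postdates and subsumes the result being cited here, so I would carry out the direct conservative argument above.)
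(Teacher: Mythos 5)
The paper does not prove this statement at all; it is quoted verbatim as a citation to Bulatov's conservative-CSP dichotomy theorem, so there is no in-paper argument to compare yours against. What you have written is a faithful outline of the architecture of Bulatov's actual proof: the preliminary observations are correct (conservativity gives idempotence; the restriction of the clone to any two-element subset is an idempotent Boolean clone; no projection satisfies the Siggers identity on a set with at least two elements, so by Post's classification every pair carries a semilattice, majority, or minority operation; the three-colouring of pairs is exactly Bulatov's device). To that extent the approach is the right one and would succeed if completed.

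But as a proof it has a genuine gap, and you have located it yourself: the forbidden-configuration lemmas for the colouring, the soundness of the yellow (semilattice) reduction, and the tractability of the residual red/blue instance are all asserted rather than established, and together they constitute essentially the entire mathematical content of the theorem. In particular, the claim that the red/blue residual is ``solved by local consistency together with the Maltsev algorithm'' because hybrid subdirect products are ``essentially rectangular'' is precisely the delicate point; Bulatov's argument here requires a careful structural analysis of how majority-type and affine-type pairs interact inside subalgebras (and the published proofs of this step have themselves been revised more than once). So the proposal should be read as a correct plan whose load-bearing lemmas remain unproved, not as a proof. If you wanted a self-contained argument within the scope of this paper, none is available short of reproducing Bulatov's analysis or invoking the later general dichotomy theorem, which, as you note, would be anachronistic for the result being cited.
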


In the rest of this section, we will focus on CSP's
satisfying a condition called \emph{bounded width}.
Most known tractable CSP's can be grouped roughly into two distinct families: 
bounded width and few subpowers. 
Few subpowers problems~\cite{Chen05-expressive,IMMVW10-tractabilityfewsubpowers}
generalize
linear equations and are solvable by an algorithm with many properties in common with Gaussian elimination. 
In particular, structures admitting a Maltsev or near-unanimity polymorphism have this property. 
However, in general,
 the algorithms involved require explicit knowledge of the polymorphisms
that witness the condition of few subpowers. 
In contrast, recent results on CSPs of bounded width show that they are actually solvable by an algorithm that is uniform in the sense that 
it needs no such explicit knowledge of the polymorphisms.
This form of uniformity has important consequences for the metaproblem. We now discuss this in more detail.

In order to present the required algorithm, 
it will convenient to view CSPs in a slightly different way; the fact  that both approaches are equivalent is well-known and easy to verify.
We essentially follow \cite{Barto14-collapse}.
Let us introduce some notation.  
If $f$ is a function with domain $D$ and $W \subseteq D$, let $f|_W$ denote the restriction of $f$ to $W$. Similarly, if $C \subseteq H^D$ is a family of functions with domain $D$, let $C|_W = \{f|_W: f \in C\}$. \\

An {\em instance } of the CSP is a triple 
$\cI = (V,H,\cC)$ where 
\begin{itemize}
\item V is a non-empty, finite set of {\em variables},
\item $H$ is a nonempty finite set of {\em values},
\item $\bC$ is a finite nonempty set of {\em constraints}, where each constraint is a subset $C$ of $H^W$;
 $W$ is a subset of $V$ called the {\em scope} of the constraint, and $|W|$ is called the {\em arity} of the constraint.
\end{itemize}

A {\em solution} of the instance is a map $f:V \rightarrow H$ such that, for every constraint $C$ with scope $W$, we have $f|_W \in C$. 

If $W = \{x_1,\dots,x_k\}$ we can associate naturally a $k$-ary relation $\theta$ to each subset $C$ of $H^W$  by setting $\theta = \{f(x_1),\dots,f(x_k)): f \in C\}$ (this depends of course on the ordering of $W$ chosen.) In this way, one can restrict the nature of the constraints involved in instances by stipulating that their associated relations belong to some fixed set; it follows that we can view the problem $\CSP(\bbH)$ as a set of instances of the form just described.

Let $1 \leq k \leq l$. Consider the following polynomial-time algorithm that transforms an instance into a so-called {\em $(k,l)$-minimal instance}:

\noindent{\bf The $(k,l)$-minimality algorithm.}
\begin{itemize}
\item For each $l$-element set $W \subseteq V$, add a ``dummy'' constraint $H^W$ (this is to ensure every $l$-element set of variables is contained in the scope of some constraint);

\item Repeat the following process until it stabilises: for every subset $W \subseteq V$ of size at most $k$, and every pair of constraints $C_1$ and $C_2$ whose scope contains $W$, remove from $C_1$ and $C_2$ any function such that $f|_W \not\in C_1|_W \cap C_2|_W$.
\end{itemize}
It is easy to see that the instance obtained is equivalent to the original, in the sense that they have the same solutions. In particular, if the output instance has an empty constraint then the original instance had no solution. On the other hand, if one does not obtain an empty constraint, there is no guarantee the original instance has a solution. 

\begin{definition} The problem $\CSP(\bbH)$ has 
{\em relational width $(k,l)$} 
if the $(k,l)$-minimality algorithm correctly decides it,
that is, if the $(k,l)$-minimality algorithm
detects an empty constraint whenever the input is a 
\emph{no} instance.
We say the problem $\CSP(\bbH)$ 
has {\em bounded width} 
if it has relational width $(k,l)$ for some $1 \leq k \leq l$. 
\end{definition}

The cores $\bbH$ whose CSP have bounded width were characterized by Barto and Kozik~\cite{BartoKozik14-localconsistency};
the following description is known.

\begin{definition}
\label{def:bw-operations}
Let us say that a pair of operations $v,w$ are
\emph{BW operations} if $v$ is $3$-ary, $w$ is $4$-ary,
and they satisfy the following identities:
\begin{itemize}

\item $v(y,x,x) \approx w(y,x,x,x)$
\item $v(y,x,x) \approx v(x,y,x) \approx v(x,x,y)$
\item $w(y,x,x,x) \approx w(x,y,x,x) \approx w(x,x,y,x) \approx w(x,x,x,y)$

\end{itemize}
\end{definition}

\begin{theorem} \label{thm:bw-cores}
\cite{BartoKozik14-localconsistency,KozikKrokhinValerioteWillard15-characterizations}
Let $\bbH$ be a core.  The problem $\csp(\bbH)$ has bounded width
if and only if $\bbH$ 
has idempotent polymorphisms $v$ and $w$ that are
BW operations.
\end{theorem}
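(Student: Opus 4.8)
The plan is to reduce to the idempotent setting and then prove the two directions separately, the forward (algorithmic soundness) direction carrying essentially all the weight. First I would pass to the idempotent case: since $\bbH$ is a core, by the observations in Section~\ref{section-known} one may replace $\bbH$ by its expansion $\widehat{\bbH}$ by all constants, which leaves $\csp(\bbH)$ unchanged up to first-order reductions, does not affect the behaviour of the $(k,l)$-minimality algorithm (the new unary relations merely pin variables to values), and has as polymorphisms exactly the idempotent polymorphisms of $\bbH$. So one may assume every polymorphism in sight is idempotent; write $\bbA$ for the algebra of polymorphisms of $\bbH$.

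For ``BW operations $\Rightarrow$ bounded width'', suppose $\bbA$ has idempotent BW operations $v,w$. The first step is purely algebraic: a $3$-ary and a $4$-ary weak near-unanimity operation linked by $v(y,x,x)\approx w(y,x,x,x)$ force the variety $\cV(\bbA)$ to be congruence meet-semidistributive, equivalently to omit the unary and affine tame-congruence types (the Kozik--Krokhin--Valeriote--Willard characterization of $\mathrm{SD}(\wedge)$). The real content is then to show that relational width $(2,3)$ already suffices: every $(2,3)$-minimal instance of $\csp(\bbH)$ with no empty constraint has a solution. I would do this via the Barto--Kozik theory of \emph{absorbing subalgebras} and \emph{Prague instances}: a nonempty $(2,3)$-minimal instance over $\bbA$ is a Prague instance, and over a meet-semidistributive algebra any subdirect subalgebra of a product is linked to a proper absorbing subalgebra unless it already splits as a subdirect product of smaller factors; iterating this dichotomy one propagates a chosen value through the connected constraint structure, at each step restricting one constraint to a minimal absorbing subalgebra and re-running $(2,3)$-consistency while checking that nonemptiness and absorption survive, and, since the total domain size strictly decreases, the procedure terminates with a solution.

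For the converse ``bounded width $\Rightarrow$ BW operations'' I would argue contrapositively. If $\bbA$ has no idempotent BW operations then, by the same algebraic equivalence, $\cV(\bbA)$ is not meet-semidistributive, hence by tame congruence theory admits type $\mathbf 1$ or type $\mathbf 2$. From this one extracts --- inside an $\mathsf{HS}$-power of $\bbA$, equivalently via a primitive positive interpretation in $\bbH$ (using that $\bbH$, a core expanded by constants, has only idempotent polymorphisms) --- a template whose CSP is essentially that of solving systems of linear equations over $\bbZ_p$ for some prime $p$, or a closely related affine template. Such CSPs are known not to have bounded width (Feder--Vardi: linear equations mod $p$ are not expressible in Datalog, equivalently the existential $k$-pebble game fails to decide non-homomorphism for every fixed $k$), and since bounded width propagates downward under primitive positive interpretations and under the core/constants reduction, $\csp(\bbH)$ cannot have bounded width.

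The hard part will be ``BW operations $\Rightarrow$ bounded width'': proving that $(2,3)$-minimality is \emph{sound} over a meet-semidistributive algebra is the technical heart of the Barto--Kozik theorem and requires the full apparatus of absorbing subalgebras, Prague strategies, and the associated loop/cyclic-term lemmas, none of it routine. The algebraic equivalence underlying the reduction and the hardness direction, though nontrivial, can be assembled from standard tame congruence theory together with the known Datalog-inexpressibility of affine CSPs.
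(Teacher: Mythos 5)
This theorem is not proved in the paper at all: it is imported as a black box from \cite{BartoKozik14-localconsistency} and \cite{KozikKrokhinValerioteWillard15-characterizations}, so there is no ``paper proof'' to match against. Your outline is a faithful reconstruction of how those two references combine: the identities of Definition~\ref{def:bw-operations} are exactly the KKVW linked $3$-ary/$4$-ary weak near-unanimity characterization of congruence meet-semidistributivity; the forward direction is the Barto--Kozik theorem that $\mathrm{SD}(\wedge)$ implies relational width $(2,3)$, proved via absorption and Prague instances; and the converse is the Larose--Z\'adori necessity argument via tame congruence types $\mathbf 1$/$\mathbf 2$ and the Feder--Vardi Datalog-inexpressibility of affine systems. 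Two caveats. First, what you have written is a plan, not a proof: the entire weight sits on the soundness of $(2,3)$-minimality over $\mathrm{SD}(\wedge)$ algebras, and your one-sentence gloss of the absorption dichotomy (``any subdirect subalgebra of a product is linked to a proper absorbing subalgebra unless it splits'') is not a correct statement of the Absorption Theorem and cannot be iterated as casually as described --- this is precisely the part that occupies most of \cite{BartoKozik14-localconsistency}. Second, your reduction to $\widehat{\bbH}$ is legitimate but should not lean on Proposition~\ref{prop:bw-preserved-by-hom-equivalence} ($\bbH$ and $\widehat{\bbH}$ are not even similar, let alone homomorphically equivalent); the equivalence of their bounded-width status needs the pp-interpretability/Datalog reductions for cores with constants, which is indeed what Section~\ref{section-known} alludes to. With those points understood, your decomposition is the standard and correct one, and in the context of this paper the appropriate move is exactly what the authors do: cite the result rather than reprove it.
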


It turns out that problems of bounded width are precisely those solvable by a Datalog program; and the problems solvable by a monadic Datalog program are precisely those of relational width $(1,l)$ for some $l$, i.e. problems of {\em width 1}. 
Let us here observe the following fact.

\begin{prop}
\label{prop:bw-preserved-by-hom-equivalence}
Let $\bbH$ and $\bbH'$ be homomorphically equivalent structures.
The problem $\csp(\bbH)$ has bounded width if and only if
the problem $\csp(\bbH')$ does.  
In particular, for any structure $\bbH$,
let $\bbC$ denote its core;
the problem $\csp(\bbH)$ has bounded width if and only if
the problem $\csp(\bbC)$ does.
\end{prop}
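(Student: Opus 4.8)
The plan is to split the proposition into its two assertions and deduce the second from the first. First I would note that the core $\bbC$ of $\bbH$ is, by definition, a retract of $\bbH$: there are homomorphisms $r \colon \bbH \to \bbC$ and $e \colon \bbC \to \bbH$ with $r \circ e$ the identity on $C$. These maps in particular witness that $\bbH$ and $\bbC$ are homomorphically equivalent, so the ``in particular'' clause is exactly the first statement applied to the pair $(\bbH,\bbC)$. Thus it suffices to prove the first statement.

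For that, the key observation I would establish is that homomorphic equivalence makes $\csp(\bbH)$ and $\csp(\bbH')$ the \emph{same} decision problem. Fixing homomorphisms $p \colon \bbH \to \bbH'$ and $q \colon \bbH' \to \bbH$, one composes a hypothetical homomorphism $\bbG \to \bbH$ with $p$ to get $\bbG \in \csp(\bbH) \Rightarrow \bbG \in \csp(\bbH')$, and composes with $q$ for the converse; since $\bbH$ and $\bbH'$ are similar, $\csp(\bbH)$ and $\csp(\bbH')$ are classes of structures over the same type and they coincide, hence they have the same yes-instances and the same no-instances. (This is the same reasoning already noted in the excerpt for the special case where $\bbH'$ is the core of $\bbH$.)

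Then I would finish by invoking the characterization recalled just above the proposition: a problem $\csp(\bbK)$ has bounded width if and only if it is solvable by a Datalog program, equivalently, if and only if the class of structures admitting \emph{no} homomorphism to $\bbK$ is definable in Datalog. This is manifestly a property of the problem $\csp(\bbK)$, i.e.\ of the class (or of its complement), and nothing else. Since $\csp(\bbH)$ and $\csp(\bbH')$ coincide as classes, their complements coincide, so one is Datalog-definable exactly when the other is; hence $\csp(\bbH)$ has bounded width iff $\csp(\bbH')$ does, and taking $\bbH' = \bbC$ yields the second assertion.

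The hard part is essentially conceptual rather than technical: one must recognize that ``bounded width'' is an invariant of the decision problem, not of the structure or of a particular syntactic presentation of instances, which is why routing through the Datalog characterization is the clean move. A more hands-on alternative would translate an instance over $\bbH'$ into an equivalent instance over $\bbH$ (or over $\bbC$) by applying $p,q$ (resp.\ $e,r$) coordinatewise to constraints, and then track how the $(k,l)$-minimality algorithm behaves under this translation and under the retraction $e\circ r$; this works but is fiddly, since the translated constraint relations need not literally be among the relations of $\bbH$, and it would force one to control the parameters $(k,l)$. Since we only need to conclude \emph{bounded} width, the Datalog route avoids all of this bookkeeping.
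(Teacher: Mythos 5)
Your proposal is correct and follows essentially the same route as the paper: the paper also proves this by combining the fact that $\csp(\bbH) = \csp(\bbH')$ for homomorphically equivalent structures with the Datalog characterization of bounded width, observing that a Datalog program only processes the input structure $\bbG$. Your added remark about why the direct $(k,l)$-minimality bookkeeping is the wrong move is a fair articulation of exactly the issue the Datalog detour is designed to sidestep.
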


One way to prove this fact is to use the just-mentioned characterization
of bounded width via Datalog programs; a Datalog program only
processes the input structure $\bbG$ (and not the right-hand side structure), so the fact follows from this characterization
and the fact that $\csp(\bbH) = \csp(\bbH')$
(when $\bbH, \bbH'$ are homomorphically equivalent).

The following result due to Barto also implies the Datalog hierarchy collapses, in that every problem of bounded width can be solved using the $(2,3)$-minimality algorithm.

\begin{theorem}[\cite{Barto14-collapse}] \label{thm-barto-collapse} For every relational structure $\bbH$, exactly one of the following holds:
\begin{enumerate}
\item  $\CSP(\bbH)$ has width 1;
\item  $\CSP(\bbH)$ has relational width $(2,3)$ but not width 1;
\item  $\CSP(\bbH)$ does not have bounded width.
\end{enumerate} 
\end{theorem}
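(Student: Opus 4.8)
The statement asserts that exactly one of three cases holds, and the three cases are pairwise incompatible almost by inspection: case (2) explicitly excludes case (1), while each of (1) and (2) asserts bounded width and so is incompatible with (3). Hence what I really need to prove is that \emph{at least one} case always applies; equivalently, that every structure $\bbH$ for which $\csp(\bbH)$ has bounded width but not width $1$ in fact has relational width $(2,3)$. Structures with width $1$ land in case (1) no matter what, and structures without bounded width land in case (3), so these cause no trouble.

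The first move is to reduce to cores. By Proposition~\ref{prop:bw-preserved-by-hom-equivalence}, whether $\csp(\bbH)$ has bounded width depends only on the core $\bbC$ of $\bbH$; moreover, since $\csp(\bbH)=\csp(\bbC)$ as sets of instances and the $(k,l)$-minimality algorithm inspects only the input structure, the properties ``width $1$'' and ``relational width $(2,3)$'' also agree for $\bbH$ and $\bbC$. Hence I may assume $\bbH$ is itself a core. Now I invoke Theorem~\ref{thm:bw-cores}: either $\bbH$ admits no idempotent polymorphisms $v,w$ that are BW operations, in which case $\csp(\bbH)$ does not have bounded width and we are in case (3); or such $v,w$ do exist, and it remains to show that then $\csp(\bbH)$ has width $1$ or relational width $(2,3)$.

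The crux is therefore the implication: if $\bbH$ is a core with idempotent BW polymorphisms $v,w$ and $\csp(\bbH)$ does not have width $1$, then the $(2,3)$-minimality algorithm decides $\csp(\bbH)$. Soundness is immediate, since an empty constraint certifies a \emph{no} instance, so the work is completeness: I must show that a $(2,3)$-minimal instance $\cI$ all of whose constraints are non-empty has a solution. For this I would run the Barto--Kozik local-consistency machinery. View each constraint relation as a subalgebra of a power of the algebra $(H;v,w)$; by Definition~\ref{def:bw-operations} this algebra generates a congruence meet-semidistributive variety, so no subquotient of any power of $\bbH$ carries an affine (``linear'') structure. Using the minimality one checks that all binary projections of $\cI$ are subdirect; then one carries out a sequence of reductions, at each stage either passing to a proper absorbing subuniverse of some constraint (an \emph{absorption step}) or, when no such absorption is present, using meet-semidistributivity to locate a congruence on which the instance is forced to behave trivially and quotienting by it (a \emph{center step}). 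Each reduction preserves non-triviality and $(2,3)$-minimality and strictly decreases a size parameter, so the process terminates at an instance from which a solution can be read off; pulling this solution back through the reductions yields a solution of $\cI$.

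The hard part is exactly this last step, and more precisely the parameters: the bounded-width theorem of Barto and Kozik in its original form yields relational width $(k,l)$ only for $k,l$ depending on $|H|$, and collapsing these to the absolute constants $(2,3)$ is the content of~\cite{Barto14-collapse}. What has to be checked is that the absorption and center steps above can be carried out ``two and three variables at a time'' — that is, that the only information about $\cI$ they ever consult is its projections onto pairs and triples of variables — and that width $1$ is genuinely the sole exception (this uses the classical characterization of width $1$ by totally symmetric polymorphisms of all arities). I would follow that delicate inductive argument; the reductions to cores and to the BW characterization sketched above are just the routine scaffolding surrounding it.
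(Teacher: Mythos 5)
This statement is not proved in the paper at all: it is imported verbatim from Barto's paper \cite{Barto14-collapse} (``The collapse of the bounded width hierarchy'') and used as a black box, so there is no internal proof to compare yours against. Judged as a standalone proof attempt, your write-up correctly handles the routine parts --- the pairwise disjointness of the three cases, the reduction to cores via Proposition~\ref{prop:bw-preserved-by-hom-equivalence} together with the observation that the $(k,l)$-minimality algorithm only inspects the input structure, and the appeal to Theorem~\ref{thm:bw-cores} to split off case (3). You have also correctly identified that the entire content of the theorem is the single implication ``core with idempotent BW polymorphisms and $(2,3)$-minimal instance with no empty constraint $\Rightarrow$ solution exists.''

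But that implication is precisely where your argument stops being a proof. The paragraph describing absorption steps and center steps is a faithful summary of the Barto--Kozik proof \emph{architecture}, yet every load-bearing claim in it is asserted rather than established: that the BW identities force congruence meet-semidistributivity and hence exclude affine behaviour, that the reductions preserve non-triviality and $(2,3)$-minimality, that they terminate, and --- most critically --- that they can be driven by the projections onto pairs and triples alone rather than by the $(k,l)$-consistency data for $k,l$ depending on $|H|$ that the original Barto--Kozik theorem requires. You acknowledge this yourself (``I would follow that delicate inductive argument''), but following it is the theorem; without it the proposal is an accurate table of contents for Barto's paper, not a proof. For the purposes of the present article the right move is simply to cite \cite{Barto14-collapse}, as the authors do; if you intend to actually prove the collapse, the $(2,3)$ bound on the absorption/center machinery is the gap you must fill.
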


 The next result is a slight generalisation of Corollary 8.4 of 
 \cite{Barto14-collapse}; we include its proof here as it is quite simple, clever and instructive.
 Recall from Section \ref{section-known} that the structure $\widehat{\bbH}$ 
is obtained by expanding the structure $\bbH$ with all one-element
unary relations.

\begin{lemma} \label{bwalgo} (Barto, Kozik, Maroti, unpublished) Let  $\{\cE_1,\dots,\cE_r\}$ be a strong linear  Maltsev condition and let $\cC$ be a class of structures such that, if $\bbH \in \cC$ is a structure satisfying  $\{\cE_1,\dots,\cE_r\}$ then $\CSP(\widehat{\bbH})$ has bounded width. 
There exists a polynomial-time algorithm that,
given as input a structure in $\cC$, decides if the structure
satisfies $\{\cE_1,\dots,\cE_r\}$.
\end{lemma}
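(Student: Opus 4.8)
The plan is to combine the syntactic nature of a strong linear Maltsev condition with the uniform (polymorphism-free) solvability of bounded-width CSPs given by Theorem~\ref{thm-barto-collapse}. The key observation is that satisfaction of a strong linear Maltsev condition $\{\cE_1,\dots,\cE_r\}$ by a structure $\bbH$ can be encoded as a single ``indicator'' CSP instance whose target is $\widehat{\bbH}$. Concretely, for each operation symbol $f$ of arity $k$ appearing in the condition, introduce a variable $x_{f,\bar a}$ for every tuple $\bar a \in H^k$ (intuitively, $x_{f,\bar a}$ stands for the value $f(\bar a)$). Impose constraints of two kinds. First, for each basic relation $\theta$ of $\bbH$ of arity $m$ and each way of feeding $m$ columns of argument tuples into $f$ so that each row lies in $\theta$, require the corresponding $m$-tuple of indicator variables to lie in $\theta$; this exactly says $f$ is a polymorphism. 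Second, for each identity $\cE_i$ of the form $f(x_1,\dots,x_k)\approx g(y_1,\dots,y_n)$ and each assignment of elements of $H$ to the shared variables, add the equality constraint $x_{f,\bar a}=x_{g,\bar b}$ for the induced tuples $\bar a,\bar b$; height-$1$ and projection-style identities are handled the same way (a projection identity $f(x_1,\dots,x_k)\approx x_i$ becomes $x_{f,\bar a}=a_i$). This instance $\cI_{\bbH}$ has a solution if and only if $\bbH$ satisfies $\{\cE_1,\dots,\cE_r\}$, and it is an instance of $\csp(\widehat{\bbH})$ (using the one-element unary relations for the projection-type constraints, and noting that by the discussion in Section~\ref{section-known} we may assume the equality relation is available up to the reductions in play). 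Its size is polynomial in $|H|$ for fixed arities, hence polynomial in the size of $\bbH$.

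Now suppose $\bbH \in \cC$. There are two cases. If $\bbH$ satisfies $\{\cE_1,\dots,\cE_r\}$, then by the hypothesis on $\cC$ the problem $\csp(\widehat{\bbH})$ has bounded width, so by Theorem~\ref{thm-barto-collapse} the $(2,3)$-minimality algorithm decides $\csp(\widehat{\bbH})$ correctly; in particular it correctly detects that $\cI_{\bbH}$ is a \emph{yes} instance (it will not produce an empty constraint). If $\bbH$ does \emph{not} satisfy $\{\cE_1,\dots,\cE_r\}$, then $\cI_{\bbH}$ is a \emph{no} instance, and regardless of whether $\csp(\widehat{\bbH})$ has bounded width, running the $(2,3)$-minimality algorithm on the \emph{no} instance $\cI_{\bbH}$ must eventually produce an empty constraint --- because if it did not, soundness of $(2,3)$-minimality (every run is correct on \emph{no} instances would fail only if... ) --- more precisely: if the $(2,3)$-minimality algorithm does \emph{not} derive an empty constraint on a \emph{no} instance, that can only happen when $\csp$ of the target does not have bounded width, which would put us in case~(3) of Theorem~\ref{thm-barto-collapse}; but we are not claiming correctness there. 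This is exactly the subtlety, so let me state the algorithm and its analysis carefully: \emph{the algorithm is} ``build $\cI_{\bbH}$, run the $(2,3)$-minimality algorithm, and answer \emph{yes} iff no empty constraint is produced.''

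\textbf{Correctness.} If $\bbH$ satisfies the condition: $\cI_{\bbH}$ is a \emph{yes} instance, and since $\csp(\widehat{\bbH})$ then has bounded width and hence relational width $(2,3)$ by Theorem~\ref{thm-barto-collapse}, the $(2,3)$-minimality algorithm does not report an empty constraint (a \emph{yes} instance never yields an empty constraint under a sound propagation step, and relational width $(2,3)$ is only needed for the converse direction); so the algorithm answers \emph{yes}. If $\bbH$ does not satisfy the condition: $\cI_{\bbH}$ is a \emph{no} instance. We must argue the $(2,3)$-minimality algorithm reports an empty constraint. This is where the uniformity is essential and where the main obstacle lies. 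The fix is to not use the naked $(2,3)$-minimality algorithm but to use it \emph{as a semi-decision procedure}: by Lemma~\ref{lemma-BKS}-type reasoning, or rather directly from Barto's collapse together with the hypothesis, one argues that the only obstruction to $(2,3)$-minimality solving $\cI_{\bbH}$ would be $\csp(\widehat{\bbH})$ not having bounded width; but in that case $\bbH$ fails the condition \emph{and} we want to output \emph{no}, which is consistent. So combining: the algorithm outputs \emph{yes} $\iff$ $(2,3)$-minimality derives no empty constraint on $\cI_{\bbH}$, and this coincides with ``$\bbH$ satisfies $\{\cE_1,\dots,\cE_r\}$'' provided we verify the remaining direction --- that when $\csp(\widehat{\bbH})$ does \emph{not} have bounded width but $\bbH$ nonetheless happens to satisfy the condition, we do not err; but the hypothesis on $\cC$ rules this combination out entirely, since $\bbH\in\cC$ satisfying the condition forces bounded width. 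Hence the algorithm is correct on all of $\cC$, and it runs in polynomial time since constructing $\cI_{\bbH}$ and running $(2,3)$-minimality both take polynomial time. The main obstacle, as indicated, is pinning down exactly why a failure of $(2,3)$-minimality to refute $\cI_{\bbH}$ can be safely reported as \emph{yes}: it rests on the dichotomy in Theorem~\ref{thm-barto-collapse} together with the contrapositive of the defining property of $\cC$.
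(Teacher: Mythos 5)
Your indicator-instance construction is essentially the paper's (the paper takes the disjoint union of the $\bbH^{r_i}$ as the variable set and identifies tuples forced equal by the identities; your variable-per-$(f,\bar a)$ encoding is the same thing), and your analysis of the \emph{yes} case is fine. But the resolution you offer for the \emph{no} case is wrong, and this is a genuine gap. The dangerous combination is: $\bbH\in\cC$ does \emph{not} satisfy $\{\cE_1,\dots,\cE_r\}$ and $\csp(\widehat{\bbH})$ does \emph{not} have bounded width. The hypothesis on $\cC$ says nothing about this case --- it only excludes ``$\bbH$ satisfies the condition and $\widehat{\bbH}$ lacks bounded width.'' In the dangerous case $\cI_{\bbH}$ is a \emph{no} instance, but since $\csp(\widehat{\bbH})$ need not have relational width $(2,3)$, the $(2,3)$-minimality algorithm may terminate without producing an empty constraint, and your algorithm (``answer \emph{yes} iff no empty constraint is produced'') then returns a false positive. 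Your sentence claiming the hypothesis ``rules this combination out entirely'' conflates the contrapositive you need with the one you have.

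The paper closes exactly this gap by not stopping at the consistency check: after the initial run of $(2,3)$-minimality succeeds, it performs a search-to-decision self-reduction, pinning the variables one at a time to values of $H$ (this is why the target must be $\widehat{\bbH}$, whose singleton unary relations make the pinned instance again an instance of $\csp(\widehat{\bbH})$) and re-running consistency after each pin, rejecting if no value survives. If the process terminates without rejecting, every variable is pinned to a single value and no constraint is empty, so the pinned assignment is an explicit, verifiable sequence of operations satisfying the condition; in particular a false positive is impossible even when $\csp(\widehat{\bbH})$ lacks bounded width, because acceptance is now certified by a concrete witness rather than by the mere failure of $(2,3)$-minimality to refute. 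You need this extra phase (or some other polynomial-time extraction and verification of an explicit witness); the bare $(2,3)$-minimality test does not suffice.
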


\begin{proof} Given a structure $\bbH \in \cC$, we set up an instance of $\CSP(\widehat{\bbH})$ to encode the existence of the required polymorphisms as follows: the universe of our instance  is the disjoint union of $\bbH^{r_1},\dots,\bbH^{r_s}$ where the $r_i$ are the arities of the different operation symbols in the Maltsev condition. We add equality constraints for the required identities, i.e. if $f(x_1,\dots,x_k) \approx g(y_1,\dots,y_n)$ is an identity of our condition, we identify every pair of tuples satisfying it in the copies corresponding to $f$ and $g$. Similarly, we add the necessary unary constraints corresponding to identities of the form $f(x_1,\dots,x_k) \approx y_i$. Now we run the (2,3)-consistency algorithm on this instance. If it answers no, then $\bbH$ does not satisfy the condition (since by hypothesis if it did the consistency algorithm could not give a false positive). Otherwise, select an element of the instance, and a value of $H$, fixing the value of the solution on the element to this value. We run the (2,3)-consistency algorithm on this new instance. Looping on all possible values of $H$, we reject if there is no satisfying value. Otherwise, we keep this value, and repeat the procedure with all other elements of the instance. 
If this process terminates without rejecting, we have a fully-defined sequence of operations satisfying the condition. 
 \end{proof}
 
We observe the following corollaries of the preceding lemma, combined with the following result.

Let $t$ be a $k$-ary operation on the set $H$ and let $A$ be a $k \times k$ matrix with entries in $H$. We write $t[A]$ to denote the $k \times 1$ matrix whose entry on the $i$-th row is the value of $f$ applied to row $i$ of $A$. 
 \begin{lemma}[\cite{BartoKozikStanovsky15-maltsev-lack-solvability}, Theorem 1.3] \label{lemma-BKS}
 Let $t$ be a $k$-ary idempotent polymorphism of $\bbH$ satisfying: $t[A] = t[B]$ where $A$ and $B$ are $k \times k$ matrices with entries in $\{x,y\}$ such that $a_{ii} = x$ and $b_{ii} = y$ for all $i$ and $a_{ij}=b_{ij}$ for all $i \neq j$. Then $\CSP(\widehat{\bbH})$ has bounded width. \end{lemma}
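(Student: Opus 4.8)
The plan is to avoid the $(k,l)$-minimality algorithm entirely and route through the Barto--Kozik algebraic characterization of bounded width. Set $\mathbf{A} = \langle H; \mathrm{Pol}(\widehat{\bbH})\rangle$, the algebra whose operations are precisely the idempotent polymorphisms of $\bbH$; since $t$ is idempotent it is a term operation of $\mathbf{A}$. By the characterization underlying Theorem~\ref{thm:bw-cores} (see \cite{BartoKozik14-localconsistency,KozikKrokhinValerioteWillard15-characterizations}), $\CSP(\widehat{\bbH})$ has bounded width if and only if the variety $\mathcal{V} = \mathrm{HSP}(\mathbf{A})$ omits the tame congruence types $\mathbf{1}$ (unary) and $\mathbf{2}$ (affine); and unwinding the local structure theory, admitting type $\mathbf{2}$ is witnessed inside $\mathcal{V}$ by a nontrivial affine algebra, while admitting type $\mathbf{1}$ is witnessed by a nontrivial two-element algebra all of whose term operations are projections. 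So it suffices to use the hypothesis on $t$ to exclude both witnesses.

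The key observation is that the hypothesis on $t$ is a conjunction of height-$\le 1$ linear identities in the two variables $x$ and $y$: for each row index $i$, the identity $t(a_{i1},\dots,a_{ik}) \approx t(b_{i1},\dots,b_{ik})$, where the entries $a_{ij}, b_{ij} \in \{x,y\}$ are the prescribed pattern, with $a_{ii}=x$, $b_{ii}=y$, and $a_{ij}=b_{ij}$ for $j\ne i$. These identities hold in $\mathbf{A}$, and since identities are inherited under $\mathbf{H}$, $\mathbf{S}$, and $\mathbf{P}$, they continue to hold, with the same interpretation of $t$, in every algebra of $\mathcal{V}$.

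Now suppose $\mathcal{V}$ contains a nontrivial affine algebra $\mathbf{M}$. Since $\mathbf{M}$ is idempotent, $t$ acts on $\mathbf{M}$ as $t(z_1,\dots,z_k)=\sum_i c_i z_i$ with $\sum_i c_i = 1$ in the coefficient ring. Substituting $x := 1$, $y := 0$ into the $i$-th identity and using $a_{ij}=b_{ij}$ for $j\ne i$ together with $a_{ii}=1$, $b_{ii}=0$, the identity collapses to $c_i = 0$; ranging over all $i$ this contradicts $\sum_i c_i = 1$. Suppose instead $\mathcal{V}$ contains a nontrivial two-element algebra whose term operations are all projections; then $t$ acts there as $t(z_1,\dots,z_k)=z_p$ for some fixed $p$, and substituting two distinct values for $x$ and $y$ turns the $p$-th identity into $x\approx y$, again a contradiction. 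Hence $\mathcal{V}$ omits $\{\mathbf{1},\mathbf{2}\}$ and $\CSP(\widehat{\bbH})$ has bounded width.

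The step I expect to be the main obstacle --- and where the content of \cite{BartoKozikStanovsky15-maltsev-lack-solvability} is concentrated --- is the first one: justifying that ``admitting type $\mathbf{1}$'' and ``admitting type $\mathbf{2}$'' are each witnessed inside $\mathcal{V}$ by a concrete small algebra (a two-element ``set'' and a nontrivial affine algebra, respectively) to which the inherited $t$-identities genuinely apply. This is the bridge between the tame-congruence-theoretic statement of the Barto--Kozik theorem and the purely equational computation above, and making it precise requires care with the localization machinery (minimal sets, traces, and the polynomial-versus-term distinction). A more hands-on alternative would be to build BW operations $v,w$ (or a weak near-unanimity sequence of all arities) directly by composing $t$ with projections and then invoke Theorem~\ref{thm:bw-cores}, but this seems to need essentially the same structural input to get started.
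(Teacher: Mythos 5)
First, a point of comparison that isn't really a comparison: the paper does not prove this lemma at all --- it is imported as a black box, cited as Theorem~1.3 of the Barto--Kozik--Stanovsk\'y paper. So there is no in-paper argument to measure yours against, and what you have written is best judged as a reconstruction of (one direction of) the cited result.

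Your route --- pass to the idempotent algebra $\mathbf{A}$ of polymorphisms of $\widehat{\bbH}$, use the Barto--Kozik equivalence of bounded width with omitting types $\mathbf{1}$ and $\mathbf{2}$, and kill both types by evaluating the row identities of $t[A]=t[B]$ on the putative witnesses --- is the standard way to see this implication, and both of your computations are correct: a projection turns the $p$-th row identity into $x\approx y$, and an idempotent affine operation $\sum_i c_i z_i$ with $\sum_i c_i=\mathrm{id}$ is forced to have every $c_i$ annihilate the module (small quibble: a module has no element $1$; substitute arbitrary $x$ and $y=0$ to conclude $c_i x=0$ for all $x$). The genuine gap is exactly the one you flag: the passage from ``the variety admits type $\mathbf{1}$ or $\mathbf{2}$'' to ``$\mathrm{HS}(\mathbf{A})$ contains a nontrivial two-element projection algebra, or a nontrivial idempotent algebra polynomially equivalent to a module, on which the inherited term identities can be evaluated'' is not a formal unwinding of tame congruence theory but a substantive theorem about finite idempotent algebras. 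It is precisely the result this paper itself invokes in the proof of Lemma~\ref{types}, attributed there to the combination of \cite{BartoKozik09-boundedwidth}, \cite{Szendrei92-survey} and \cite{Valeriote09-subalgebra-intersection}: if $\CSP(\widehat{\bbH})$ is not of bounded width, then some quotient of a subalgebra of the idempotent polymorphism algebra is a nontrivial algebra all of whose operations preserve the relation $\{(a,b,c):a+b=c\}$ of an abelian group. Citing it in that form also lets your two cases merge into one computation: such operations are exactly the maps $\sum_i f_i(z_i)$ with each $f_i$ an endomorphism and $\sum_i f_i=\mathrm{id}$, and the $i$-th row identity yields $f_i(x-y)=0$ for all $x,y$, hence $f_i=0$ for every $i$, contradicting $\sum_i f_i=\mathrm{id}$ on a nontrivial group. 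With that theorem cited precisely your proof closes; without it, the hole sits exactly where you predicted, and it is filled by citation rather than by the equational argument.
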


 \begin{corollary} 
 There exists a polynomial-time algorithm that, given a core structure $\bbH$,
 decides if $\csp(\bbH)$ has bounded width.
 \end{corollary}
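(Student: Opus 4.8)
The plan is to obtain the algorithm as a direct instance of Lemma~\ref{bwalgo}, using the Barto--Kozik characterization of bounded width for cores (Theorem~\ref{thm:bw-cores}) both to set up the Maltsev condition and to verify the hypothesis of that lemma.

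First I would observe that ``$\bbH$ has idempotent BW operations $v,w$'' is a strong linear Maltsev condition. The identities of Definition~\ref{def:bw-operations} are all height-$1$ linear identities, and idempotency of $v$ and of $w$ is expressed by the linear identities $v(x,x,x) \approx x$ and $w(x,x,x,x) \approx x$ (which are permitted, being of the form $f(\dots) \approx y_i$). Write $\{\cE_1,\dots,\cE_r\}$ for this finite set of identities. By Theorem~\ref{thm:bw-cores}, a core $\bbH$ satisfies $\{\cE_1,\dots,\cE_r\}$ if and only if $\csp(\bbH)$ has bounded width, so it suffices to decide the former in polynomial time on core inputs.

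Next I would take $\cC$ to be the class of all core structures and check the hypothesis of Lemma~\ref{bwalgo}: namely, if a core $\bbH$ satisfies $\{\cE_1,\dots,\cE_r\}$ then $\csp(\widehat{\bbH})$ has bounded width. Here one uses the standard facts recalled in Section~\ref{section-known} that $\widehat{\bbH}$ is itself a core and that the polymorphisms of $\widehat{\bbH}$ are exactly the idempotent polymorphisms of $\bbH$; in particular the idempotent BW operations witnessing $\{\cE_1,\dots,\cE_r\}$ for $\bbH$ are polymorphisms of $\widehat{\bbH}$, so Theorem~\ref{thm:bw-cores} applied to the core $\widehat{\bbH}$ gives that $\csp(\widehat{\bbH})$ has bounded width. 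With this, Lemma~\ref{bwalgo} hands us the desired polynomial-time algorithm deciding membership in $\{\cE_1,\dots,\cE_r\}$ for core structures, which by the previous paragraph is precisely deciding whether $\csp(\bbH)$ has bounded width.

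I do not expect a genuine obstacle: the content is entirely in assembling ingredients already available. The only points needing (routine) care are confirming that Definition~\ref{def:bw-operations} together with idempotency is a finite set of linear identities, and the observation about cores and idempotent polymorphisms of $\widehat{\bbH}$. One could instead verify the hypothesis of Lemma~\ref{bwalgo} through Lemma~\ref{lemma-BKS}, but routing through Theorem~\ref{thm:bw-cores} is cleaner, since the BW identities are exactly the condition being detected.
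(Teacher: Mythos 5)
Your proposal is correct and follows essentially the same route as the paper's own proof: take $\cM$ to be the strong linear Maltsev condition asserting that $v,w$ are idempotent BW operations, take $\cC$ to be the class of cores, verify the hypothesis of Lemma~\ref{bwalgo} via Theorem~\ref{thm:bw-cores} applied to the core $\widehat{\bbH}$, and conclude with Theorem~\ref{thm:bw-cores} again. No substantive differences.
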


 \begin{proof}
Let $\cM$ be the strong linear Maltsev condition that 
asserts that $v$ and $w$ are idempotent BW operations.
Note that for any core $\bbH$, the structure $\widehat{\bbH}$ is also a core, and that 
$\bbH$ satisfies $\cM$ if and only if 
$\widehat{\bbH}$ satisifes $\cM$.
Let $\cC$ be the class of all cores.
Then, the hypothesis of Lemma~\ref{bwalgo} is satisfied:
for any core $\bbH$ satisfying $\cM$,
it holds that $\widehat{\bbH}$ is also a core that satisfies $\cM$,
and thus that $\csp(\widehat{\bbH})$ has bounded width
by Theorem~\ref{thm:bw-cores}.
The result of Lemma~\ref{bwalgo} thus gives the desired algorithm,
since a core has bounded width if and only if it satisfies
$\cM$, by Theorem~\ref{thm:bw-cores}.
 \end{proof}

 \begin{corollary}
 \label{cor:knu-polytime}
For each $k \geq 3$,
there exists a polynomial-time algorithm that decides
if a given structure has a $k$-near unanimity polymorphism.
 \end{corollary}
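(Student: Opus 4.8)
The plan is to invoke Lemma~\ref{bwalgo}. Fix $k \geq 3$, let $\cC$ be the class of \emph{all} relational structures, and let $\cN_k$ denote the strong linear Maltsev condition, in a single $k$-ary operation symbol $f$, consisting of the $k$ near-unanimity identities $f(x,\dots,x,y,x,\dots,x) \approx x$ (one for each position of the lone $y$). This is a finite set of linear identities, each of the permitted form $f(x_1,\dots,x_k) \approx y_i$, so it is a legitimate strong linear Maltsev condition; and a structure satisfies $\cN_k$ exactly when it admits a $k$-NU polymorphism. Thus, once the hypothesis of Lemma~\ref{bwalgo} is checked for this $\cC$ and $\cN_k$, the lemma immediately produces the desired polynomial-time algorithm.

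So the crux is to show: if a structure $\bbH$ admits a $k$-NU polymorphism $f$, then $\csp(\widehat{\bbH})$ has bounded width. I would deduce this from Lemma~\ref{lemma-BKS}. First, $f$ is idempotent, since substituting $y := x$ into any near-unanimity identity gives $f(x,\dots,x) \approx x$. If $|H| = 1$ then $\csp(\widehat{\bbH})$ is trivial and certainly has bounded width, so assume $|H| \geq 2$ and fix distinct elements $x, y \in H$. Let $A$ be the $k \times k$ matrix over $\{x,y\}$ with $a_{ii} = x$ and $a_{ij} = y$ for $i \neq j$, and let $B$ be the $k \times k$ matrix all of whose entries are $y$. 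Then $a_{ii} = x$, $b_{ii} = y$, and $a_{ij} = b_{ij}$ for all $i \neq j$, as Lemma~\ref{lemma-BKS} requires. For every $i$, row $i$ of $A$ has a single entry equal to $x$ and $k - 1 \geq 2$ entries equal to $y$, so a near-unanimity identity forces $f$ applied to that row to equal $y$; row $i$ of $B$ is constantly $y$, so idempotence gives $f$ applied to that row equal to $y$ as well. Hence $f[A] = f[B]$, and Lemma~\ref{lemma-BKS} (with $t = f$) yields that $\csp(\widehat{\bbH})$ has bounded width. Note that the assumption $k \geq 3$ enters precisely here, to guarantee $k - 1 \geq 2$ so that $y$ is the majority value in each row of $A$.

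Having verified its hypothesis, Lemma~\ref{bwalgo} supplies a polynomial-time algorithm that, given any structure $\bbH$, decides whether $\bbH$ satisfies $\cN_k$, that is, whether $\bbH$ has a $k$-NU polymorphism --- which is the statement to be proved.

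The only step requiring any thought is the middle one: spotting the right matrices $A, B$ so that the near-unanimity identities together with idempotence force $f[A] = f[B]$, thereby triggering Lemma~\ref{lemma-BKS}. The rest is bookkeeping.
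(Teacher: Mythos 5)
Your proposal is correct and follows essentially the same route as the paper: reduce to Lemma~\ref{bwalgo} by showing, via Lemma~\ref{lemma-BKS}, that a $k$-NU polymorphism forces $\csp(\widehat{\bbH})$ to have bounded width. The only (immaterial) difference is the choice of matrices --- the paper takes $A$ all $x$'s and $B$ with $y$'s on the diagonal, whereas you take the complementary pair $A$ with $x$'s on the diagonal and $B$ all $y$'s; both choices satisfy the hypotheses of Lemma~\ref{lemma-BKS} for $k \geq 3$.
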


\begin{proof}
The presence of such a polymorphism is formulable as a 
strong linear Maltsev condition.
If a structure $\bbH$ has a $k$-near unanimity polymorphism,
so does $\widehat{\bbH}$ 
(as such a polymorphism is idempotent by definition); in this case,
it is known that $\csp(\widehat{\bbH})$ has bounded width. This was proved first in ~\cite{JeavonsCohenCooper98-consistency}, and also follows from Lemma~\ref{bwalgo}, 
using the following matrices: $A$ is the matrix with all $x$'s, and $B$ is obtained from $A$ by placing $y$'s on the diagonal.   The result thus follows from Lemma~\ref{bwalgo}.
\end{proof}

 \begin{corollary} 
\label{cor:idempotent-ts-p}
 Let $k \geq 3$. 
 Deciding if a structure admits a  $k$-ary idempotent TS polymorphism is in P. \end{corollary}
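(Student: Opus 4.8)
The plan is to reuse the template of the proof of Corollary~\ref{cor:knu-polytime}: first phrase the property as a strong linear Maltsev condition, then show that a structure possessing such a polymorphism must have a CSP of bounded width (via Lemma~\ref{lemma-BKS}), and finally appeal to Lemma~\ref{bwalgo}.

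First I would observe that, for a fixed $k$, the existence of a $k$-ary idempotent TS polymorphism is a strong linear Maltsev condition $\cM$ in a single operation symbol $v$: idempotence is the linear identity $v(x,\dots,x)\approx x$, and total symmetry of $v$ is equivalent to a finite set of height-$1$ linear identities $v(u)\approx v(u')$, ranging over the pairs of $k$-tuples $u,u'$ over a common variable set that have the same set of entries; there are only finitely many such pairs since $k$ is a constant.

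The crux is the following claim: if a structure $\bbH$ has a $k$-ary idempotent TS polymorphism $t$ and $k\ge 3$, then $\csp(\widehat{\bbH})$ has bounded width. Since $t$ is idempotent, it is also a polymorphism of $\widehat{\bbH}$, so by Lemma~\ref{lemma-BKS} it suffices to produce $k\times k$ matrices $A,B$ with entries in $\{x,y\}$ such that $a_{ii}=x$, $b_{ii}=y$, $a_{ij}=b_{ij}$ for $i\neq j$, and $t[A]=t[B]$. Because $k\ge 3$, each row has at least two off-diagonal positions, so I can choose the common off-diagonal entries of $A$ and $B$ so that the off-diagonal part of every row $i$ contains both an $x$ and a $y$ (for instance, a single $y$ in one off-diagonal slot of row $i$ and $x$ in all the remaining off-diagonal slots). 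Then, for each $i$, the set of entries of row $i$ of $A$ equals the set of entries of row $i$ of $B$, namely $\{x,y\}$; as $t$ is totally symmetric it takes the same value on any two $k$-tuples having the same entry set, whence $t[A]=t[B]$. Thus Lemma~\ref{lemma-BKS} applies and $\csp(\widehat{\bbH})$ has bounded width.

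Finally I would apply Lemma~\ref{bwalgo} with $\cC$ the class of all structures and with the strong linear Maltsev condition $\cM$: the previous paragraph is exactly the hypothesis of that lemma, so it yields a polynomial-time algorithm that, given an arbitrary structure, decides whether it satisfies $\cM$, i.e., whether it admits a $k$-ary idempotent TS polymorphism. The only delicate point is the choice of the matrices $A,B$, and this is the one place where the hypothesis $k\ge 3$ is used: it is precisely what makes room for both $x$ and $y$ off the diagonal of every row, and it is genuinely necessary, since the statement fails for $k=2$.
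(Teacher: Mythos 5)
Your proposal is correct and follows essentially the same route as the paper: the paper's proof likewise writes down, for each row, the pair of $k$-tuples with a single off-diagonal $y$ (differing only on the diagonal), notes that total symmetry forces $t[A]=t[B]$ via Lemma~\ref{lemma-BKS}, and then invokes Lemma~\ref{bwalgo}. Your explicit remarks on why $k\ge 3$ is needed and on the finiteness of the identity set are accurate refinements of the same argument.
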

 
 \begin{proof}  Let $f$ be a $k$-ary idempotent TS operation, $k \geq 3$. Then it satisfies the following identities:
 \begin{align*}
 f(x,y,x,\dots,x) &\approx f(y,y,x,\dots,x)\\
  f(x,x,y,\dots,x) &\approx f(x,y,y,x,\dots,x)\\
  \cdots \\
  f(y,x,\dots,x,x) &\approx f(y,x,\dots,x,y)
 \end{align*}
 and hence by the last lemma the conditions of Lemma~\ref{bwalgo} are satisfied. 
 \end{proof}
 
The following lemma was first proved by Bulatov and Jeavons. We include a streamlined proof, as we believe that it may be of independent interest.
 
\begin{lemma} \label{types} 
\cite{BulatovJeavons00-commutative-conservative}
Let $\bbH$ be a relational structure that admits a binary conservative commutative polymorphism. Then $\CSP(\bbH)$ has bounded width. \end{lemma}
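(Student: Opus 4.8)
The plan is to show that a binary conservative commutative polymorphism forces a semilattice structure on every two-element subset of the domain, and then to feed this into the Barto--Kozik--Stanovsk\'y criterion (Lemma~\ref{lemma-BKS}) in essentially the same way as in the proof of Corollary~\ref{cor:idempotent-ts-p}.

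First I would reduce to the case where $\bbH$ is a core. By Proposition~\ref{prop:bw-preserved-by-hom-equivalence} it suffices to prove the statement for the core $\bbC$ of $\bbH$, and $\bbC$ inherits a binary conservative commutative polymorphism: if $f$ is the given polymorphism of $\bbH$ and $r\colon\bbH\to\bbC$, $e\colon\bbC\to\bbH$ witness $\bbC$ as a retract, then $g(x,y)=r(f(e(x),e(y)))$ is a polymorphism of $\bbC$, and it is conservative and commutative because $r\circ e$ is the identity on $C$. So assume $\bbH$ is a core with a binary conservative commutative polymorphism $f$; note $f$ is idempotent, since conservativity forces $f(x,x)\in\{x\}$. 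The key observation is that for every two-element subset $\{a,b\}\subseteq H$ the restriction $f|_{\{a,b\}}$ is a \emph{semilattice} operation: it is commutative and idempotent, and a commutative idempotent binary operation on a two-element set is necessarily $\min$ or $\max$ for one of the two linear orders, hence associative.

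Next I would package this into a ternary operation. Let $t(x_1,x_2,x_3)=f(x_1,f(x_2,x_3))$; being a composition of the polymorphism $f$, it is a polymorphism of $\bbH$, and it is idempotent. For every two-element subset $\{a,b\}$, the restriction $t|_{\{a,b\}}$ is a ternary term over the semilattice $f|_{\{a,b\}}$, so it is \emph{totally symmetric}: its value on a triple depends only on the set of entries appearing in that triple, and in fact equals $f(a,b)$ whenever both $a$ and $b$ occur. I would then apply Lemma~\ref{lemma-BKS} with the $3\times3$ matrix $A$ whose rows are $(x,y,x)$, $(x,x,y)$, $(y,x,x)$ and the $3\times3$ matrix $B$ whose rows are $(y,y,x)$, $(x,y,y)$, $(y,x,y)$; these have the required shape ($x$ on the diagonal of $A$, $y$ on the diagonal of $B$, matching off-diagonal entries). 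To verify $t[A]=t[B]$: when $x=y$ this is immediate from idempotence; when $x\neq y$, every one of the six triples occurring as a row of $A$ or $B$ contains both an $x$ and a $y$, hence has entry-set exactly $\{x,y\}$, so total symmetry of $t|_{\{x,y\}}$ sends each of them to $f(x,y)$, whence $t[A]$ and $t[B]$ are both the constant column vector with all entries $f(x,y)$. Lemma~\ref{lemma-BKS} then gives that $\csp(\widehat{\bbH})$ has bounded width; and since $\bbH$ is a core, so is $\widehat{\bbH}$, and the polymorphisms of $\widehat{\bbH}$ are precisely the idempotent polymorphisms of $\bbH$, so by Theorem~\ref{thm:bw-cores} the structure $\bbH$ has idempotent BW polymorphisms if and only if $\widehat{\bbH}$ does, and therefore $\csp(\bbH)$ has bounded width as well.

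The only genuine content here is the first-step observation that conservativity together with commutativity forces the semilattice structure on every pair; once that is available, the construction of $t$ and the verification that $t[A]=t[B]$ are routine and amount to the same computation as in Corollary~\ref{cor:idempotent-ts-p}, and the passage between $\bbH$ and $\widehat{\bbH}$ is the standard one already used earlier in the paper. A minor point requiring care is confirming that each of the six triples appearing as rows of $A$ and $B$ really does contain both an $x$ and a $y$, so that total symmetry of $t$ on $\{x,y\}$ can be applied uniformly to all of them.
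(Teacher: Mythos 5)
Your proof is correct, but it takes a genuinely different route from the paper's. The paper argues by contradiction via the omitting-types characterization of bounded width: citing Barto--Kozik, Szendrei and Valeriote, a structure whose CSP lacks bounded width has a subset $C$ with $|C|\ge 2$ on which every polymorphism preserves the graph $\{(a,b,c): a+b=c\}$ of an Abelian group, and the conservative commutative polymorphism $t$ then forces $t(0,c)=t(c,0)=c$ and hence $c+c=c$, i.e.\ $c=0$, a contradiction. You instead produce an explicit witness for Lemma~\ref{lemma-BKS}: the key observation that conservativity plus commutativity makes $f$ a semilattice on every two-element subset, so that $t(x_1,x_2,x_3)=f(x_1,f(x_2,x_3))$ is totally symmetric on each pair and satisfies the two-variable identities $t[A]=t[B]$ with exactly the matrices used in the $k=3$ case of Corollary~\ref{cor:idempotent-ts-p}. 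Your verification is sound: the matrices have the required diagonal/off-diagonal pattern, each of the six rows contains both variables when $x\neq y$, and the passage among $\bbH$, its core, and $\widehat{\bbH}$ via Theorem~\ref{thm:bw-cores} and Proposition~\ref{prop:bw-preserved-by-hom-equivalence} is the standard one (the initial reduction to the core could equally be deferred to the end, but as written it is harmless). Both arguments ultimately lean on deep cited results; yours has the advantage of reusing a lemma already stated in the paper and of exhibiting concrete polymorphism identities rather than deriving a contradiction, while the paper's is shorter once the types machinery is granted and makes transparent exactly which obstruction (the affine type) a conservative commutative polymorphism rules out.
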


\begin{proof} The proof follows the very same strategy as case (2) of Theorem 4.1 of \cite{LaroseTesson09-hardness}; 
we refer the reader to this proof for full details. Suppose $\bbH$ has a binary conservative commutative polymorphism $t$ but $\CSP(\bbH)$ does not have bounded width. 
Combining the results of \cite{BartoKozik09-boundedwidth},  \cite{Szendrei92-survey}, and  \cite{Valeriote09-subalgebra-intersection}, there exists a  subset $C$ of $H$, $|C| \geq 2$, such that that every polymorphism of $\bbH$ restricted to $C$ preserves the relation $\rho = \{(a,b,c):a+b=c\}$ for some Abelian group structure on $C$.  Pick an element $c \neq 0$ in $C$. Then $(0,c,c),(c,0,c) \in \rho$ implies that $t(0,c) + t(c,0) = t(c,c) = c$. Since $c\neq 0$ and $t$ is commutative and conservative it follows that $t(0,c)=t(c,0)=c$. But then $c + c = c$ which implies $c=0$, a contradiction. \end{proof}

\begin{corollary} 
\label{cor:even-conservative}
Let $k$ be a positive even integer. 
There is a polynomial-time algorithm to decide whether a relational structure $\bbH$ admits a conservative, cyclic polymorphism of arity $k$;
the same result holds for conservative, symmetric polymorphisms.
 \end{corollary}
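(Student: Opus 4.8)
The plan is to reduce each of these two metaquestions to an instance of the algorithmic Lemma~\ref{bwalgo}, using Lemma~\ref{types} to verify its hypothesis. Two ingredients are needed: (i) a way to replace the non-linear condition ``conservative'' by a genuine strong linear Maltsev condition, and (ii) the observation that, for \emph{even} $k$, a conservative cyclic (hence also a conservative symmetric) $k$-ary polymorphism yields a conservative binary commutative polymorphism, so that Lemma~\ref{types} becomes applicable. For ingredient (ii): if $f$ is a conservative cyclic operation on $H$ of even arity $k$, set $g(x,y) := f(x,y,x,y,\dots,x,y)$, the alternating substitution with $k/2$ copies of each argument. Then $g(x,y)\in\{x,y\}$, so $g$ is conservative; and $g$ is commutative, since applying the cyclic identity once to $f(y,x,y,x,\dots,y,x)$ and using that $k$ is even turns the tuple $(y,x,\dots,y,x)$ into $(x,y,\dots,x,y)$, giving $g(y,x)=g(x,y)$. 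Moreover $g$ is a term operation obtained from $f$ by identifying coordinates, so it is a polymorphism of every structure preserved by $f$, and of every expansion of such a structure by additional unary relations. Since a symmetric operation is in particular cyclic, the same holds for a conservative symmetric $f$ of even arity $k$.

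For ingredient (i), given a structure $\bbH$ with universe $H$, let $\bbH^*$ be the expansion of $\bbH$ by all unary relations $\{a_1,\dots,a_k\}$ with $(a_1,\dots,a_k)\in H^k$, i.e.\ by all non-empty subsets of $H$ of size at most $k$. Since $k$ is fixed, $\bbH^*$ has $O(|H|^k)$ relations and is computable from $\bbH$ in polynomial time. A $k$-ary operation preserves $\bbH^*$ if and only if it is a conservative polymorphism of $\bbH$: preserving every $\leq k$-element unary relation forces conservativity on $k$-ary operations, and conversely a conservative operation preserves every unary relation. Hence $\bbH$ admits a conservative cyclic (resp.\ symmetric) polymorphism of arity $k$ if and only if $\bbH^*$ admits a cyclic (resp.\ symmetric) polymorphism of arity $k$; the latter property is a strong linear Maltsev condition.

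Now I would invoke Lemma~\ref{bwalgo} with $\cC$ the class of all structures containing every at-most-$k$-element subset of their universe as a unary relation, and with the strong linear Maltsev condition $\cM$ asserting the existence of a $k$-ary cyclic (resp.\ symmetric) operation. Its hypothesis holds: if $\bbK\in\cC$ satisfies $\cM$, the witnessing polymorphism is cyclic, is conservative (because $\bbK\in\cC$), and is idempotent (because $\bbK$ contains all singletons), so by ingredient (ii) $\bbK$ has a conservative binary commutative polymorphism, whence $\csp(\bbK)$ has bounded width by Lemma~\ref{types}; and $\widehat{\bbK}=\bbK$ since $\bbK$ already contains all singletons, so $\csp(\widehat{\bbK})$ has bounded width as required. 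Lemma~\ref{bwalgo} then yields a polynomial-time algorithm deciding, for $\bbK\in\cC$, whether $\bbK$ satisfies $\cM$; to decide whether $\bbH$ admits a conservative cyclic (resp.\ symmetric) polymorphism of arity $k$, one computes $\bbH^*$, runs this algorithm on it, and applies the equivalence of the previous paragraph.

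The only point requiring care — and essentially the only place where the argument could go wrong — is staying polynomial in ingredient (i): expanding by \emph{all} non-empty subsets of $H$, as in the remark in Section~\ref{sect:defs}, would be exponential in $|H|$, so it is crucial that the at-most-$k$-element subsets already suffice, both to characterize conservative $k$-ary polymorphisms and to force conservativity of the witnessing operation inside the class $\cC$. The remaining verifications — that $g$ is a polymorphism, that $\bbH^*$ has polynomial size, and that $\widehat{\bbK}=\bbK$ for $\bbK\in\cC$ — are routine, and I would leave the details to the reader.
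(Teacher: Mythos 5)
Your proposal is correct and follows essentially the same route as the paper: reduce conservativity to added unary relations, identify variables in the even-arity cyclic operation to obtain a binary conservative commutative polymorphism, and then combine Lemma~\ref{types} with Lemma~\ref{bwalgo}. Your restriction to the at-most-$k$-element subsets is a worthwhile tightening of the paper's wording (which speaks of adding \emph{all} non-empty subsets, a construction that is exponential in $|H|$ if taken literally), but it does not change the substance of the argument.
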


\begin{proof} We prove the result for  cyclic polymorphisms;
 the symmetric case is identical. Determining if a structure $\bbH$ admits a conservative cyclic polymorphism of arity $k$ is clearly equivalent to determining if the structure $\bbH'$ obtained from $\bbH$ by adding all non-empty subsets of the universe as unary relations admits a $k$-ary cyclic polymorphism. 
Furthermore, notice that if a structure admits a 
conservative cyclic polymorphism $f$ of even arity, then it admits one of arity 2 by identifying variables: $g(x,y) = f(x,\dots,x,y,\dots,y)$ in the obvious way. Thus by Lemma~\ref{types}, 
we can apply  Lemma \ref{bwalgo} with $\cC$  the class of conservative structures.  \end{proof}

 It is interesting to note the following consequence of Lemma~\ref{bwalgo}: on the class of digraphs, deciding the existence of a Maltsev polymorphism is tractable.
 Indeed, by a result of Kazda \cite{Kazda11-maltsev-digraphs-have-majority},
 if a digraph admits a Maltsev polymorphism it also has a majority polymorphism; since the existence of a near-unanimity polymorphism implies the CSP has bounded width 
 (this is as an easy exercise using  Lemma~\ref{lemma-BKS}), the result follows from Lemma~\ref{bwalgo}. 

 As another example of a class of structures where Lemma~\ref{bwalgo} can be invoked, consider 
the class of conservative, at most binary structures, i.e.  structures whose basic relations are at most binary and include all non-empty subsets of the universe as unary relations. 
Kazda has proved~\cite{Kazda11-csp-binary-conservative} 
that if such a structure admits a Siggers polymorphism 
(i.e. if the CSP is tractable), then in fact the CSP has bounded width; 
the following is a consequence of this and Lemma~\ref{bwalgo}.

\begin{theorem} 
\label{thm:at-most-binary-ptime} Let $\cM$ be a strong linear Maltsev condition such that, if a structure $\bbH$ admits a conservative polymorphism satisfying $\cM$ then $\bbH$ admits a Siggers polymorphism. 
Then it is polynomial-time decidable if  an at most binary structure admits a
conservative polymorphism satisfying $\cM$.

\end{theorem}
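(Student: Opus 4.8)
The plan is to reduce the problem to an application of Lemma~\ref{bwalgo}; the crux is to exhibit a \emph{polynomially sized} relational structure that faithfully records the existence of the desired conservative polymorphisms. Let $k_0$ be the largest arity of an operation symbol occurring in $\cM$, a constant once $\cM$ is fixed. Given an at most binary structure $\bbH$ with universe $H$, I would build the structure $\bbH^{+}$ obtained from $\bbH$ by adjoining, as unary relations, all non-empty subsets of $H$ of cardinality at most $k_0$. Since $k_0$ is fixed, there are only $O(|H|^{k_0})$ such subsets, so $\bbH^{+}$ is computable from $\bbH$ in polynomial time; note also that $\bbH^{+}$ is again at most binary, and that since it contains every singleton it is a core (any endomorphism must fix each singleton).

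The first step is to verify that $\bbH$ admits a conservative polymorphism witnessing $\cM$ if and only if $\bbH^{+}$ satisfies $\cM$. For the forward direction, a conservative operation preserves every non-empty subset of $H$, so conservative polymorphisms of $\bbH$ are polymorphisms of $\bbH^{+}$. For the converse, any $m$-ary operation with $m \le k_0$ that preserves all subsets of $H$ of size at most $k_0$ is conservative, since $\{x_1,\dots,x_m\}$ is itself such a subset; hence the operations in any sequence witnessing that $\bbH^{+}$ satisfies $\cM$ are all conservative and, being polymorphisms of the reduct $\bbH$, witness that $\bbH$ has conservative polymorphisms satisfying $\cM$.

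The second and main step is to check that the class $\cC$ of all structures obtained from at most binary structures in the above way satisfies the hypothesis of Lemma~\ref{bwalgo}: if $\bbK = \bbH^{+} \in \cC$ satisfies $\cM$, then $\CSP(\widehat{\bbK})$ has bounded width. Assume $\bbK$ satisfies $\cM$. By the first step $\bbK$ has conservative polymorphisms witnessing $\cM$, and being conservative these preserve \emph{all} non-empty subsets of $H$; hence they are polymorphisms of the structure $\bbK^{\star}$ obtained from $\bbH$ by adjoining every non-empty subset of $H$ as a unary relation. Thus $\bbK^{\star}$ is a conservative, at most binary structure admitting a conservative polymorphism witnessing $\cM$, so by the hypothesis on $\cM$ it admits a Siggers polymorphism, and therefore by Kazda's theorem~\cite{Kazda11-csp-binary-conservative} the problem $\CSP(\bbK^{\star})$ has bounded width. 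Since $\bbK^{\star}$ contains all singletons it is a core, so Theorem~\ref{thm:bw-cores} furnishes idempotent polymorphisms $v,w$ of $\bbK^{\star}$ that are BW operations. As the relations of $\widehat{\bbK}$ form a subcollection of those of $\bbK^{\star}$ (indeed $\widehat{\bbK}$ has the same relations as $\bbK$, because $\bbK$ already contains every singleton), $v$ and $w$ are polymorphisms of $\widehat{\bbK}$ as well; and $\widehat{\bbK}$ is a core. The converse direction of Theorem~\ref{thm:bw-cores} then yields that $\CSP(\widehat{\bbK})$ has bounded width, as required.

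With these steps in place the algorithm is immediate: on input an at most binary $\bbH$, compute $\bbH^{+} \in \cC$ in polynomial time and run on it the polynomial-time procedure that Lemma~\ref{bwalgo} provides for $\cC$ and $\cM$; by the first step its output is precisely whether $\bbH$ admits a conservative polymorphism satisfying $\cM$. I expect the second step to demand the most care: one must travel from the polynomially padded $\bbH^{+}$, on which the algorithm actually runs, up to the exponentially large but genuinely conservative $\bbK^{\star}$ in order to apply Kazda's bounded-width theorem, and then descend again to $\CSP(\widehat{\bbK})$ through the core characterization of bounded width of Theorem~\ref{thm:bw-cores}, checking along the way that each structure involved is a core and that BW polymorphisms are retained when relations are discarded (which holds because a polymorphism is constrained only by the relations actually present).
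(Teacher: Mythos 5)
Your proof is correct, and its skeleton is the one the paper intends: combine Kazda's theorem for conservative at most binary structures with Lemma~\ref{bwalgo}. The genuine added value of your write-up is the padding step. The paper's blanket remark that conservativity can be absorbed by ``adding all non-empty subsets as basic relations'' produces a structure with $2^{|H|}-1$ unary relations, which is exponential in the size of an explicitly represented at most binary input $\bbH$; applied literally, that would break the polynomial-time claim. Your observation that it suffices to adjoin only the subsets of size at most $k_0$ (the maximum arity occurring in $\cM$) is exactly the right repair: an operation of arity $m\le k_0$ preserving all subsets of size $\le k_0$ is already conservative, so $\bbH^{+}$ faithfully encodes the conservative metaquestion while remaining polynomial-size. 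Your second step is also handled correctly: you cannot apply Kazda directly to $\bbH^{+}$ (it is not conservative in the paper's sense), so you pass to the exponentially large $\bbK^{\star}$ to invoke Kazda and the forward direction of Theorem~\ref{thm:bw-cores}, then transfer the idempotent BW polymorphisms down to the reduct $\widehat{\bbK}$ and use the converse direction of Theorem~\ref{thm:bw-cores} (legitimate since $\widehat{\bbK}$ contains all singletons and is therefore a core). Note that $\bbK^{\star}$ is only an object of the proof, never of the algorithm, so no complexity issue arises. In short: same strategy as the paper, but with a necessary and correctly executed refinement that the paper's one-line derivation glosses over.
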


\section{Uniformity and metaquestions}
\label{sect:uniformity-and-metaquestions}

We saw in the previous section that $(2,3)$-consistency
is a generic polynomial-time algorithm that 
uniformly solves all problems $\csp(\bbH)$
of bounded width. 
In this section, we formalize and study notions of uniform
polynomial-time algorithms, for a Maltsev condition,
and observe a direct relationship 
between the existence of a uniform polynomial-time algorithm,
for a Maltsev condition,
and the corresponding metaquestion for the condition
(see Theorem~\ref{thm:metaquestion-and-uniformity}
for a precise statement).
This relationship, as will be made evident,
generalizes Lemma~\ref{bwalgo} in a certain sense.
We also mention that it seems to have been a matter of folklore
that such a relationship held; in particular,
the consequence noted in 
Example~\ref{ex:maltsev-equivalence} below
was previously communicated to us by M. Valeriote.



\begin{definition} Let $\cC$ be a class of finite structures. 
A {\em uniform polynomial-time algorithm} for $\cC$
is
a polynomial-time algorithm 
that, for each CSP instance $(\bbG, \bbH)$ with $\bbH \in \cC$,
correctly decides the instance.
\end{definition}

\begin{example}
Barto's collapse result (Theorem \ref{thm-barto-collapse}) shows that the (2,3)-minimality algorithm is a uniform polynomial-time algorithm
for the class $\cC$ of structures $\bbH$ such that $\CSP(\bbH)$ has bounded width. 
\end{example}

\begin{definition}
Let $\cM$ be a strong linear Maltsev condition.
\begin{itemize}

\item A \emph{uniform polynomial-time algorithm}
for $\cM$ is a uniform polynomial-time algorithm
for the class of structures satisfying $\cM$.

\item 
A \emph{semiuniform polynomial-time algorithm}
for $\cM$ is
a polynomial-time algorithm that,
when given as input a CSP instance $(\bbG, \bbH)$
and polymorphisms $f_1, \ldots, f_m$ of $\bbH$ that satisfy $\cM$,
correctly decides the CSP instance.
\end{itemize}
\end{definition}

\begin{example}
\label{ex:maltsev-op}
Let $\cM$ be the strong linear Maltsev condition
$\{ f(y,y,x) = x, f(x,y,y) = x \}$,
which asserts that $f$ is a Maltsev operation.
The algorithm due to~\cite{BulatovDalmau06-maltsev} is
readily verified to be
a semiuniform polynomial-time algorithm for $\cM$.
\end{example}

\begin{definition}
Let $\cM$ be a strong linear Maltsev condition.
\begin{itemize}

\item Define the \emph{metaquestion for $\cM$}
to be the problem of deciding, given a structure $\bbH$,
whether or not $\bbH$ satisfies $\cM$.

\item Define the \emph{creation-metaquestion for $\cM$}
to be the problem
where the input is a structure $\bbH$,
and the output is 
a sequence $f_1, \ldots, f_m$ of polymorphisms of $\bbH$
that satisfy $\cM$ in the case that $\bbH$ satisfies $\cM$,
and ``no'' otherwise.

\end{itemize}
\end{definition}

We note the following observation.

\begin{prop}
\label{prop:creation-metaquestion}
Let $\cM$ be a strong linear Maltsev condition.
If the creation-metaquestion for $\cM$ is polynomial-time computable,
then so is the metaquestion for $\cM$.
\end{prop}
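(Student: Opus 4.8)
The plan is to observe that the metaquestion for $\cM$ is simply the ``decision version'' of the creation-metaquestion for $\cM$, so a polynomial-time procedure for the latter immediately yields one for the former. Concretely, suppose we are given a polynomial-time algorithm $\mathcal{A}$ that solves the creation-metaquestion: on input a structure $\bbH$, it returns a sequence $f_1, \ldots, f_m$ of polymorphisms of $\bbH$ satisfying $\cM$ whenever $\bbH$ satisfies $\cM$, and returns ``no'' otherwise.

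The algorithm for the metaquestion is then: on input $\bbH$, run $\mathcal{A}(\bbH)$; if $\mathcal{A}$ outputs ``no'', output ``$\bbH$ does not satisfy $\cM$''; otherwise output ``$\bbH$ satisfies $\cM$''. This runs in polynomial time since $\mathcal{A}$ does. For correctness in the affirmative direction: if $\mathcal{A}$ outputs a sequence $f_1, \ldots, f_m$, then by definition of the creation-metaquestion these are polymorphisms of $\bbH$ satisfying $\cM$, so $\bbH$ satisfies $\cM$ by the definition of a structure satisfying a strong linear Maltsev condition; and conversely, if $\bbH$ satisfies $\cM$, then by the specification of $\mathcal{A}$ it must output such a sequence (and not ``no''), so our algorithm correctly reports that $\bbH$ satisfies $\cM$.

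One small point worth noting is that, a priori, the output of $\mathcal{A}$ might be large; but since $\mathcal{A}$ runs in polynomial time, its output has polynomial length, so reading it and branching on whether it equals the string ``no'' costs only polynomial time. There is essentially no obstacle here: the statement is a routine unpacking of definitions, the only content being that ``producing a witness'' is at least as hard as ``deciding whether a witness exists.'' I would present it in three or four sentences.
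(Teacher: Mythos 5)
Your argument is correct and is exactly the intended (and only reasonable) one: the paper states this proposition as an immediate observation without proof, precisely because running the creation-metaquestion algorithm and reporting whether its output is ``no'' is all that is needed. Nothing to add.
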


The following theorem is the main theorem of this section.
It connects the existence of a uniform polynomial-time algorithm
for a Maltsev condition to
the existence of a semiuniform polynomial-time algorithm
and the tractability of the creation-metaquestion.
We say that a strong linear Maltsev condition $\cM$
is \emph{idempotent} if if it contains 
or entails
the identity $f(x,\ldots,x) = x$
for each operation symbol $f$ appearing in $\cM$.

\begin{theorem}
\label{thm:metaquestion-and-uniformity}
Let $\cM$ be a strong linear Maltsev condition.
\begin{itemize}

\item If the creation-metaquestion for $\cM$ is polynomial-time computable
and $\cM$ has a semiuniform polynomial-time algorithm,
then the condition $\cM$ has a uniform polynomial-time algorithm.

\item When $\cM$ is idempotent, 
the converse of the previous statement holds.

\end{itemize}
\end{theorem}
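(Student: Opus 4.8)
The plan is to prove the two implications separately. For the forward implication, suppose the creation-metaquestion for $\cM$ is polynomial-time computable and that $\cM$ has a semiuniform polynomial-time algorithm $\cB$. Given a CSP instance $(\bbG,\bbH)$, I would first run the creation-metaquestion algorithm on $\bbH$. If it returns ``no'', then $\bbH$ does not satisfy $\cM$, and we may output anything, since a uniform algorithm for $\cM$ is only required to be correct on instances whose right-hand side satisfies $\cM$. If it returns polymorphisms $f_1,\dots,f_m$ of $\bbH$ satisfying $\cM$ --- these have total size $\sum_i |H|^{n_i}$, where $n_i$ is the arity of the $i$-th operation symbol of $\cM$, hence polynomial in $|\bbH|$ since the $n_i$ are constants --- I would feed $(\bbG,\bbH,f_1,\dots,f_m)$ to $\cB$ and return its answer; by definition of a semiuniform algorithm this correctly decides $(\bbG,\bbH)$. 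The whole procedure is polynomial-time.

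For the converse, assume $\cM$ is idempotent and has a uniform polynomial-time algorithm $\cA$. A semiuniform algorithm for $\cM$ is immediate: on input $(\bbG,\bbH,f_1,\dots,f_m)$ with the $f_i$ witnessing $\bbH\models\cM$, discard the $f_i$ and run $\cA$ on $(\bbG,\bbH)$; since $\bbH$ satisfies $\cM$, the answer is correct. By Proposition~\ref{prop:creation-metaquestion} it then suffices to show the creation-metaquestion for $\cM$ is polynomial-time computable, and here I would follow the strategy of the proof of Lemma~\ref{bwalgo}, using $\cA$ in place of the $(2,3)$-minimality algorithm. Given $\bbH$, build a CSP instance $\cI_\bbH$ over value set $H$ whose variables are the classes of $\bigsqcup_i H^{n_i}$ under the identifications forced by the height-$1$ identities of $\cM$ (for each such identity $f(x_1,\dots,x_k)\approx g(y_1,\dots,y_\ell)$ and each map $\phi$ from its variables to $H$, merge the tuple $(\phi(x_1),\dots,\phi(x_k))$ of the $f$-block with $(\phi(y_1),\dots,\phi(y_\ell))$ of the $g$-block), together with: a constraint of relation $\theta$ for each relation $\theta$ of $\bbH$ and each choice of argument columns drawn from $\theta$ (encoding that each operation preserves $\theta$); and a one-element unary constraint for each instance of each identity of $\cM$ of the form $f(x_1,\dots,x_k)\approx y_i$, in particular for the idempotency identities. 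By construction the solutions of $\cI_\bbH$ are exactly the sequences $f_1,\dots,f_m$ of polymorphisms of $\bbH$ satisfying $\cM$, and every relation occurring in $\cI_\bbH$ --- each $\theta$ and each one-element unary relation --- is a relation of $\widehat{\bbH}$.

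The algorithm then runs a standard self-reduction against the right-hand side $\widehat{\bbH}$: repeatedly pick an unassigned variable and a value, fix the variable to that value via a one-element unary constraint, and call $\cA$ on the resulting instance; keep the first value for which $\cA$ accepts, and output ``no'' if every value is rejected; once all variables are assigned, explicitly verify whether the resulting assignment is genuinely a solution of $\cI_\bbH$, equivalently whether the induced operations $f_1,\dots,f_m$ are polymorphisms of $\bbH$ satisfying $\cM$, outputting them if so and ``no'' otherwise. Correctness uses that, because $\cM$ is idempotent, $\bbH\models\cM$ if and only if $\widehat{\bbH}\models\cM$: operations witnessing $\cM$ for $\bbH$ are idempotent, hence polymorphisms of $\widehat{\bbH}$, while all polymorphisms of $\widehat{\bbH}$ are polymorphisms of $\bbH$. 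Thus if $\bbH\models\cM$ then $\widehat{\bbH}\models\cM$, so $\cA$ is correct on every instance it is called on; then $\cI_\bbH$ is satisfiable, the self-reduction never rejects (the invariant that the current instance is satisfiable is preserved since $\cA$ is correct), the final assignment is a true solution, the verification succeeds, and a valid witness is returned. If instead $\bbH\not\models\cM$, then $\cI_\bbH$ has no solution, so either the self-reduction already outputs ``no'' or the final assignment fails the explicit verification and ``no'' is output then. Since $|\cI_\bbH|$ is polynomial in $|\bbH|$ (the arities and the number of identities of $\cM$ being constants) and $\cA$ is invoked polynomially many times, everything runs in polynomial time.

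I expect the main obstacle to be precisely this last construction. The uniform algorithm $\cA$ carries no correctness guarantee when its right-hand side fails $\cM$, which is exactly the situation we are trying to recognize, so $\cA$'s ``yes'' answers cannot be trusted until an explicit witness has been reconstructed and checked directly --- hence the mandatory final verification step, which is what makes the argument work for an arbitrary uniform algorithm rather than a specially structured consistency procedure. The second subtlety is that the self-reduction must be carried out over $\widehat{\bbH}$ rather than $\bbH$, so that the constraints fixing variable values are legitimate right-hand-side relations; idempotence of $\cM$ is exactly what guarantees that passing from $\bbH$ to $\widehat{\bbH}$ does not change whether $\cM$ is satisfied, and this is where that hypothesis is used.
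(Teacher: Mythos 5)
Your proposal is correct and follows essentially the same route as the paper: the forward direction is the obvious composition of the two algorithms, and the converse routes the creation-metaquestion through the construction of Lemma~\ref{bwalgo} with the uniform algorithm substituted for $(2,3)$-consistency, using idempotence exactly where the paper does (to pass from $\bbH$ to $\widehat{\bbH}$). Your treatment is in fact somewhat more careful than the paper's one-line appeal to Lemma~\ref{bwalgo}, since you make explicit the final verification step needed because an arbitrary uniform algorithm carries no guarantee when the right-hand side fails $\cM$ — a point the paper leaves implicit.
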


\begin{proof}
For the first claim, the polynomial-time algorithm is this.
Given a pair $(\bbG, \bbH)$, invoke the algorithm
for the creation-metaquestion; if a sequence
$f_1, \ldots, f_m$ of polymorphisms is returned,
then invoke the semiuniform polynomial-time algorithm
on the pair $(\bbG, \bbH)$ and the polymorphisms
$f_1, \ldots, f_m$, and output the result.

We now prove the second claim.  Assume that $\cM$
is idempotent and has a uniform polynomial-time algorithm.
It follows immediately that $\cM$ has a semiuniform
polynomial-time algorithm.
The algorithm for the creation-metaquestion (for $\cM$)
is the algorithm of 
Lemma~\ref{bwalgo}, but where instead of invoking the
$(2,3)$-consistency algorithm, one invokes the 
uniform polynomial-time algorithm for $\cM$.
\end{proof}

\begin{example} \label{ex:maltsev-equivalence}
Consider, as an example,
the condition $\cM$ from Example~\ref{ex:maltsev-op}.
Theorem~\ref{thm:metaquestion-and-uniformity}
implies that, for this condition,
the existence of a polynomial-time algorithm
for the creation-metaquestion
is equivalent to the existence of a uniform polynomial-time algorithm.
\end{example}

The following corollary, which exhibits a hypothesis
under which a uniform polynomial-time algorithm 
immediately implies tractability of a metaquestion,
 is immediate from 
Theorem~\ref{thm:metaquestion-and-uniformity}
and Proposition~\ref{prop:creation-metaquestion}.

\begin{corollary}
\label{cor:metaquestion-uniformity}
Let $\cM$ be a strong linear Maltsev condition that is idempotent.
If $\cM$ has a uniform polynomial-time algorithm,
then the metaquestion for $\cM$ is polynomial-time computable.
\end{corollary}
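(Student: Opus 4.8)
The plan is to simply chain together the two facts that have just been established, since the corollary is a direct composition of them. First I would invoke the second bullet of Theorem~\ref{thm:metaquestion-and-uniformity}: by hypothesis $\cM$ is idempotent and admits a uniform polynomial-time algorithm, so the theorem yields that the creation-metaquestion for $\cM$ is polynomial-time computable (it also yields that $\cM$ has a semiuniform polynomial-time algorithm, but that second consequence is not needed here).

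Second, I would feed this conclusion into Proposition~\ref{prop:creation-metaquestion}, which states that polynomial-time computability of the creation-metaquestion for $\cM$ entails polynomial-time computability of the metaquestion for $\cM$. Spelled out, the decision procedure for the metaquestion is: on input $\bbH$, run the polynomial-time algorithm for the creation-metaquestion; accept if it returns a sequence of polymorphisms witnessing $\cM$, and reject if it returns ``no''. Correctness and the polynomial time bound are immediate from the specification of the creation-metaquestion, so this completes the argument.

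I do not expect a genuine obstacle, as the substantive work has already been carried out in Theorem~\ref{thm:metaquestion-and-uniformity} and Proposition~\ref{prop:creation-metaquestion}. The one point worth flagging is the role of the idempotence hypothesis: it is exactly what licenses the use of the converse direction of Theorem~\ref{thm:metaquestion-and-uniformity}, whose proof passes through the construction in Lemma~\ref{bwalgo}. That construction produces witnessing operations by fixing their values one element of the auxiliary instance at a time and re-running the generic algorithm on the structure $\widehat{\bbH}$; this requires that fixing such values correspond to legitimate (idempotent) constraints, which is precisely guaranteed when $\cM$ is idempotent. Hence the hypothesis cannot be dropped, and the proof should explicitly note that it is only invoked for this reason.
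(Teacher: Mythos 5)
Your argument is exactly the paper's: the corollary is stated there as immediate from the converse direction of Theorem~\ref{thm:metaquestion-and-uniformity} (which applies since $\cM$ is idempotent) combined with Proposition~\ref{prop:creation-metaquestion}. Your additional remark correctly identifies why idempotence is needed (it licenses the Lemma~\ref{bwalgo}-style construction underlying the converse), so the proposal is complete and matches the intended proof.
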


\section{Positive complexity results}

\subsection{Set polymorphisms}

We begin by studying set polymorphisms.

\begin{definition} 
\label{def:power-structure}
Let $\bbH$ be a relational structure. Define on $\cP(H)\setminus \emptyset$ a relational structure,
the \emph{power structure}  $\cP(\bbH)$,
of the same type as follows: if $\theta$ is a basic relation of $\bbH$ of arity $k$, declare $(X_1,\dots,X_ k) \in \theta'$ if, for every $1 \leq i \leq k$ and every $a \in X_i$, there exists a tuple $(x_1,\dots,x_k) \in \theta \cap \prod_{i=1}^k X_i$ such that $x_i = a$. A homomorphism from this structure to $\bbH$ is called a {\em set polymorphism of $\bbH$}. We'll say a set polymorphism $f$ is idempotent if $f(\{x\}) = x$ for all $x \in H$.\end{definition}


The presence of a set polymorphism admits multiple characterizations.

\begin{lemma} \label{lemma-width1}
\cite{DalmauPearson99-width1,FederVardi99-structure}
Let $\bbH$ be a relational structure. Then, the following are equivalent:
\begin{enumerate}
\item $\bbH$ has a set polymorphism;
\item there is a map $f:\cP(H)\setminus \emptyset \rightarrow H$ such that, for every $m \geq 1$, the operation defined by 
$f_m(x_1,\dots,x_m) = f(\{x_1,\dots,x_m\})$ is a polymorphism of $\bbH$.
\item $\bbH$ admits TS polymorphisms of all arities. 
\item $\CSP(\bbH)$ has width 1. 
\end{enumerate} \end{lemma}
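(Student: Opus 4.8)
The plan is to establish all the equivalences by unwinding the definition of the power structure $\cP(\bbH)$: I would treat $(1)\Leftrightarrow(2)\Leftrightarrow(3)$ as one ``algebraic'' block, relating set polymorphisms to families of totally symmetric operations, and then separately link this block to $(4)$ by running the width-$1$ algorithm on the instance whose left-hand structure is $\cP(\bbH)$ itself. Throughout I may assume $\bbH$ has at least one basic relation (otherwise all four statements hold trivially).

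\emph{Equivalence of $(1)$ and $(2)$.} Given a set polymorphism $f\colon\cP(\bbH)\to\bbH$, put $f_m(x_1,\dots,x_m)=f(\{x_1,\dots,x_m\})$. To see $f_m$ is a polymorphism, take tuples $t^1,\dots,t^m$ of a basic relation $\theta$ of arity $k$ and consider the column sets $A_j=\{t^1_j,\dots,t^m_j\}$; for each $a\in A_j$, a tuple $t^i$ with $t^i_j=a$ witnesses the defining condition of $\theta^{\cP(\bbH)}$, so $(A_1,\dots,A_k)\in\theta^{\cP(\bbH)}$, whence applying $f$ gives the tuple $(f_m(t^1_j,\dots,t^m_j))_{j\leq k}=(f(A_j))_{j\leq k}\in\theta$. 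Conversely, given a map $f$ as in $(2)$, the same column-set computation shows $f$ is a homomorphism $\cP(\bbH)\to\bbH$: for $(X_1,\dots,X_k)\in\theta^{\cP(\bbH)}$, collect one witnessing tuple of $\theta$ for each pair $(j,a)$ with $a\in X_j$ and list these as $v^1,\dots,v^M$; then $\{v^1_j,\dots,v^M_j\}=X_j$ for each $j$, so $(f(X_1),\dots,f(X_k))=(f_M(v^1_j,\dots,v^M_j))_{j\leq k}\in\theta$ because $f_M$ is a polymorphism.

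\emph{Equivalence of $(2)$ and $(3)$.} The direction $(2)\Rightarrow(3)$ is immediate, since each $f_m$ is totally symmetric (its value depends only on the set of its arguments) and we get one of every arity. For $(3)\Rightarrow(2)$ --- the technical heart --- I would fix $N=|H|\cdot r$, where $r$ is the maximum arity of a basic relation of $\bbH$, take a totally symmetric polymorphism $g$ of arity $N$ (which exists by $(3)$, as $N\geq|H|$), and define $f(X)=g(x_1,\dots,x_N)$ for any tuple whose underlying set is $X$; this is well defined since $g$ is totally symmetric. To check $f_m$ is a polymorphism for every $m$, take $t^1,\dots,t^m\in\theta$ with $\theta$ of arity $k\leq r$ and form the column sets $A_j$ as above; for each pair $(j,a)$ with $a\in A_j$ pick one of the $t^i$ realising $a$ in column $j$, obtaining at most $\sum_j|A_j|\leq k|H|\leq N$ tuples, and pad with repetitions to exactly $N$ tuples $u^1,\dots,u^N\in\theta$ satisfying $\{u^1_j,\dots,u^N_j\}=A_j$ for each $j$. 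Since $g$ is a polymorphism, $(g(u^1_j,\dots,u^N_j))_{j\leq k}=(f(A_j))_{j\leq k}=(f_m(t^1_j,\dots,t^m_j))_{j\leq k}\in\theta$. The key point is that a single arity $N$, bounded in terms of $|H|$ and $r$ only, works simultaneously for \emph{all} $m$.

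\emph{Linking to $(4)$.} I would prove $(1)\Leftrightarrow(4)$ after recalling that relational width $(1,l)$, for $l$ at least the maximum relation arity, is the same as arc consistency: it computes, for each variable $v$, the largest assignment of nonempty sets $S(v)\subseteq H$ such that for every constraint $(v_1,\dots,v_k)$ (with relation $\theta$) and every $a\in S(v_i)$ there is a tuple of $\theta$ lying in $\prod_j S(v_j)$ with $i$-th coordinate $a$. For $(1)\Rightarrow(4)$: this surviving-domain condition says precisely that $v\mapsto S(v)$ is a homomorphism from the instance into $\cP(\bbH)$, so composing with a set polymorphism yields a genuine solution; hence arc consistency succeeding implies solvability, i.e.\ width $1$. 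For $(4)\Rightarrow(1)$: run the algorithm on the instance whose left-hand structure is $\cP(\bbH)$ (and whose constraint relations are those of $\bbH$); the assignment $X\mapsto X$ is arc consistent by the very definition of $\theta^{\cP(\bbH)}$, so no domain is emptied, and width $1$ then produces a homomorphism $\cP(\bbH)\to\bbH$, that is, a set polymorphism. I expect the $(3)\Rightarrow(2)$ coherence step to be the main obstacle, as it must convert an uncoordinated family of symmetric polymorphisms into a single set function; the other steps are routine translations between the power structure and arc consistency.
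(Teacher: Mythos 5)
Your proposal is correct and follows essentially the same route as the paper's own (unpublished, sketched) argument: the column-set translation between set polymorphisms and the family $(f_m)$, the single totally symmetric polymorphism of arity $r|H|$ with padding to get $(3)\Rightarrow(1)/(2)$, and the two-way use of the power structure with arc consistency (surviving domains give a homomorphism into $\cP(\bbH)$; running the algorithm on $\cP(\bbH)$ itself never empties a domain) for the equivalence with width $1$. No gaps; all four implications are handled as in the source references.
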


We now observe that detecting a set polymorphism can be performed
in EXPTIME.  Note that this improves the naive complexity upper bound
of NEXPTIME, which is obtained by constructing
the structure defined in Definition~\ref{def:power-structure},
nondeterministically guessing a map to the given structure $\bbH$,
and then checking if the map is a homomorphism.

\begin{prop} 
\label{prop:set-polymorphism-exptime}
Deciding if a relational structure admits a set polymorphism is in EXPTIME. \end{prop}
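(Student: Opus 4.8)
The plan is to replace the naive nondeterministic guess-and-check with a deterministic fixpoint computation over the power structure $\cP(\bbH)$. The key observation is that, by Lemma~\ref{lemma-width1}, the existence of a set polymorphism is equivalent to $\csp(\bbH)$ having width $1$, and width $1$ is decided by the arc-consistency (i.e.\ $(1,l)$-minimality) procedure; but we cannot afford to run that procedure on arbitrary instances, since instances are unbounded in size. Instead we run it on a single canonical instance, namely $\cP(\bbH)$ itself viewed as a CSP instance over $\bbH$, whose size is exponential in $|\bbH|$ but which is nonetheless a \emph{fixed} object once $\bbH$ is given. Concretely, first I would construct $\cP(\bbH)$ explicitly: its universe $\cP(H)\setminus\{\emptyset\}$ has size $2^{|H|}-1$, and for each basic relation $\theta$ of arity $k$, the relation $\theta'$ has at most $(2^{|H|})^k$ tuples, each checkable in time polynomial in $|\bbH|$; so $\cP(\bbH)$ is built in time $2^{O(|\bbH|)}$, i.e.\ in EXPTIME.

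Next I would set up the arc-consistency problem: we want to decide whether the identity map (or rather, the existence of \emph{some} homomorphism) $\cP(\bbH)\to\bbH$ is detected as possible by width-$1$ reasoning. Since $\bbH$ has a set polymorphism iff it has width $1$ iff the canonical width-$1$ witness works, the right formulation is: maintain for each element $X\in\cP(H)\setminus\{\emptyset\}$ a nonempty-or-empty set $S(X)\subseteq H$ of ``still-possible images,'' initialized to $H$, and repeatedly prune: for each basic relation $\theta$, each tuple $(X_1,\dots,X_k)\in\theta'$, and each coordinate $i$, remove from $S(X_i)$ any value $a$ for which there is no tuple $(a_1,\dots,a_k)\in\theta$ with $a_i=a$ and $a_j\in S(X_j)$ for all $j$. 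Iterate to a fixpoint. Each pruning pass costs $2^{O(|\bbH|)}$ and there are at most $2^{O(|\bbH|)}$ prunings total, so the whole computation runs in EXPTIME. I claim $\bbH$ has a set polymorphism if and only if this fixpoint leaves every $S(X)$ nonempty (equivalently, if and only if the singleton $\{x\}$ keeps $x$ available, for the idempotent variant — but for the plain version nonemptiness everywhere suffices, since from a nonempty arc-consistent reduction one reads off a set polymorphism as in the standard width-$1$ argument, selecting, e.g., any element consistently).

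The correctness argument is the crux and the main obstacle. One direction is easy: if $\bbH$ has a set polymorphism $f:\cP(H)\setminus\{\emptyset\}\to H$, then $f$ is a homomorphism $\cP(\bbH)\to\bbH$, so the assignment $X\mapsto f(X)$ survives every pruning step (homomorphisms are never pruned by arc consistency), hence the fixpoint is nonempty everywhere. For the converse, I would use the standard fact underlying Lemma~\ref{lemma-width1}: when the canonical arc-consistency instance built from $\cP(\bbH)$ (with its natural constraints) does not collapse to empty, one can extract a homomorphism, and this homomorphism is exactly a map of the form required in clause~(2) of Lemma~\ref{lemma-width1}, hence yields a set polymorphism; the point is that $\cP(\bbH)$ is the ``freest'' width-$1$ instance, so arc consistency succeeding on it certifies width $1$, which by Lemma~\ref{lemma-width1} is equivalent to the existence of a set polymorphism. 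I expect the delicate part to be checking that the particular instance one runs arc consistency on — built from $\cP(\bbH)$ — is the correct canonical instance and that the extracted solution genuinely respects the set-polymorphism identities (in particular monotonicity/consistency across different subsets), rather than just being some homomorphism. This is where I would lean directly on the cited equivalence rather than reprove it, and simply verify that the exponential-size instance can be built and processed within an EXPTIME budget.
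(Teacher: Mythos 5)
There is a genuine gap, and it is fatal to the proposed test. Your algorithm runs arc consistency on the instance $\cP(\bbH)$ and accepts iff no domain $S(X)$ becomes empty. But by the very definition of the power structure, the family $S(X)=X$ is already arc consistent: if $(X_1,\dots,X_k)\in\theta'$ and $a\in X_i$, then there is a tuple of $\theta$ in $\prod_j X_j$ with $i$-th entry $a$, so no element of $X_i$ is ever pruned from $S(X_i)$ as long as $S(X_j)\supseteq X_j$ for all $j$ (which holds initially and is preserved by induction). Hence the fixpoint always satisfies $S(X)\supseteq X\neq\emptyset$, and your test accepts every structure --- including, say, affine structures, which have no set polymorphism. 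The converse direction of your claimed equivalence is where the circularity lies: ``from a nonempty arc-consistent reduction one reads off a set polymorphism / a solution'' is exactly the statement that $\bbH$ has width~$1$, i.e.\ that $\bbH$ already has a set polymorphism. Arc consistency is sound but not complete for general $\bbH$, so a nonempty fixpoint on $\cP(\bbH)$ certifies nothing. (Do not confuse ``AC does not derive empty on $\cP(\bbH)$'' --- always true --- with ``there is a homomorphism $\cP(\bbH)\to\bbH$'' --- the condition you actually want to decide.)

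The paper's proof avoids this by a different route. It first reduces to the core $\bbR$ of $\bbH$ (computable in EXPTIME), using the fact that $\bbH$ has a set polymorphism iff $\bbR$ has an \emph{idempotent} one; this matters because the bounded-width guarantee of Theorem~\ref{thm:bw-cores} is about $\widehat{\bbR}$ and idempotent operations. It then applies the self-reduction of Lemma~\ref{bwalgo} to the exponential-size instance $\cP(\bbR)$: run $(2,3)$-consistency, then fix the value of one element at a time, re-running consistency after each choice, and finally \emph{verify} that the resulting candidate set function is a polymorphism. Correctness in the positive case comes from the fact that if the idempotent set polymorphism exists then $\CSP(\widehat{\bbR})$ has bounded width, so the consistency algorithm never gives a false positive during the value-fixing; the final verification step catches the negative case. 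If you want to salvage your approach, you would need to replace the one-shot nonemptiness test by this kind of value-fixing self-reduction (with a consistency notion whose completeness is guaranteed under the hypothesis being tested) plus a final polymorphism check, together with the reduction to the core.
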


\begin{proof} Notice that a structure $\bbH$ has a set polymorphism if and only if its core has an idempotent set polymorphism (see also the proof of Lemma \ref{lemma-down-to-core}). Indeed, let $\bbR$ denote the core of $\bbH$, with $R \subseteq H$, and let $r$ be a retraction of $\bbH$ onto $\bbR$, i.e. $r$ is an onto homomorphism and $r(x) = x$ for all $x \in R$. If $f$ is a set polymorphism of $\bbH$, then it is easy to see that the restriction $g$ of $r \circ f$ to subsets of $R$ is a set polymorphism for $\bbR$. Notice also that the one element sets $\{x\}$ with $x \in R$ induce an isomorphic copy of $\bbR$ in $\cP(\bbR)$; since $\bbR$ is a core the restriction of $g$ to this substructure induces an automorphism $\sigma$ of $\bbR$; then $\sigma^{-1} \circ g$ is an idempotent set polymorphism of $\bbR$. Conversely, if $g$ is a set polymorphism of $\bbR$, then define a set function $f$ on $H$ by $f(X) =g \left( \{r(x):x \in X\} \right)$. It is straightforward to verify that $f$ is a set polymorphism for $\bbH$. 

So now we proceed as follows: we first find the core of $\bbH$.  Loop over all mappings $f: H \rightarrow H$, but in a fashion that increases the cardinality of the size of the image of $f$. Check each mapping for being an endomorphism; once an endomorphism is found, the image of the original structure under the endomorphism is a core. This computation can be done in PSPACE and hence in EXPTIME.

Now that we have the core $\bbR$ of $\bbH$, we can test, in EXPTIME, if it admits an idempotent set polymorphism, in  the  manner of
 Lemma~\ref{bwalgo}: indeed, if the set polymorphism exists, and since $\bbR$ is a core,  then $\CSP(\widehat{\bbR})$ has bounded width; using the power structure $\cP(\bbR)$ as our instance, and fixing values one at a time, we shall either reject or eventually obtain a candidate set function which we can test for being a polymorphism. 
\end{proof}

As seen (Lemma~\ref{lemma-width1}), detecting for a set polymorphism
is equivalent to checking for the presence of TS polymorphisms of all arities.  In the quest to improve the complexity upper bound just given,
a natural question that one might ask
 is whether or not it suffices to check
for TS polymorphisms up to some bounded arity,
in order to ensure TS polymorphisms of all arities.
The following proposition answers this question in the negative.

\begin{prop}
 For every prime $p \geq 3$, there exists a digraph $\bbH$ with $p$ vertices that admits TSI polymorphisms of all arities strictly less than $p$ but no TS (in fact, no cyclic) polymorphism of arity $p$. 
 \end{prop}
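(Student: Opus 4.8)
The plan is to produce, for each prime $p \geq 3$, an explicit digraph on $p$ vertices whose automorphism group contains a $p$-cycle acting transitively, chosen so that the ``wraparound'' obstruction to cyclic polymorphisms kicks in at arity exactly $p$. The natural candidate is a \emph{circulant digraph} on the vertex set $\mathbb{Z}_p$: fix a connection set $S \subseteq \mathbb{Z}_p \setminus \{0\}$ and let $\bbH$ have edge set $\{(a, a+s) : a \in \mathbb{Z}_p, s \in S\}$. First I would observe that $\bbH$ is a core: since $\gcd$ arguments over $\mathbb{Z}_p$ with $p$ prime make the digraph vertex-transitive and (for suitable $S$) without loops, any endomorphism that is not surjective would collapse a directed structure in a way one rules out by a counting/connectivity argument; alternatively one picks $S$ so that $\bbH$ is a directed cycle-like structure whose core status is transparent.

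The heart of the argument is to show that $\bbH$ admits totally symmetric idempotent (TSI) polymorphisms of every arity $k < p$ but no cyclic polymorphism of arity $p$. For the negative direction, suppose $f : H^p \to H$ is cyclic. Apply $f$ along a ``diagonal'' family of tuples of the form obtained by cyclically shifting a fixed tuple that traces out an edge-consistent walk; cyclicity forces $f$ evaluated on the shift to equal $f$ evaluated on the original, while the polymorphism (homomorphism) condition forces the image tuple to be an edge. Running around the $p$-cycle, one accumulates a constraint that says some element of $\mathbb{Z}_p$ equals itself plus a nonzero multiple of the total ``displacement'' contributed by $S$ over $p$ steps; because $p$ is prime and the displacement is a nonzero element of $\mathbb{Z}_p$, this forces a contradiction. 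The point is that the contradiction requires exactly $p$ coordinates to wrap the cycle fully, so it is invisible at smaller arities. This is essentially the same flavour of argument as in Lemma~\ref{types}, where an Abelian-group relation is preserved on a subset and a conservativity/commutativity identity is pushed to a contradiction $c + c = c$; here the role of the group is played by $\mathbb{Z}_p$ itself and the role of the identity by cyclicity.

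For the positive direction, I would exhibit the TSI polymorphisms directly: for $k < p$ and a $k$-subset $\{x_1, \dots, x_k\} \subseteq \mathbb{Z}_p$, define $f_k(\{x_1,\dots,x_k\})$ to be (say) the ``median'' or an appropriately canonical representative of the set with respect to a cyclic order — more robustly, use the characterization in Lemma~\ref{lemma-width1} and Corollary~\ref{cor:idempotent-ts-p} indirectly, but since we need arities only up to $p-1$, a hands-on definition is cleanest. One must check: (i) idempotence, which is immediate; (ii) total symmetry, immediate since the value depends only on the set; (iii) the polymorphism condition, i.e. that applying $f_k$ coordinatewise to $k$ edges of $\bbH$ yields an edge. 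For (iii) the key is that with fewer than $p$ arguments one never ``wraps around'' $\mathbb{Z}_p$, so the set of displacements stays confined and the chosen representative shifts compatibly; this is where the choice of $S$ and of the canonical representative must be coordinated. I expect step (iii), verifying the polymorphism property for the concrete family $f_k$, to be the main obstacle: it requires choosing $S$ and the selection rule so that edge-preservation is genuine for all $k < p$ while still failing the cyclic identity at $k = p$. A clean way to finesse this is to take $\bbH$ to be a structure for which width~1 is ``almost'' present — e.g.\ a tournament-like or transitive-closure-like circulant — so that TSI polymorphisms of bounded arity exist for soft reasons (bounded-width type arguments) while the parity/divisibility obstruction at arity $p$ is exact. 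Once both directions are in place, the proposition follows, and noting that a cyclic polymorphism of arity $p$ would be implied by a symmetric or TS one gives the parenthetical strengthening for free.
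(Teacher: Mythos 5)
Your choice of structure is essentially the paper's: the paper takes the directed cycle on $\mathbb{Z}_p$ (the circulant with connection set $S=\{1\}$), and your negative direction, once unwound, is the paper's one-line argument --- the tuples $(0,1,\dots,p-1)$ and $(1,2,\dots,p-1,0)$ are coordinatewise joined by arcs, so a polymorphism sends them to an arc, while cyclicity forces the two images to be equal; since the digraph has no loop, this is a contradiction. You do not need any ``accumulated displacement'' bookkeeping, and no core argument is needed at all (the statement is not about cores).

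The genuine gap is in the positive direction, and you flag it yourself as ``the main obstacle.'' Neither of your two suggestions closes it. A ``median'' or ``canonical representative with respect to a cyclic order'' is not well-defined on $\mathbb{Z}_p$ and there is no reason it shifts by $+1$ when every argument shifts by $+1$, which is exactly what edge-preservation demands. The fallback to ``soft bounded-width type reasons'' cannot work either: having TS polymorphisms of \emph{all} arities is equivalent to width~1 (Lemma~\ref{lemma-width1}), and the directed $p$-cycle does \emph{not} have width~1 precisely because it lacks a TS polymorphism of arity $p$; no soft argument will hand you TS polymorphisms of every arity below $p$ while stopping at $p$. The missing idea is the paper's explicit construction: given $(x_1,\dots,x_k)$ with $k<p$, let $x_{i_1},\dots,x_{i_t}$ be the distinct elements of $\{x_1,\dots,x_k\}$ and set $f(x_1,\dots,x_k)=t^{-1}(x_{i_1}+\cdots+x_{i_t}) \bmod p$, where $t^{-1}$ exists because $1\le t<p$ and $p$ is prime. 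This is TS and idempotent by inspection, and it is a polymorphism because applying an arc to every coordinate shifts each distinct representative by $1$, preserves $t$, and hence shifts the value by $t^{-1}\cdot t=1$. This ``averaging'' trick is exactly where the primality of $p$ and the bound $k<p$ enter, and it is the piece your proposal lacks.
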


\begin{proof}
Consider 
the directed cycle of length $p$, 
i.e. vertices $\{0,1,\dots,p-1\}$ and arcs $(i,i+1)$ for all $i$ (modulo $p$). 
Since this digraph has no loop, it cannot admit a cyclic polymorphism of arity $p$ because 
there is an arc from $f(0,1,\cdots,p-1)$ to $f(1,2,\cdots,p-1,0)$. 
On the other hand, if $2 \leq k < p$, define an operation $f$ as follows: given a tuple $(x_1,\dots,x_k)$, let $x_{i_1},\dots,x_{i_t}$ be the distinct representatives of the set $\{x_1,\dots,x_k\}$. Since $1 \leq t < p$, it admits a multiplicative inverse $t^{-1}$ modulo $p$. Then set
$$f(x_1,\cdots,x_k) = t^{-1}(x_{i_1}+\cdots + x_{i_t})$$ where the sum  is modulo $p$. It is straightforward to verify that this is a TSI polymorphism of the cycle. 
\end{proof}

\subsection{Conservative commutative polymorphisms}
\label{subsect:cc-polymorphisms}

We now prove an NL upper bound on the detection of
conservative commutative polymorphisms.  
Note that detecting such polymorphisms can be performed in 
polynomial time, by
Corollary~\ref{cor:even-conservative}.
However, the algorithm thusly given itself relies on the fact
that a CSP having a commutative conservative polymorphism
is decidable in polynomial-time; 
note that such a CSP can be complete for polynomial time
(this occurs even in the case of the polymorphisms $\wedge$ and $\vee$ on the domain $\{ 0, 1 \}$, see for 
example~\cite{ABISV09-refining}).
We believe that the present result is thus interesting
as it improves this polynomial-time upper bound.

\begin{theorem} 
\label{thm:cc-nl}
Deciding if a structure admits a conservative binary commutative polymorphism is solvable in non-deterministic logspace. \end{theorem}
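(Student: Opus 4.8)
The plan is to reduce the metaquestion to a reachability-type question that non-deterministic logspace can handle. First I would recall the standard reformulation: $\bbH$ has a conservative binary commutative polymorphism if and only if the structure $\bbH'$, obtained from $\bbH$ by adjoining all non-empty subsets of $H$ as unary relations, has a binary commutative (equivalently, $2$-TS) polymorphism. Such a polymorphism is a homomorphism $t$ from $\cP_2 := (\bbH')^2$ to $\bbH'$ that is symmetric, i.e., $t$ factors through the quotient of $(\bbH')^2$ by the involution $(a,b)\mapsto(b,a)$. The conservativity forces, for each unordered pair $\{a,b\}$, that $t(\{a,b\})\in\{a,b\}$, so the "choice" to be made at each pair is binary. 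The key combinatorial object is therefore a digraph (or implication structure) on these binary choices: picking one value at a pair propagates forced values at other pairs, via the constraints coming from the basic relations of $\bbH$.

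The heart of the argument is to show that, because each pair has only two candidate values, the constraints take the form of a $2$-SAT-like system, so satisfiability is decidable in NL (via the Immerman--Szelepcs\'enyi / $\mathsf{2SAT}\in\mathsf{NL}$ machinery; note $\mathsf{NL}$ is closed under complement). Concretely, I would introduce a Boolean variable $x_{\{a,b\}}$ for each two-element subset $\{a,b\}\subseteq H$, with the convention that $x_{\{a,b\}}=1$ means $t(a,b)=a$ under a fixed linear order on $H$ (and set $t(a,a)=a$, which is forced by conservativity). The requirement that $t$ preserve a basic $k$-ary relation $\theta$ of $\bbH$ says: for every pair of tuples $(a_1,\dots,a_k),(b_1,\dots,b_k)\in\theta$, the tuple $(t(a_1,b_1),\dots,t(a_k,b_k))$ lies in $\theta$. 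The crucial point — and the main obstacle — is to verify that this family of constraints is equivalent to a conjunction of $2$-clauses over the $x_{\{a,b\}}$: each individual "bad" combination that would take us out of some $\theta$ must be excludable by forbidding a pattern on at most two of the pairs. I expect this to follow because preservation of a relation by a conservative binary operation, when analysed coordinatewise between two fixed tuples, reduces to pairwise implications between the coordinate-choices; one shows that if a combined tuple fails to be in $\theta$ then already some pair of coordinates is "responsible", giving a $2$-clause. (If a naive coordinatewise analysis gives longer clauses, one instead argues that the whole constraint system has the structure of a "bijunctive"/median-closed relation on the $x$-variables, exploiting that $\min$ or majority on the Booleans is available; I would check which of these phrasings is cleanest.)

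Granting the $2$-SAT reduction, the NL algorithm is then standard: build the implication graph on literals $x_{\{a,b\}}, \lnot x_{\{a,b\}}$ — which has polynomially many vertices and whose edges are checkable in logspace by scanning the input listing of the relations — and accept iff no literal $\ell$ satisfies both $\ell \leadsto \lnot\ell$ and $\lnot\ell \leadsto \ell$ by directed paths; this last condition is an instance of (co-)reachability and hence lies in $\mathsf{NL}=\mathsf{coNL}$. One should also fold in the trivial unary constraints from the added subsets, but since conservativity already restricts $t(a,b)$ to $\{a,b\}\subseteq$ every subset containing both, these impose nothing new. Finally I would remark that the construction produces, when satisfiable, an explicit commutative conservative polymorphism (read off from a satisfying assignment), matching the "creation" flavour elsewhere in the paper, though that is not needed for the stated NL bound.
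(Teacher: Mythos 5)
Your architecture is the right one and is essentially the paper's: introduce one Boolean variable per two-element subset $\{a,b\}$ of the domain (recording whether $t(a,b)=a$ or $b$, which conservativity makes a binary choice), translate preservation into a Boolean CSP over these variables whose constraint relations are invariant under a good Boolean operation, and finish in NL. But the step you yourself flag as ``the crucial point --- and the main obstacle'' is left unproven, and the primary route you propose for it is false. The naive constraint ``for every pair of tuples $a,b\in\theta$, the combined tuple $(t(a_1,b_1),\dots,t(a_k,b_k))$ lies in $\theta$'' is \emph{not} in general expressible by $2$-clauses on the pair-variables. Concretely, take $\theta=\{(0,2,4),(1,3,5),(1,2,4),(0,3,4),(0,2,5)\}$ and the tuples $a=(0,2,4)$, $b=(1,3,5)$: the three coordinates involve the three distinct pairs $\{0,1\},\{2,3\},\{4,5\}$, and the set of admissible choice-vectors is $\{nnn,xxx,xnn,nxn,nnx\}$ (writing $n$/$x$ for choosing the $a$-entry/$b$-entry). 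This relation is not closed under the Boolean majority operation ($\mathrm{maj}(xxx,xnn,nxn)=xxn$ is excluded), hence by Schaefer's correspondence it is not bijunctive, and no conjunction of $2$-clauses on these three variables defines it. So a ``bad combination'' need not be witnessed by only two of the pairs, and the implication-graph argument does not get off the ground with these constraints.

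The paper's proof supplies exactly the missing idea. For each relation $S$ and each pair of tuples $s,s'\in S$ it passes to the restricted sub-relation $S^{s,s'}=\{t\in S : t_i\in\{s_i,s'_i\}\ \forall i\}$ (with singleton coordinates projected away), and the constraint it imposes is \emph{not} ``the image of $(s,s')$ lands in $S$'' but the stronger condition that the partial operation $g_r$ determined by the choice-vector $r$ preserves the whole sub-relation $S^{s,s',-}$; the set $[S^{s,s',-}]$ of such good $r$ is what becomes the Boolean constraint relation. The conjunction of these refined constraints over all $(S,s,s')$ is equivalent to the conjunction of the naive ones (so the encoding is still correct), but now each individual constraint relation \emph{is} majority-closed --- the paper proves this by exhibiting $g_{\mathrm{maj}(r^1,r^2,r^3)}$ as the composition $g_{r^3}(g_{r^1},g_{r^2})$, so that it preserves $S^{s,s',-}$ whenever the three components do. (In the example above, $[S^{s,s',-}]$ collapses to the single tuple $nnn$.) This composition lemma, together with the check that the relations $[S^{s,s',-}]^*$ can be enumerated in logspace, is the real content of the theorem; your proposal names the correct fallback (``median-closed, exploiting majority on the Booleans'') but without redefining the constraints as above there is no true statement of that form to prove.
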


In the scope of this proof, we refer to a
conservative binary commutative function as a \emph{cc-operation}.
For a binary operation $f: D^2 \to D$, 
and a $2$-element subset $C = \{ c, c' \}$ of $D$,
we will use the notation 
$f(C) = d$ to indicate that
$f(c,c') = f(c',c) = d$; 
also, if $f$ is a cc-operation, we will use
$f(C)$ to denote the value $f(c,c') = f(c',c)$.

\begin{proof}
Throughout, $S$ will denote a relation of the structure,
and $D$ will denote the domain of the structure.
For each relation $S$ of the structure and
each pair of tuples $s, s' \in S$,
define $S^{s,s'}$ to be the relation
$\{ t \in S ~|~ t_i \in \{ s_i, s'_i \} \textup{ for all $i$ } \}$.
We claim that for a relation $S$ of the structure,
a cc-operation $f$ preserves $S$ if and only if,
for each $s, s' \in S$,
it preserves $S^{s,s'}$.
For the forward direction, let $r, r' \in S^{s,s'}$.
It holds that $f(r,r') \in S$.  
But since $r_i, r'_i \in \{ s_i, s'_i \}$ for all $i$,
it holds by the conservativity of $f$ that
$f(r_i, r'_i) \in \{ s_i, s'_i \}$ for all $i$,
and so $f(r,r') \in S^{s,s'}$.
For the backward direction,
suppose that $r, r' \in S$.  It holds, by assumption,
that $f(r,r') \in S^{r,r'}$.
But since $S^{r,r'} \subseteq S$, we have $f(r,r') \in S$.

Now define $S^{s,s',-}$ to be the relation
obtained from $S^{s,s'}$ by projecting out
each coordinate $i$ such that the projection
$\pi_i(S^{s,s'})$ contains just one element.
Let $s^-, s'^{-}$ be the tuples in 
$S^{s,s',-}$ corresponding to $s$ and $s'$, respectively;
so, for each $i$ it holds that
$\pi_i(S^{s,s',-}) = \{ s^-_i, s'^{-}_i \}$.
As any cc-operation is idempotent,
we have that a cc-operation preserves 
all of the relations $S^{s,s'}$ if and only if
it preserves all of the relations $S^{s,s',-}$.
By the claim of the previous paragraph, we obtain:

{\bf Claim 1:}
A cc-operation preserves the relations of the structure
if and only if it preserves all of the relations
$S^{s,s',-}$.

For each relation $S^{s,s',-}$ and each tuple $r \in S^{s,s',-}$,
define $g_r: D^2 \to D$ to be the partial operation where
$g_r(\pi_i(S^{s,s',-})) = r_i$ for each $i$,
and $g_r(d,d) = d$ for each $d \in D$.
Each cc-operation $f$ that preserves $S^{s,s',-}$
must map the tuples $s^-, s'^{-}$ to a tuple $r \in S^{s,s',-}$;
hence, we obtain that such an $f$
must be an extension of $g_r$, for some 
tuple $r \in S^{s,s',-}$.
But observe that when such an $f$ extends such a $g_r$,
it holds that the partial operation $g_r$ itself preserves
$S^{s,s',-}$.
Define $[S^{s,s',-}]$ to be the subset of 
$S^{s,s',-}$ which contains a tuple $r \in S^{s,s',-}$
if and only if $g_r$ preserves $S^{s,s',-}$.
We thus obtain the following.

{\bf Claim 2:} A cc-operation $f$ preserves $S^{s,s',-}$
if and only if there exists a tuple $r \in [S^{s,s',-}]$
such that $f$ extends $g_r$.

Suppose, for the remainder of the proof,
that the domain $D$ of the structure 
is a subset of the integers.
Consider a relation $[S^{s,s',-}]$ of arity $k$.
Each tuple $r \in [S^{s,s',-}]$ can be naturally mapped
to a tuple $r^*$ in $\{ n, x \}^k$, where 
$r^*_i$ is equal to $n$ or $x$ depending on whether or not
$r_i$ is the mi{\bf n} or ma{\bf x} of $\pi_i(S^{s,s',-})$.
Define 
$[S^{s,s',-}]^* = \{ r^* ~|~ r \in S^{s,s',-} \}$.
In an analogous fashion, 
each cc-operation $f: D^2 \to D$ can be naturally viewed
as an operation $f^*: {D \choose 2} \to \{ n, x \}$, 
defined, for each $C \in {D \choose 2}$,
 by $f^*(C) = n \textup{ or } x$ depending on whether 
$f(C) = \min(C) \textup{ or } f(C) = \max(C)$.

From this discussion and the first two claims,
we obtain the following.

{\bf Claim 3:}
A cc-operation $f$ preserves the structure
if and only if 
it holds that, for each relation $S^{s,s',-}$ (of arity $k$),
the tuple
$(f^*(\pi_1(S^{s,s',-})), \ldots, f^*(\pi_k(S^{s,s',-})))$ is in
$[S^{s,s',-}]^*$.

We now observe the following.

{\bf Claim 4: }
Each relation of the form $[S^{s,s',-}]^*$ is
closed under the majority operation $m$ on $\{ n, x \}$.

We prove this claim as follows.
Set $(C_1, \ldots, C_k) =
(\pi_1(S^{s,s',-}), \ldots, \pi_k(S^{s,s',-}))$.
Let $t^1, t^2, t^3$ be arbitrary tuples in
$[S^{s,s',-}]^*$, 
and let $r^1, r^2, r^3$ be the corresponding
tuples in $[S^{s,s',-}]$.
Set $t = m(t^1, t^2, t^3)$,
and let $r$ be the tuple over $D$ that corresponds to $t$.
We show that for any $C_j$,
it holds that 
$g_r(C_j) = g_{r^3}(g_{r^1}(C_j),g_{r^2}(C_j))$,
which suffices (via the definition of $[S^{s,s',-}]$): 
$g_r$ is then obtainable
by composing the $g_{r^i}$, implying that $g_r$ preserves $S^{s,s',-}$.
We verify the identity as follows.
\begin{itemize}

\item If $t^1_j = t^2_j$, then 
$t_j = t^1_j = t^2_j$ and
$g_r(C_j) =  g_{r^1}(C_j) = g_{r^2}(C_j)$; thus
this last value is equal to 
$g_{r^3}(g_{r^1}(C_j),g_{r^2}(C_j))$,
by the idempotence of $g_{r^3}$.

\item If $t^1_j \neq t^2_j$, 
then $t_j = t^3_j$.
We have $\{ g_{r^1}(C_j),g_{r^2}(C_j)) \} = C_j$,
and so 
$g_{r^3}(g_{r^1}(C_j),g_{r^2}(C_j)) = g_r^3(C_j)$.
\end{itemize}

To determine whether or not 
there exists a cc-operation $f$ that preserves the original structure,
by Claim 3, one may determine whether or not the following
CSP instance has a solution.
The variable set is  $\{ X_C ~|~ C \in {D \choose 2} \}$,
where the variable $X_C$ represents the value $f^*(C)$;
for each relation of the form $S^{s,s',-}$ (of arity $k$),
there is a constraint stating that
the variable tuple 
$(X_{\pi_1(S^{s,s',-})}, \ldots, X_{\pi_k(S^{s,s',-})})$
must be mapped to a value in $[S^{s,s',-}]^*$.
By Claim 4, each of these relations is preserved by
 the majority operation $m$,
 and it is known that one can decide in NL all
 CSP instances on a two-element domain
 with a majority polymorphism~\cite{ABISV09-refining}.

Hence, it suffices to argue that each of the relations
$[S^{s,s',-}]^*$ can be computed in logspace.
To compute these relations, the algorithm loops over
each relation $S$ and each pair $s, s' \in S$.
For each tuple $t \in S$, it checks to see if
the tuple $t^- \in S^{s,s',-}$ corresponding to $t$
has the property that $g_{t^-}$ preserves $S^{s,s',-}$;
in the affirmative case, it holds that $t^-$ is in 
$[S^{s,s',-}]$, and so the algorithm outputs $(t^-)^*$.
The check  can be carried out in logspace as follows.  
Let $k$ denote the arity of $S$.
The algorithm loops over all pairs of tuples $u, u'$ in 
$S^{s,s'}$; for each such pair, the algorithm 
loops on $i = 1, \ldots, k$; for each value of $i$,
it attempts to find a tuple in $S^{s,s'}$
that is equal to $g_{t^-}(u,u')$ on the first $i$ coordinates,
and store a pointer to such a tuple.
For a particular value of $i$, it can loop over all tuples
in $S^{s,s'}$ and, for those having the correct value in the $i$th
coordinate, it can check for correctness in the first $i-1$ coordinates
by comparing with the tuple obtained after the $i-1$th iteration.
\end{proof}


\section{Complexity hardness results} 

In this section we prove various hardness results for the existence of ``good'' polymorphisms on finite structures.

\subsection{A generic NP-hardness result for Maltsev conditions}
\label{subsect:generic-hardness}
We here give a rather general result that applies to a wide range of Maltsev conditions. 
In order to present the theorem statement, we  introduce the following definitions.

\begin{definition}  Let $\cM$ be a strong, linear Maltsev condition. 
\begin{itemize}
\item We say that $\cM$  is {\em non-trivial} if, on a domain with at least 2 elements, no tuple of projections can satisfy it. 

\item $\cM$ is of {\em height 1} if its identities all have height 1. 
\item We'll say $\cM$ is {\em consistent} if for every non-empty finite set $D$, there exist idempotent operations on $D$ that satisfy $\cM$. 
\end{itemize}
\end{definition}

The following is the statement of the main theorem proved in this subsection. 

\begin{theorem} 
\label{thm:np-completeness-maltsev-condition}
Let $\cM$ be a non-trivial, consistent, strong linear Maltsev condition of height 1. The problem of deciding if a relational structure satisfies $\cM$ is NP-complete (even when restricted to at most binary relational structures). 
\end{theorem}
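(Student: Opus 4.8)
The plan is to prove membership in NP by a direct guess-and-verify, and NP-hardness by reducing from graph $3$-colorability, i.e.\ from $\csp(\bbK_3)$ where $\bbK_3$ is the structure on $\{0,1,2\}$ with the single binary relation $\{(i,j):i\neq j\}$; this problem is NP-complete and its template is at most binary, which makes it a natural source for producing at most binary instances. For membership: since $\cM$ is fixed, the operation symbols appearing in it have bounded arity $d$, so for a structure $\bbH$ with $|H|=n$ a candidate witness is a list $f_1,\dots,f_m$ of operation tables of total size polynomial in $\bbH$; one then checks in polynomial time that each $f_i$ preserves each basic relation $\theta$ (trying the at most $|\theta|^{d}$ relevant argument tuples) and that $f_1,\dots,f_m$ satisfy each identity of $\cM$ (over the at most $n^{2d}$ relevant assignments). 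Hence the metaquestion lies in NP.

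For NP-hardness I would first record a few consequences of the hypotheses. Since $\cM$ has height $1$, every identity of $\cM$ carries an operation symbol on both sides; hence (a) if a structure has a constant polymorphism---equivalently, a constant endomorphism, i.e.\ every basic relation is empty or contains the appropriate constant tuple---then interpreting all operation symbols as constants witnesses $\cM$; and (b) homomorphically equivalent structures satisfy the same height-$1$ conditions: if $\bbG'\models\cM$ via polymorphisms $f_i$ and $h\colon\bbG'\to\bbG$, $h'\colon\bbG\to\bbG'$ are homomorphisms, then the operations $h\circ f_i\circ(h')^{n_i}$ are polymorphisms of $\bbG$ satisfying the same height-$1$ identities, so $\bbG\models\cM$. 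In particular, a structure having a retract that fails $\cM$ itself fails $\cM$. Fixing any at most binary structure $\bbP_0$ whose only polymorphisms are projections---for instance the structure on $\{0,1\}$ with relations $\{(0,0),(0,1),(1,1)\}$ and $\{(0,1),(1,0)\}$---non-triviality of $\cM$ gives $\bbP_0\not\models\cM$, and by (b) the same holds for any structure homomorphically equivalent to, or having a retract isomorphic to, $\bbP_0$.

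The reduction then maps a graph $G$ to an at most binary structure $\bbH_G$, built by attaching the rigid gadget $\bbP_0$ to a $3$-coloring gadget formed from $G$, and designed so that: if $G$ is not $3$-colorable then $\bbP_0$ is a retract of $\bbH_G$, whence $\bbH_G\not\models\cM$; while if $G$ is $3$-colorable then the coloring, together with idempotent operations on a small set witnessing $\cM$ (such operations exist by consistency of $\cM$), can be assembled into polymorphisms of $\bbH_G$ satisfying $\cM$---in favourable designs this simply amounts to $\bbH_G$ being homomorphically equivalent to a structure that satisfies $\cM$, e.g.\ a one-element structure, via (a). Thus $\bbH_G\models\cM$ if and only if $G$ is $3$-colorable; since $\bbH_G$ is at most binary and polynomial-time computable from $G$, this establishes NP-hardness even for at most binary inputs.

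The main obstacle is the construction and analysis of $\bbH_G$: one must arrange simultaneously that the failure of $3$-colorability forces $\bbP_0$ to persist as a retract (preserving the obstruction to $\cM$), that a proper $3$-coloring collapses the rigid part so that what remains genuinely admits $\cM$-polymorphisms, and that every relation used has arity at most two. The harder direction is the colorable case: ensuring that it \emph{satisfies} $\cM$---rather than merely failing to obviously violate it---is exactly where the hypotheses that $\cM$ has height $1$ (to transport $\cM$ across homomorphic equivalence and to exploit constant polymorphisms) and is consistent (to supply idempotent $\cM$-operations on the small collapsed target) do the real work.
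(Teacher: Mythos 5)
Your NP-membership argument and your identification of the two transport facts for height-1 conditions (constant polymorphisms witness $\cM$; homomorphic equivalence preserves satisfaction of $\cM$) are correct and match the role played by Lemma~\ref{lemma-down-to-core} in the paper. The reduction source (3-colourability) is also the one the paper uses. However, the proof as written has a genuine gap: the gadget $\bbH_G$ is never constructed, and you yourself flag its construction and analysis as ``the main obstacle.'' That gadget \emph{is} the proof --- in the paper it is Lemma~\ref{lemma-all-or-nothing}, a delicate construction on $V(\bbG)\times\{1,2,3\}$ with relations $R_e=\{(u_i,v_j):i\neq j\}$, whose core is either a rigid structure with singleton relations (when $\bbG$ is 3-colourable, so that \emph{every} idempotent operation is a polymorphism and consistency of $\cM$ applies) or $\bbB$ itself (when $\bbG$ is not 3-colourable, in which case idempotent polymorphisms are forced to be projections on a 3-element set). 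A sketch that only states the desired properties of the gadget does not establish the theorem.

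Moreover, the one concrete ingredient you do supply is wrong. The structure $\bbP_0$ on $\{0,1\}$ with relations $\{(0,0),(0,1),(1,1)\}$ and $\{(0,1),(1,0)\}$ does \emph{not} have only projections as polymorphisms: the Boolean majority operation is monotone and self-dual, hence preserves both relations, and since majority is totally symmetric, $\bbP_0$ actually \emph{satisfies} several of the conditions $\cM$ covered by the theorem (e.g.\ 3-TS). In fact no at most binary structure on a two-element domain can be projection-only, since every at most binary Boolean relation is bijunctive. This is why the paper's ``rigid obstruction'' is the complete graph on three vertices, and why it is only pp-defined \emph{using constants}: the resulting rigidity constrains \emph{idempotent} polymorphisms only, which is exactly what allows the non-idempotent retraction onto the nice core to exist in the colourable case. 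Your plan of attaching a genuinely rigid $\bbP_0$ as a subobject would have to reconcile its persistence as a substructure with the required collapse in the colourable case; the paper's resolution of this tension (idempotent-only rigidity plus Lemma~\ref{lemma-down-to-core} to pass from $\cM=\cM_q$ on $\bbB$ to idempotent satisfaction on the core) is precisely the idea missing from your proposal.
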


In order to present some examples of Maltsev conditions
to which Theorem~\ref{thm:np-completeness-maltsev-condition}
applies, we introduce the following definition.

\begin{definition}  Let $\cM$ be a strong, linear Maltsev condition. Let $\cM_q$ be the strong, linear Maltsev condition obtained from $\cM$ by replacing every identity of the form $f(x_{i_1},\dots, x_{i_n}) \approx x_{i_j}$ by the identity $f(x_{i_1},\dots, x_{i_n}) \approx f(x_{i_j},\dots, x_{i_j})$. \end{definition}

 If $\cM$ is the Maltsev condition defining Siggers (Maltsev, $k$-NU, ...), we'll say that an operation satisfying $\cM_q$ is {\em quasi-Siggers (Maltsev, $k$-NU, ...)}. 

 Our reason for introducing 
 (for example) quasi-Maltsev operations is this.
As mentioned,
 it is known that a structure $\bbH$ and its core $\bbH'$
 each give rise to the same CSP, that is,
 $\CSP(\bbH) = CSP(\bbH')$.
 Hence, if the core $\bbH'$ has a polymorphism 
 known to imply tractability of $\CSP(\bbH')$,
 such as a Maltsev polymorphism,
it follows 
that the problem $\CSP(\bbH)$ is also tractable.
As the following lemma implies, 
the assertion that a structure has a quasi-Maltsev polymorphism 
characterizes precisely when its core has a Maltsev polymorphism,
and hence precisely characterizes the sufficient condition
for tractability just described.
We remark that the notion of 
quasi- versions of operations
has been previously considered in the literature,
in particular, within the context of 
infinite-domain constraint satisfaction;
 see for example~\cite{BodirskyChen07-oligomorphic,BodirskyChen09-qualitative}.

    It is probable that some version of the following lemma is implicit in the literature (see also \cite{BartoOprsalPinsker15-wonderland}). At any rate, its proof is straightforward.
    
\begin{lemma} \label{lemma-down-to-core} 
Let $\bbB$ be a relational structure and let $\cM$ be a strong linear Maltsev condition. 
The following are equivalent:
\begin{enumerate}
\item $\bbB$ satisfies $\cM_q$;
\item The core of $\bbB$ satisfies $\cM$;
\item The core of $\bbB$ satisfies $\cM$ via idempotent operations.
\end{enumerate}
 \end{lemma}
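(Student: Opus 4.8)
The plan is to prove the three implications $(3) \Rightarrow (2) \Rightarrow (1) \Rightarrow (3)$, or more efficiently $(2) \Leftrightarrow (3)$ as a preliminary and then $(1) \Leftrightarrow (2)$. Throughout, write $\bbC$ for the core of $\bbB$, fix a retraction $r : \bbB \to \bbC$ and an embedding $e : \bbC \to \bbB$ with $r \circ e = \mathrm{id}_C$, and recall that $e$ identifies $\bbC$ with a substructure of $\bbB$ so that $r$ restricted to $C$ is the identity.

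First I would handle $(2) \Leftrightarrow (3)$. The direction $(3) \Rightarrow (2)$ is trivial. For $(2) \Rightarrow (3)$, suppose the core $\bbC$ has polymorphisms $f_1, \dots, f_m$ satisfying $\cM$. Since $\bbC$ is a core, every unary polymorphism (endomorphism) of $\bbC$ is an automorphism; in particular, the unary operation $u_i(x) = f_i(x, \dots, x)$ is an automorphism of $\bbC$ for each $i$. The standard trick is to post-compose: because $\cM$ is a \emph{strong linear} Maltsev condition, every identity has exactly one operation symbol on each side (or a variable on one side), so replacing each $f_i$ by $\sigma \circ f_i$ for a suitable automorphism $\sigma$ preserves satisfaction of the identities, provided we choose the $\sigma$'s consistently. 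More carefully: the $u_i$ need not all be equal, but the linear identities force relations among them; one shows that fixing any one symbol, say $f_1$, and replacing $f_i$ by $u_1^{-1} \circ f_i$ (where $u_1^{-1}$ is the inverse automorphism, extended appropriately through the identity graph) yields idempotent operations still satisfying $\cM$. This is the kind of routine-but-fiddly bookkeeping one finds in, e.g., the passage from cyclic terms to idempotent cyclic terms; I would carry it out by considering the ``identity graph'' whose vertices are operation symbols and edges are the height-$1$ identities, and noting each connected component admits a coherent correction by a single automorphism. The main subtlety here is that $\cM$ may not literally be idempotent, so one genuinely needs the core hypothesis to force the $u_i$ to be automorphisms.

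Next, $(1) \Leftrightarrow (2)$. Recall $\cM_q$ is obtained from $\cM$ by turning each identity $f(\dots) \approx x_{i_j}$ into $f(\dots) \approx f(x_{i_j}, \dots, x_{i_j})$, i.e.\ it replaces ``equals a projection'' by ``equals the diagonal value of the same symbol''; thus any \emph{idempotent} tuple of operations satisfies $\cM_q$ iff it satisfies $\cM$. For $(2) \Rightarrow (1)$: by the equivalence just proved, the core $\bbC$ has idempotent polymorphisms $f_1, \dots, f_m$ satisfying $\cM$, hence satisfying $\cM_q$. Pull these back to $\bbB$ by setting $g_i = e \circ f_i \circ (r, \dots, r)$, i.e.\ $g_i(x_1, \dots, x_k) = e(f_i(r(x_1), \dots, r(x_k)))$. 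Each $g_i$ is a polymorphism of $\bbB$ since $r$, $f_i$, $e$ are all homomorphisms (between the appropriate powers). It remains to check that the $g_i$ satisfy $\cM_q$: since every identity of $\cM_q$ is height $1$ of the form $f(\mathbf{x}) \approx g(\mathbf{y})$ with matching variable patterns, and $r$ is applied coordinatewise before $f_i$ while $e$ is applied after, the identity for the $f_i$'s on arguments $r(x_\ell)$ transfers directly to the $g_i$'s on arguments $x_\ell$ --- here one uses that $\cM_q$ has \emph{no} projection-type identities left, which is exactly why we needed to pass from $\cM$ to $\cM_q$ (an identity $f(\mathbf x) \approx x_{i_j}$ would fail, since $g_i(\mathbf x) = e(f_i(\dots))$ lands in $e(C) \neq B$ in general). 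For $(1) \Rightarrow (2)$: suppose $\bbB$ has polymorphisms $h_1, \dots, h_m$ satisfying $\cM_q$. Restrict-and-retract: define $f_i$ on $C$ by $f_i(x_1, \dots, x_k) = r(h_i(x_1, \dots, x_k))$ for $x_\ell \in C$. Each $f_i$ is a polymorphism of $\bbC$ (as $r$ maps $\bbB$ to $\bbC$ and $h_i$ preserves $\bbB$, hence its restriction to the substructure $C$). The $\cM_q$-identities for the $h_i$ restrict to $\cM_q$-identities for the $f_i$ (again using that all identities are height $1$ with matching patterns). Now the $f_i$ have unary parts $x \mapsto r(h_i(x,\dots,x))$ which are endomorphisms of the core $\bbC$, hence automorphisms; so by the argument of the previous paragraph we may correct them to idempotent polymorphisms $f_i'$ of $\bbC$ still satisfying $\cM_q$, and an idempotent tuple satisfying $\cM_q$ satisfies $\cM$. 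This gives $(2)$ (indeed $(3)$).

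The main obstacle, and the place requiring genuine care, is the correction-to-idempotent step (used in both $(2)\Rightarrow(3)$ and the final part of $(1)\Rightarrow(2)$): one must verify that composing the operations with automorphisms of the core can be done \emph{coherently across all the identities of $\cM$ simultaneously}, so that the corrected operations are idempotent and still jointly satisfy the height-$1$ identities. I would isolate this as a small lemma: if $\bbC$ is a core and $f_1, \dots, f_m$ are polymorphisms of $\bbC$ satisfying a strong linear Maltsev condition $\cN$, then there are idempotent polymorphisms satisfying $\cN$ --- proved by noting the diagonal operations $u_i$ are automorphisms, observing the height-$1$ identities force $u_i = u_j$ whenever symbols $f_i, f_j$ are linked by an identity (and that projection-type identities force $u_i = \mathrm{id}$), so within each connected component of the identity graph all $u_i$ agree, call it $\tau$, and then $\tau^{-1} \circ f_i$ is idempotent and the component's identities are preserved because precomposition structure is untouched and $\tau^{-1}$ is pulled uniformly through both sides. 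Components containing a projection-type identity already have $\tau = \mathrm{id}$. Everything else in the argument is the routine transfer of height-$1$ identities through the homomorphisms $r$ and $e$.
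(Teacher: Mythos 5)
Your proposal is correct and follows essentially the same route as the paper: restrict-and-retract ($f \mapsto r\circ f|_C$) to descend to the core, observe that the diagonal unary operations are automorphisms of the core and post-compose with their inverses (which agree across symbols linked by an identity) to achieve idempotency, and compose with the retraction to lift back to $\bbB$, using that all identities of $\cM_q$ are height $1$ so they transfer through $r$. The only difference is organizational — you normalize to idempotent operations before lifting, whereas the paper lifts arbitrary core polymorphisms and handles the projection-type identities of $\cM$ directly — and this changes nothing of substance.
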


\begin{proof} Let $\bbC$ denote the core of $\bbB$ and let $r$ be a retraction of $\bbB$ onto $\bbC$. $(1) \implies (3)$:  Let $f_1,\dots, f_s$ be polymorphisms of $\bbB$ that satisfy $\cM_q$. For each $i=1,\dots,s$ let $g_i $ denote the restriction of  $r \circ f_i$ to $C$ and let $\sigma_i(c) = g_i(c,\dots,c)$ for all $c \in C$. Since $\bbC$ is a core, each $\sigma_i$ is an automorphism of $\bbC$; we claim that the polymorphisms $\sigma_i^{-1} \circ g_i$ of $\bbC$ are idempotent and satisfy $\cM$. The first claim is immediate. Let $f_i(x_{i_1},\dots,x_{i_n}) \approx f_j(x_{j_1},\dots,x_{j_m})$ be in $\cM$. Notice first that, if we fix $c \in C$ and set $x_k = c$ for all $k$, then 
$f_i(c,\dots,c) = f_j(c,\dots,c)$, so $\sigma_i = \sigma_j$. It follows that if $a_{i_k}, a_{j_l}$ are elements of $C$ then 
$$ \sigma_i^{-1} \circ g_i(a_{i_1},\dots,a_{i_n}) =  \sigma_i^{-1} \circ r \circ f_i(a_{i_1},\dots,a_{i_n}) =  \sigma_j^{-1} \circ r \circ  f_j(a_{j_1},\dots,a_{j_m}) =  \sigma_i^{-1} \circ g_j(a_{j_1},\dots,a_{j_m}).$$
Now consider  an identity of $\cM$ of the form $f_i(x_{i_1},\dots,x_{i_n}) \approx x_{i_j}$. Since the $f_i$ satisfy $\cM_q$,  if we set $x_{i_k} = a_{i_k} \in C$ we have that $$f_i(a_{i_1},\dots,a_{i_n}) = f_i(a_{i_j},\dots,a_{i_j})$$ and hence we obtain
$$ \sigma_i^{-1} \circ g_i(a_{i_1},\dots,a_{i_n}) =  \sigma_i^{-1} \circ r \circ f_i(a_{i_1},\dots,a_{i_n}) =  \sigma_i^{-1} \circ r \circ  f_i(a_{i_j},\dots,a_{i_j}) =  \sigma_i^{-1} \circ\sigma_i(a_{i_j})  = a_{i_j}.$$

$(3) \implies (2)$ is trivial. $(2) \implies (1)$:  If $g_1,\dots,g_s$ are polymorphisms of $\bbC$ satisfying $\cM$, for each $i=1,\dots,s$ let $f_i = g_i(r(x_1),\dots,r(x_n))$ (where $n$ is the arity of $g_i$). Clearly these are polymorphisms of $\bbB$, and it is immediate that they satisfy every identity in $\cM_q$ which belongs to $\cM$. Now consider an identity of $\cM_q$ of the form $f_i(x_{i_1},\dots,x_{i_n}) \approx f_i(x_{i_j},\cdots,x_{i_j})$,  obtained from the identity $f_i(x_{i_1},\dots,x_{i_n}) \approx x_{i_j}$ in $\cM$. Since $g_i$ satisfies $\cM$ we get in particular that $g_i(x,x,\dots,x) = x$ for all $x \in C$ and thus, if we set $x_{i_k} = a_{i_k} \in B$, we obtain
$$ f_i(a_{i_1},\dots,a_{i_n}) = g_i(r(a_{i_1}),\dots,r(a_{i_n})) = r(a_{i_j} )= g_i(r(a_{i_j}),\cdots,r(a_{i_j})) = f_i(a_{i_j},\cdots,a_{i_j}).$$

\end{proof}

Define $\cMBW$ to be the strong linear Maltsev condition
that asserts that $v$ and $w$ are BW operations
(recall Definition~\ref{def:bw-operations}).

\begin{example} \label{example}
The following strong linear Maltsev conditions are  non-trivial, consistent and of height 1 (and hence satisfy the hypothesis of 
Theorem~\ref{thm:np-completeness-maltsev-condition}): 
\begin{enumerate} 
\item $k$-cyclic, 
\item $k$-symmetric, 
\item $k$-TS,
\item Quasi-$k$-NU,
\item Quasi-Maltsev,
\item Quasi-Siggers, 
\item  The condition $\cMBW$; 
recall that idempotent polymorphisms satisfying this condition characterise cores whose CSP has bounded width 
(Theorem~\ref{thm:bw-cores})

   \end{enumerate} \end{example}

The main technical tool needed for the proof 
of
Theorem~\ref{thm:np-completeness-maltsev-condition}
is the following lemma.

\begin{lemma} \label{lemma-all-or-nothing} Let $\bbG$ be a connected, undirected graph without loops and with at least 2 vertices. Then there exists a finite, binary relational structure $\bbB$, first order definable from $\bbG$, with the following properties: 
\begin{enumerate}
\item $\bbG$ admits a 3-colouring if and only if  $\bbB$ is not a core;
\item if $\bbG$ admits a 3-colouring, the core $\bbC$ of $\bbB$ satisfies the following: every idempotent operation on $C$ is a polymorphism of $\bbC$;
\item  there exists a 3-element subset $S$ of $B$ such that the restriction of any idempotent polymorphism $f$ of $\bbB$ to $S$  is a projection.
\end{enumerate}

\end{lemma}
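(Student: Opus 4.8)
The goal is to construct, from a connected loopless graph $\bbG$ with $\geq 2$ vertices, a binary structure $\bbB$ whose structure encodes $3$-colourability of $\bbG$ in the three requested ways. The key idea is to build $\bbB$ so that it has a ``rigid'' $3$-element gadget $S$ forcing every idempotent polymorphism to be a projection on $S$, and so that $\bbB$ is forced to map homomorphically onto the ``triangle on $S$'' precisely when $\bbG$ is $3$-colourable, at which point the core collapses to that triangle, on which (being essentially $K_3$ with all edges present, i.e.\ the not-equal relation on $3$ elements) every idempotent operation is a polymorphism.

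Concretely, I would let $S = \{0,1,2\}$ and take the universe $B = V(\bbG) \sqcup \{0,1,2\}$ (or, if it is more convenient for first-order definability, $B = V(\bbG) \times \{0,1,2\}$ together with three ``colour'' constants; I expect the disjoint-union version to be cleanest). On $S$ I would put the binary disequality relation $N_S = \{(a,b) : a \neq b\} \subseteq S^2$, so that the substructure induced on $S$ is a copy of $K_3$ as a reflexive-free symmetric graph; every idempotent operation preserves $N_S$ and restricts on $S$ to a projection is \emph{not} automatic, so to get (3) I would additionally equip $\bbB$ with unary relations and/or a richer binary relation on $S$ — e.g.\ add for each pair $i \neq j \in S$ a binary relation, or encode the three ``rainbow'' orderings — so that the only subalgebras/quotients force projections. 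A clean way: add the ternary-looking behaviour via binary relations that single out $S$ as a set on which only projections survive, mimicking the standard construction showing $\mathrm{CSP}$ of $K_3$ with constants is NP-hard / has no Taylor polymorphism. The point of (3) is exactly the classical fact that the three-element structure with the disequality relation plus all constants has only projections as idempotent polymorphisms; I would cite or reprove this, ensuring the constants (one-element unary relations on $0$, $1$, $2$) are part of $\bbB$.

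For the homomorphism encoding I would add, for each edge $\{u,v\}$ of $\bbG$, a binary relation (or use one fixed binary relation $E^\bbB$) relating $u$ and $v$ in $\bbB$ in a way that mirrors the disequality constraint, and relating each graph-vertex to the colour elements so that: a homomorphism $\bbB \to (\text{triangle on } S)$ exists iff $\bbG \to K_3$ exists, i.e.\ iff $\bbG$ is $3$-colourable. I also arrange that $\bbB$ \emph{never} maps onto anything smaller than $S$ (using the rigidity of the $S$-gadget and connectedness of $\bbG$), and that $\bbB$ maps onto $S$ only via such a colouring. Then: if $\bbG$ is $3$-colourable, the core of $\bbB$ is (isomorphic to) the induced substructure $\bbC$ on $S$, which is a core (disequality on $3$ points with constants is rigid), giving (1) in one direction and (2), since on the domain $\{0,1,2\}$ with relations $N_S$ and the three constants, every idempotent operation preserves everything — wait, that is false in general, so I must choose $\bbC$'s relations so that (2) holds: the safe choice is to let the core relations be \emph{only} those preserved by all idempotent operations, e.g.\ make $\bbC$ have as its single binary relation the full relation $S^2$ minus nothing, or better: let the non-constant relations of $\bbC$ be empty and keep only the constants. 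I would design $\bbB$ so that its core, when $\bbG$ is $3$-colourable, is the three-element structure with \emph{no} relations except the three unary constants — on which every idempotent operation is trivially a polymorphism, giving (2). Conversely if $\bbB$ is not a core it must retract, and by the rigidity built into the $S$-gadget the only possible proper retract is onto $S$, which forces a $3$-colouring of $\bbG$; that gives the other direction of (1).

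**Main obstacle.** The delicate point is consistency of the three demands: (3) wants $S$ to carry enough relational structure that idempotent polymorphisms are projections on it, while (2) wants the \emph{core} (which is $S$ with possibly different, inherited relations) to carry so little structure that \emph{every} idempotent operation is a polymorphism. These pull in opposite directions, so the construction must ensure that the relations witnessing (3) live on $\bbB$ in a way that does \emph{not} survive into the core as relations on $S$ — e.g.\ they involve graph-vertices essentially, so that once $\bbG$ is $3$-colourable and we retract onto $S$, those relations become (on $S$) either the full relation or reconstructible from constants, hence preserved by all idempotent ops. Making first-order definability from $\bbG$ work simultaneously with this ``relations that vanish on the core'' trick is the crux; I expect to spend most of the proof checking that the induced substructure on $S$ inside $\bbB$ has only projections as idempotent polymorphisms (reduction from / reuse of the non-$3$-colourability-of-$K_3$-type rigidity), and separately that the retract onto $S$, when it exists, has the minimal structure needed for (2).
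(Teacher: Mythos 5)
Your proposal contains a genuine gap, and it is exactly at the point you flag as the ``main obstacle'': the tension between (2) and (3) cannot be resolved the way you suggest, because there is no such thing as ``relations that vanish on the core.'' The core is a retract, hence an induced substructure of $\bbB$, so every basic relation of $\bbB$ restricts to it verbatim; moreover, any relation that is pp-definable in $\bbB$ from the basic relations together with constants lying in the core is preserved by every idempotent polymorphism of the core (compose such a polymorphism with the retraction to get a polymorphism of $\bbB$ fixing those constants). Consequently, if the complete graph on your $3$-element set $S$ is forced (either as a basic relation like $N_S$ plus constants, or by a pp-gadget with constants from $S$) so that idempotent polymorphisms are projections on $S$ --- which is what (3) requires --- and $S$ is contained in the core, then every idempotent polymorphism of the core is a projection on $S$. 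That contradicts (2): a $3$-element core on which every idempotent operation is a polymorphism would in particular admit a majority polymorphism, which is not a projection. So your plan of making the core \emph{equal} to $S$ is structurally impossible; $S$ and the core must be (essentially) disjoint.

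That is precisely how the paper's construction escapes the trap. The universe is $V(\bbG)\times\{1,2,3\}$, and for each (oriented) edge $e=(u,v)$ one takes the binary relation $R_e=\{(u_i,v_j): i\neq j\}$. When $\bbG$ is $3$-colourable via $\phi$, the core is the \emph{transversal} $\{u_{\phi(u)} : u\in V(\bbG)\}$ --- one element per vertex of $\bbG$, not a $3$-element set --- and each relation restricted to it is a singleton tuple, so every idempotent operation on it is trivially a polymorphism, giving (2). The set $S$ of property (3) is instead a single fibre $\{u_1,u_2,u_3\}$, which meets the core in only one point; the complete graph on this fibre is not a basic relation (the basic relations restricted to $S^2$ are empty, since $\bbG$ has no loops) but is pp-definable via a Feder-type gadget using $R_e$ and constants, which is what forces idempotent polymorphisms of $\bbB$ to be projections on $S$. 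Finally, the direction ``$\bbB$ not a core $\Rightarrow$ $\bbG$ is $3$-colourable'' of (1) is not obtained by arguing that the only proper retract is onto $S$ (it is not); it requires a separate combinatorial argument tracking, for each vertex $u$, which indices appear in the image of a non-injective endomorphism on the fibre $\{u_1,u_2,u_3\}$, and reading off a $3$-colouring from these index sets. Your proposal as written does not supply a construction meeting all three requirements simultaneously, and its central reconciling idea is unsound.
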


\begin{proof} The universe of our structure is $B = V(\bbG) \times \{1,2,3\}$; for convenience we denote $(u,i)$ by $u_i$. Choose an arbitrary orientation of the edges of $\bbG$, and for each arc $e=(u,v)$, let $R_e$ be the following binary relation on $B$:
$$R_e = \{(u_i,v_j): i \neq j\}.$$
Then $\bbB = \langle B; R_e \, (e \in E(\bbG) \rangle$. It is easy to see that $\bbB$ if first-order definable from $\bbG$. \\

\begin{figure}[htb]
\begin{center}
\includegraphics[scale=0.6]{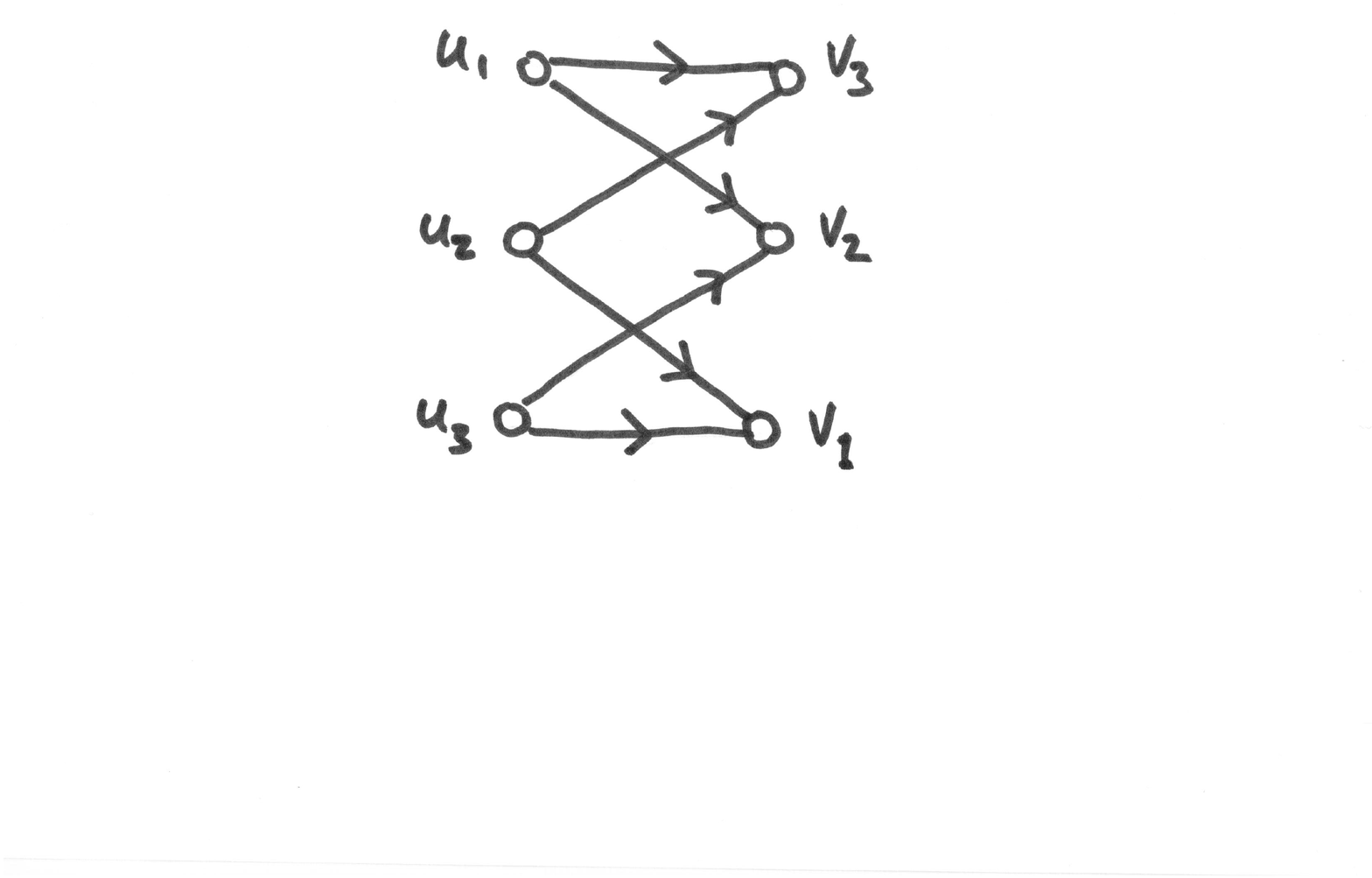}
 \caption{The relation $R_e$ associated to the arc $e=(u,v)$.} \label{relationRe}
 \end{center}
\end{figure}

{\bf Claim 1.} {\em If $\bbG$ admits a 3-colouring then $\bbB$ is not a core.}
 Let $\phi$ be a 3-colouring of $\bbG$. Consider the  self-map $r$ of $B$ defined by $r(u_i) = u_{\phi(u)}$ for all $1 \leq i \leq 3$. Then  $r$ preserves every $R_e$, since, if $e=(u,v)$, then $(r(u_i),r(v_j)) = (u_{\phi(u)}, v_{\phi(v)}) \in R_e$. \\

Notice that the map $r$ above is a proper retraction of $\bbB$ onto the substructure $\bbC$ induced by $C=\{u_{\phi(u)}: u \in V(\bbG)\}$.  This structure  is a (rigid) core: indeed, for every $e=(u,v)$, the relation $R_e$ restricted to $C^2$ is the singleton $\{(u_{\phi(u)},v_{\phi(v)})\}$; since $\bbG$ is connected and has at least 2 vertices, it means that every unary polymorphism of $\bbC$ must fix every element of $C$. \\

{\bf Claim 2.} {\em If $\bbB$ is not a core then  $\bbG$ admits a 3-colouring.}
 Let's suppose that $\bbB$ is not a core, and hence admits some unary, non injective polymorphism $f$. Notice that for each $u \in V(\bbG)$, the set $\{u_1,u_2,u_3\}$ is preserved by the polymorphisms of $\bbB$, as it is the projection on some coordinate of any $R_e$ such that $u$ is incident to $e$.  Since the sets $ \{u_1,u_2,u_3\}$ are pairwise disjoint,   the image of $f$ on some set $W=\{w_1,w_2,w_3\}$ has size at most 2. \\

\noindent {\bf Fact 1.} {\em Let $e=(u,v)$ be an arc of $\bbG$, and let $\{i,j,k\} = \{1,2,3\}$. 
\begin{enumerate} \item If $u_i$ and $u_j$ are in the image of $f$, then so is $v_k$;
\item If $u_i$ is in the image of $f$, then $v_j$ or $v_k$ is in the image of $f$.
 \end{enumerate}} 

We prove the first statement, the second is similar. By hypothesis there exist distinct $s,t$ such that $f(u_s)=u_i$ and $f(u_t)=u_j$. Since there exists some $r$ such that $(u_s,v_r),(u_t,v_r)\in R_e$, we have $(u_i,f(v_r)), (u_j,f(v_r)) \in R_e$ so $f(v_r) = v_k$. \\

 For a vertex $u \in V(\bbG)$ let $F(u)$ denote the set of indices that appear in the image of $f$ on $\{u_1,u_2,u_3\}$. It is immediate from the previous fact that, if $e=(u,v)$ is an arc of $\bbG$ and the image of $f$  restricted to $\{v_1,v_2,v_3\}$ has at most 2 elements, then the same holds for the set $\{u_1,u_2,u_3\}$. By connectedness of $\bbG$ and the fact that the image of $f$ on at least one set $W$ has size at most 2, it follows that  $|F(u)| \leq 2$ for all $u \in V(\bbG)$. Define a map $\phi: V(\bbG) \rightarrow \{1,2,3\}$ as follows:
 \[ \phi(u)  = \left \{ \begin{array}{ll}
         1, &\mbox{ if } F(u) \in \{\{1\},\{1,3\}\}, \\
  2, &\mbox{ if } F(u) \in \{\{2\},\{1,2\}\}, \\
3, &\mbox{ if } F(u) \in \{\{3\},\{2,3\}\}.

                        \end{array} \right.
       \]

\noindent {\bf Fact 2.} {\em $\phi$ is a proper 3-colouring of $\bbG$.} \\

Let $e=(u,v)$ and let $\phi(u)=i$. Suppose first that $F(u)=\{i\}$. Then by Fact 1 (2)  $F(v)$ cannot contain $i$. Inspection of the definition of $\phi$ shows that in this case $\phi(v) \neq i$. If on the other hand $F(u) = \{i,j\}$, then by Fact 1 (1) $F(v)$ must contain the unique $k \not\in \{i,j\}$; it is then immediate by the definition of $\phi$ that $\phi(v) \neq \phi(u)$. \\

It follows  from the preceding proofs that there are only two possibilities: either $\bbG$ is 3-colourable and the core of $\bbB$ is $\bbC$, or otherwise $\bbB$ itself is a core. To finish the proof of the theorem, it thus suffices to prove the following facts:\\

\noindent {\bf Fact 3.} {\em Assume $\bbG$ is 3-colourable. Then every idempotent operation on $C$ is a polymorphism of $\bbC$.} 

This is obvious: each basic relation of $\bbC$ has a single tuple. \\

\noindent {\bf Fact 4.} {\em The restriction of any idempotent polymorphism of $\bbB$ to a set $\{u_1,u_2,u_3\}$ is a projection.} 

(This construction is essentially due to Feder, \cite{Feder01classification}) Suppose without loss of generality that $e=(u,v)$ is an arc (otherwise we just reverse all the arcs in the gadgets). First define the relation $\theta=\{(u_i,v_i): i=1,2,3\}$ by the gadget  in Figure \ref{gadget}, where the arcs in the gadget denote the relation $R_e$. Since our pp-definition uses only constants and $R_e$, the idempotent polymorphisms of $\bbB$ preserve $\theta$. Then define the relation $\{(s,t): \exists z \, (s,z) \in \theta, (t,z) \in R_e\}$.   This pp-defines the complete graph on $\{u_1,u_2,u_3\}$, which is well-known to support no idempotent polymorphisms other than projections.  
\end{proof}

\begin{figure}[htb]
\begin{center}
\includegraphics[scale=0.6]{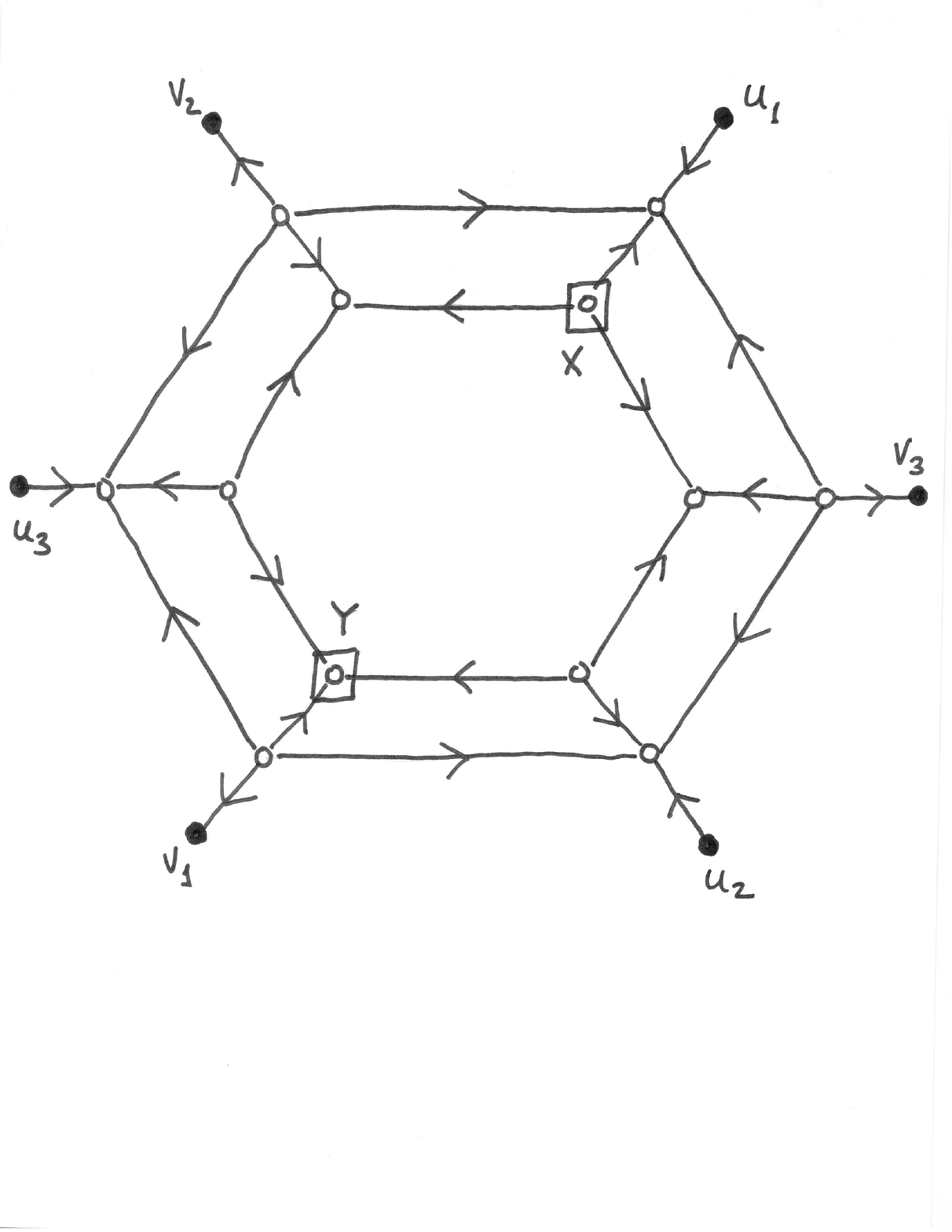}
 \caption{The gadget defining the relation $\theta$ in the proof of Fact 4 of Lemma \ref{lemma-all-or-nothing}.} \label{gadget}
 \end{center}
\end{figure}



\begin{proof} (Proof of Theorem~\ref{thm:np-completeness-maltsev-condition})
We reduce from 3-colourability, restricted to the types of graphs considered in Lemma \ref{lemma-all-or-nothing}. Consider the structure $\bbB$  associated to the  graph  $\bbG$ in Lemma \ref{lemma-all-or-nothing}: we prove that $\bbB$ satisfies $\cM$ if and only if $\bbG$ is 3-colourable.  
We first make the trivial observation that  $\cM_q = \cM$ since $\cM$ has height 1. Suppose first that $\bbB$ satisfies $\cM$. By Lemma \ref{lemma-down-to-core}, its core satisfies $\cM$ with idempotent polymorphisms. Since $\cM$ is non-trivial, the idempotent operations that witness it on the core of $\bbB$ cannot all restrict to projections on a set with 2 or more elements;  thus by Lemma  \ref{lemma-all-or-nothing} (3)  $\bbB$ is not a core so $\bbG$ is 3-colourable. Conversely, if  $\bbB$ does not satisfy $\cM$, then by Lemma \ref{lemma-down-to-core}  its core does not satisfy $\cM$ with idempotent polymorphisms; since $\cM$ is consistent, it follows from Lemma  \ref{lemma-all-or-nothing} (2) that the core of $\bbB$ is $\bbB$ itself,  and hence $\bbG$ is not 3-colourable. Thus the problem is NP-hard.

To see that the problem is in NP: it suffices to guess the polymorphisms that satisfy $\cM$.   \end{proof}

We now turn to consider the complexity of deciding if a structure has bounded width.
We first note the following fact, which is now folklore
(we present it here, as we do not know of an explicit reference).

\begin{prop}
\label{prop:bw}
A structure $\bbH$ has bounded width if and only if it satisfies
$\cMBW$.
\end{prop}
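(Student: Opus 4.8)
The plan is to chain together three facts already established in the excerpt: the core characterization of bounded width (Theorem~\ref{thm:bw-cores}), the invariance of bounded width under homomorphic equivalence (Proposition~\ref{prop:bw-preserved-by-hom-equivalence}), and the ``descent to the core'' equivalence for strong linear Maltsev conditions (Lemma~\ref{lemma-down-to-core}).

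First I would record that $\cMBW$ is a strong linear Maltsev condition of height $1$: inspecting Definition~\ref{def:bw-operations}, every identity in $\cMBW$ has an explicit operation symbol ($v$ or $w$) on both sides, so none of them is of the form $f(x_{i_1},\dots,x_{i_n}) \approx x_{i_j}$. Hence $(\cMBW)_q = \cMBW$, and Lemma~\ref{lemma-down-to-core}, applied with $\cM = \cMBW$ and $\bbB = \bbH$, yields that $\bbH$ satisfies $\cMBW$ if and only if the core $\bbC$ of $\bbH$ satisfies $\cMBW$ via idempotent operations, i.e. if and only if $\bbC$ has idempotent polymorphisms $v$ and $w$ that are BW operations.

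Next I would invoke Theorem~\ref{thm:bw-cores}, which says precisely that a core $\bbC$ admits idempotent BW polymorphisms $v,w$ if and only if $\csp(\bbC)$ has bounded width. Finally, since $\bbH$ and its core $\bbC$ are homomorphically equivalent (via the retraction and the inclusion), Proposition~\ref{prop:bw-preserved-by-hom-equivalence} gives that $\csp(\bbC)$ has bounded width if and only if $\csp(\bbH)$ has bounded width. Composing the three equivalences, $\bbH$ satisfies $\cMBW$ $\iff$ $\bbC$ has idempotent BW polymorphisms $\iff$ $\csp(\bbC)$ has bounded width $\iff$ $\csp(\bbH)$ has bounded width, which is the statement.

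There is no serious obstacle here; the only point requiring attention is the bookkeeping between ``$\bbH$ satisfies $\cMBW$'' (which a priori only asks for \emph{some} polymorphisms $v,w$ witnessing the identities, not idempotent ones) and the idempotent witnesses demanded by Theorem~\ref{thm:bw-cores}. Lemma~\ref{lemma-down-to-core} is exactly the tool that bridges this gap, since its clause~(3) delivers idempotent operations on the core; so I would be careful to cite that clause rather than clause~(2). I would also note explicitly that a structure and its core are homomorphically equivalent, so that the hypothesis of Proposition~\ref{prop:bw-preserved-by-hom-equivalence} is literally met.
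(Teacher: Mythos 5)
Your proof is correct and follows essentially the same route as the paper's: both chain Lemma~\ref{lemma-down-to-core}, Theorem~\ref{thm:bw-cores}, and Proposition~\ref{prop:bw-preserved-by-hom-equivalence}. The only cosmetic difference is that the paper first augments $\cMBW$ with idempotency identities and passes to the quasi-version of that augmented condition, whereas you apply Lemma~\ref{lemma-down-to-core} directly to $\cMBW$ (using that it has height~$1$, so $(\cMBW)_q = \cMBW$) and extract idempotent witnesses on the core from clause~(3); both handle the idempotency bookkeeping equally well.
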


\begin{proof}
Let $\cM$ be the strong linear Maltsev condition obtained
by taking $\cMBW$ and adding the requirements that 
$v$ and $w$ are idempotent.
We have that a structure satisfies $\cM_q$ if and only if
it satisfies $\cMBW$.
Hence, 
by Lemma~\ref{lemma-down-to-core} 
a structure $\bbH$ satisfies $\cMBW$
if and only if the core of $\bbH$ satisfies $\cM$.
The core of $\bbH$ has bounded width if and only if it satisfies $\cM$,
by Theorem~\ref{thm:bw-cores}.
By Proposition~\ref{prop:bw-preserved-by-hom-equivalence},
$\bbH$ has bounded width if and only if its core does,
yielding the proposition.
\end{proof}

The following result settles the complexity of the just-mentioned decision problem, and answers a question of L. Barto 
(see \cite{Barto14-collapse}, just after Corollary 8.5).

\begin{corollary} 
\label{cor:deciding-bounded-width}
Deciding, given a structure $\bbH$, whether or not $\CSP(\bbH)$ has bounded width is NP-complete. \end{corollary}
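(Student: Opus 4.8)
The plan is to derive Corollary~\ref{cor:deciding-bounded-width} directly from Proposition~\ref{prop:bw} together with Theorem~\ref{thm:np-completeness-maltsev-condition}. By Proposition~\ref{prop:bw}, a structure $\bbH$ has bounded width if and only if $\bbH$ satisfies $\cMBW$, so the decision problem in question is exactly the metaquestion for $\cMBW$. Hence it suffices to check that $\cMBW$ meets the three hypotheses of Theorem~\ref{thm:np-completeness-maltsev-condition}: it is a strong linear Maltsev condition of height~1 that is non-trivial and consistent. This was in fact already recorded in Example~\ref{example}(7), so the bulk of the work is to invoke that example (or to re-verify its three claims).

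I would carry this out as follows. First, observe that $\cMBW$ is a finite set of linear identities all of which have explicit operation symbols on both sides (each identity equates a term in $v$ with a term in $w$, or a term in $v$ with a term in $v$, or a term in $w$ with a term in $w$), so $\cMBW$ is a strong linear Maltsev condition of height~1. Second, for non-triviality: on a domain with at least two elements no pair of projections $(v,w)$ can satisfy $\cMBW$, since the identities $v(y,x,x)\approx v(x,y,x)\approx v(x,x,y)$ force the projection $v$ to ignore all of its coordinates, which is impossible for a projection on a domain of size $\geq 2$; likewise for $w$. Third, for consistency: on any non-empty finite set $D$ one may take $v$ and $w$ both to be the ``first-coordinate-otherwise-majority''-style idempotent operations witnessing bounded width on, say, a structure with a majority polymorphism — concretely, since any set $D$ (viewed as a structure with, e.g., only the equality relation, or with a near-unanimity polymorphism) admits idempotent BW operations, $\cMBW$ is consistent. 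Alternatively one cites that every set $D$ underlies a structure whose CSP has bounded width (for instance the structure with a semilattice or majority operation), so by Theorem~\ref{thm:bw-cores} idempotent BW operations exist on $D$.

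Having verified the hypotheses, Theorem~\ref{thm:np-completeness-maltsev-condition} gives that deciding whether a structure satisfies $\cMBW$ is NP-complete, and Proposition~\ref{prop:bw} identifies this with deciding whether $\CSP(\bbH)$ has bounded width; this yields the corollary. For completeness I would also note membership in NP explicitly (though it follows from Theorem~\ref{thm:np-completeness-maltsev-condition}): given $\bbH$, guess operations $v,w$ on $H$ of arities~3 and~4 and verify in polynomial time both that they are polymorphisms of $\bbH$ and that they satisfy the finitely many BW identities.

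The main obstacle is essentially bookkeeping rather than mathematical depth: one must be careful that $\cMBW$ as stated really is of height~1 (no identity of the form $f(\dots)\approx y_i$ is hidden among the BW identities — indeed there is none), and one must exhibit, for every finite $D$, idempotent operations satisfying $\cMBW$ to establish consistency. The consistency check is the only point requiring a small argument, and it is handled by pointing to any family of structures — one per domain size — with bounded-width CSPs (e.g. those admitting a majority polymorphism) and applying Theorem~\ref{thm:bw-cores}; everything else is immediate from the cited results.
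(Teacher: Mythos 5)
Your proposal is correct and follows exactly the paper's route: the paper's proof is the one-line combination of Proposition~\ref{prop:bw} with Theorem~\ref{thm:np-completeness-maltsev-condition}, delegating the verification that $\cMBW$ is non-trivial, consistent, and of height~1 to Example~\ref{example}(7), which you simply spell out in more detail. (The only quibble: for consistency, the structure with just the equality relation is not a core, so cite a core with bounded width or, better, use your direct near-unanimity-style construction of idempotent BW operations.)
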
 

\begin{proof} 
This follows directly from Proposition~\ref{prop:bw} and
Theorem~\ref{thm:np-completeness-maltsev-condition}.\end{proof}

Recall that the algebraic dichotomy conjecture 
(Conjecture~\ref{conj:algebraic-dichotomy})
predicts that the relational structures $\bbH$ for which $\CSP(\bbH)$ is tractable are precisely those whose core admits a Siggers polymorphism; 
by Lemma \ref{lemma-down-to-core} those are precisely the structures that admit a quasi-Siggers polymorphism.
By the last theorem, this condition is also NP-complete.

\begin{corollary} 
\label{cor:deciding-quasi-siggers}
Deciding if a relational structure admits a quasi-Siggers polymorphism is NP-complete.  \end{corollary}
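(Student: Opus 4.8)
The plan is to derive the corollary directly from Theorem~\ref{thm:np-completeness-maltsev-condition}. Write $\cM_S$ for the Siggers condition, i.e.\ the pair of identities $f(x,x,x,x) \approx x$ and $f(a,r,e,a) \approx f(r,a,r,e)$; then a structure admits a quasi-Siggers polymorphism precisely when it satisfies $(\cM_S)_q$. Since ``$\bbH$ satisfies $(\cM_S)_q$'' means ``$\bbH$ has a polymorphism satisfying $(\cM_S)_q$'', and NP membership is part of the conclusion of Theorem~\ref{thm:np-completeness-maltsev-condition} (one guesses the polymorphism), it suffices to check that $(\cM_S)_q$ is a non-trivial, consistent, strong linear Maltsev condition of height $1$; indeed, this is exactly item (6) of Example~\ref{example}, so the argument is mostly a matter of recording the verification.

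First, \emph{height $1$}: forming $\cM_q$ from $\cM$ replaces each identity of the form $f(\dots)\approx x_i$ by one with operation symbols on both sides and leaves the height-$1$ identities untouched; concretely $(\cM_S)_q$ consists of $f(x,x,x,x)\approx f(x,x,x,x)$ together with $f(a,r,e,a)\approx f(r,a,r,e)$, both of height $1$. Second, \emph{non-triviality}: on a domain with at least two elements, no projection satisfies $f(a,r,e,a)\approx f(r,a,r,e)$, since reading off the $i$-th coordinate of the two argument tuples gives $a$ vs.\ $r$, $r$ vs.\ $a$, $e$ vs.\ $r$, and $a$ vs.\ $e$ for $i=1,2,3,4$ respectively, none of which is an identity when two distinct elements are available.

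Third, \emph{consistency}: for each non-empty finite set $D$ we need an idempotent operation on $D$ satisfying $(\cM_S)_q$. An idempotent operation satisfies $(\cM_S)_q$ if and only if it satisfies $\cM_S$, because under idempotence $f(x_{i_j},\dots,x_{i_j}) = x_{i_j}$. Every finite set carries an idempotent majority (that is, $3$-ary near-unanimity) operation --- for instance the median operation associated with an arbitrary linear order on $D$ --- and, viewing this as a polymorphism of the relation-free structure on $D$ and invoking Proposition~\ref{prop:polymorphism-to-siggers}, we obtain an idempotent Siggers operation on $D$. This establishes consistency.

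With the three hypotheses in hand, Theorem~\ref{thm:np-completeness-maltsev-condition} yields that deciding whether a relational structure satisfies $(\cM_S)_q$, equivalently whether it admits a quasi-Siggers polymorphism, is NP-complete, and this holds even when the input is restricted to at most binary structures. I do not anticipate any real obstacle: the only step requiring a short argument is consistency, and even that is routine given Proposition~\ref{prop:polymorphism-to-siggers}. The conceptual content connecting this to the algebraic dichotomy conjecture is Lemma~\ref{lemma-down-to-core}, which says that $\bbH$ admits a quasi-Siggers polymorphism exactly when the core of $\bbH$ admits a Siggers polymorphism --- the condition conjectured to characterize tractability of $\csp(\bbH)$ --- so the NP-completeness just proved is precisely the NP-completeness of deciding that conjectured characterization.
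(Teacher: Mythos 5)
Your proposal is correct and follows exactly the paper's route: the corollary is obtained by applying Theorem~\ref{thm:np-completeness-maltsev-condition} to the quasi-Siggers condition, whose non-triviality, consistency, and height-1 property the paper records (without proof) as item (6) of Example~\ref{example}. Your explicit verifications of these three hypotheses — in particular the consistency argument via an idempotent majority operation and Proposition~\ref{prop:polymorphism-to-siggers} — are sound and simply fill in details the paper leaves implicit.
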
 \qed

The property of admitting $k$-symmetric polymorphisms of all arities $k \geq 2$ characterises structures whose CSP admits a special kind of approximation algorithm based on linear programming \cite{DBLP:conf/innovations/KunOTYZ12},  see also \cite{MR3146708} and \cite{KroCar-symmetric} Remark 5. For the purposes of the next result, we call a structure {\em all-$k$-symmetric} if it has this property. Notice that Lemma ~\ref{lemma-all-or-nothing} has consequences slightly stronger than those stated in Theorem~\ref{thm:np-completeness-maltsev-condition} as it shows NP-hardness of various Maltsev conditions (not necessarily strong ones):

\begin{corollary} 
\label{cor:deciding-ts-symm-cyc}
Let $k \geq 2$. Deciding if a relational structure admits any of the following polymorphisms is NP-complete:
\begin{enumerate}
\item a $k$-ary totally symmetric polymorphism,
\item a $k$-ary symmetric polymorphism,
\item a $k$-ary cyclic polymorphism.

\end{enumerate}
Furthermore, deciding the existence of a set polymorphism, or if a structure is \\all-$k$-symmetric is NP-hard. 
\end{corollary}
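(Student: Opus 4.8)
The plan is to derive everything from Lemma~\ref{lemma-all-or-nothing} in essentially the same way that Theorem~\ref{thm:np-completeness-maltsev-condition} is derived, but observing that the argument never used the fact that $\cM$ is a \emph{strong} (finite) Maltsev condition, only three features of $\cM$: (i) it is of height~$1$; (ii) it is non-trivial, in the sense that no tuple of projections on a $\geq 2$-element domain satisfies it; and (iii) it is consistent, in the sense that for every finite $D$ there exist idempotent operations on $D$ satisfying it. For items (1)--(3), the relevant condition is ``$f$ is a $k$-ary totally symmetric (resp.\ symmetric, cyclic) operation''; each of these is of height~$1$, is non-trivial for $k\geq 2$ (a projection is not cyclic on a $\geq 2$-element domain, and TS $\Rightarrow$ symmetric $\Rightarrow$ cyclic), and is consistent (take the operation sending $(x_1,\dots,x_k)$ to the minimum of $\{x_1,\dots,x_k\}$ under some fixed linear order of $D$, which is idempotent and totally symmetric, hence symmetric and cyclic).

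First I would restate, from the proof of Theorem~\ref{thm:np-completeness-maltsev-condition}, the reduction: given a connected loopless graph $\bbG$ with $\geq 2$ vertices, form the first-order-definable binary structure $\bbB$ of Lemma~\ref{lemma-all-or-nothing}. I would then argue that $\bbB$ admits a $k$-ary TS (resp.\ symmetric, cyclic) polymorphism if and only if $\bbG$ is $3$-colourable. For the ``only if'' direction: since each such condition is of height~$1$, a polymorphism of $\bbB$ witnessing it is, by Lemma~\ref{lemma-down-to-core} (applied with $\cM$ the condition in question, noting $\cM_q=\cM$), inherited by the core of $\bbB$ via idempotent operations; non-triviality then forbids these idempotent witnesses from all being projections on any $\geq 2$-element set, so by Lemma~\ref{lemma-all-or-nothing}(3) the structure $\bbB$ is not a core, whence by Lemma~\ref{lemma-all-or-nothing}(1) $\bbG$ is $3$-colourable. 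For the ``if'' direction: if $\bbG$ is $3$-colourable, then by Lemma~\ref{lemma-all-or-nothing}(2) the core $\bbC$ of $\bbB$ has the property that \emph{every} idempotent operation on $C$ is a polymorphism; by consistency there exist idempotent operations on $C$ satisfying $\cM$, so $\bbC$ satisfies $\cM$, and then by Lemma~\ref{lemma-down-to-core} so does $\bbB$. Membership in NP is immediate: guess the polymorphism (a $k$-ary operation table is polynomial in $|H|$ for fixed $k$) and verify it is a polymorphism satisfying the relevant identities.

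For the ``furthermore'' part, the two conditions ``admits a set polymorphism'' and ``is all-$k$-symmetric'' are \emph{not} single strong linear Maltsev conditions---they assert the existence of symmetric (equivalently TS, by Lemma~\ref{lemma-width1}) polymorphisms of \emph{all} arities simultaneously---so only NP-hardness, not NP-membership, is claimed, and indeed only NP-hardness need be shown. Here I would use the \emph{same} structure $\bbB$. If $\bbG$ is $3$-colourable, then by Lemma~\ref{lemma-all-or-nothing}(2) every idempotent operation on the core $\bbC$ is a polymorphism of $\bbC$; in particular $\bbC$ has TS (hence symmetric) polymorphisms of all arities, and since $\bbB$ is homomorphically equivalent to $\bbC$, Lemma~\ref{lemma-width1} (equivalence of having a set polymorphism with having width~$1$, which is preserved under homomorphic equivalence, or directly: $\csp(\bbB)=\csp(\bbC)$) shows $\bbB$ has a set polymorphism, and $\bbB$ is all-$k$-symmetric. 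Conversely, if $\bbB$ has a set polymorphism, then in particular it has a $k$-ary TS polymorphism for, say, $k=2$; by the non-triviality argument above (via Lemma~\ref{lemma-down-to-core} and Lemma~\ref{lemma-all-or-nothing}(3),(1)) this forces $\bbG$ to be $3$-colourable. The same works for all-$k$-symmetric, using that it entails the existence of a $2$-symmetric (= commutative, = TS for arity~$2$) polymorphism. Thus $3$-colourability of $\bbG$ reduces to each of these problems.

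The only point requiring genuine care---and the one I would call out as the main obstacle---is making precise the non-triviality step in the ``only if'' direction: one must confirm that for each condition ($k$-TS, $k$-symmetric, $k$-cyclic with $k\geq 2$, and the weaker ``$2$-commutative'' consequence used for the set/all-$k$-symmetric cases), no tuple of projections on a domain of size $\geq 2$ satisfies it, so that Lemma~\ref{lemma-all-or-nothing}(3) genuinely applies. This is routine---for instance, a projection $\pi_i$ is not cyclic because $\pi_i(x_1,\dots,x_k)=x_i$ while $\pi_i(x_k,x_1,\dots,x_{k-1})$ is $x_{i-1}$ (indices mod $k$), and these differ when we pick distinct domain values; and any symmetric or TS operation is cyclic, so the same witness works---but it is the hinge on which the reduction turns, and it is why the hypothesis $k\geq 2$ is needed (for $k=1$ the identity operation is TS, symmetric and cyclic, and the condition is trivial). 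Everything else is a direct citation of Lemmas~\ref{lemma-all-or-nothing}, \ref{lemma-down-to-core}, \ref{lemma-width1} and the observation that the proof of Theorem~\ref{thm:np-completeness-maltsev-condition} already handles the bookkeeping.
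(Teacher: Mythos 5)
Your proposal is correct and follows essentially the same route as the paper: items (1)--(3) are exactly the instantiation of Theorem~\ref{thm:np-completeness-maltsev-condition} via Example~\ref{example}, and the ``furthermore'' part runs the same reduction through Lemma~\ref{lemma-all-or-nothing}, using Lemma~\ref{lemma-width1} to extract a binary TS (commutative) polymorphism from a set polymorphism or from all-$k$-symmetry and then invoking non-triviality of commutative operations, just as the paper does (the paper phrases the set-polymorphism case via the core argument of Proposition~\ref{prop:set-polymorphism-exptime}, but the content is identical).
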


\begin{proof} Statements (1)-(3) follow directly from Theorem \ref{thm:np-completeness-maltsev-condition} and  Example \ref{example}.  To prove that the problem of deciding if a structure is all-$k$-symmetric, we also invoke Example \ref{example} and then proceed as in the proof of Theorem  \ref{thm:np-completeness-maltsev-condition} simply noticing that if $\bbG$ is 3-colourable then the core of the associated structure $\bbB$ admits idempotent $k$-symmetric polymorphisms for all $k$, and if $\bbG$ is not 3-colourable, then the core of $\bbB$ is $\bbB$ itself, and admits no $k$-symmetric polymorphism for any $k$. For the last statement, we also proceed exactly as in the proof of Theorem  \ref{thm:np-completeness-maltsev-condition}: by the proof of Proposition \ref{prop:set-polymorphism-exptime}, the structure $\bbB$ associated to the  graph  $\bbG$ in Lemma \ref{lemma-all-or-nothing} admits a set polymorphism precisely when its core admits an idempotent set polymorphism. It is easy to see that the TS polymorphisms guaranteed by Lemma \ref{lemma-width1} are idempotent if the core admits an idempotent set polymorphism; since no TS operation can be a projection when restricted 
to a non-trivial set,  by Lemma \ref{lemma-all-or-nothing},  if the core of $\bbB$ has an idempotent set polymorphism then  $\bbB$ is not a core and so $\bbG$ is 3-colourable. Conversely, since every non-trivial set admits an idempotent set function (take for instance the union operation), if the core of $\bbB$ does not admit an idempotent set polymorphism then $\bbB$ is not a core, and so $\bbG$ is 3-colourable. \end{proof}
\qed

\subsection{Semilattice polymorphisms}
\label{subsect:semilattice}


A known sufficient condition for a structure to have a set polymorphism
is that the structure has a 
semilattice polymorphism~\cite{DalmauPearson99-width1}.
We show that detecting this sufficient condition
(that is, for a semilattice polymorphism) is NP-hard,
even in the case where one restricts attention to conservative
polymorphisms (see \cite{DBLP:journals/ai/GreenC08} for related results.)

\begin{theorem} 
\label{thm:semilattice-hard}
Deciding if a relational structure admits any of the following  is NP-complete (even when restricted to at most binary relational structures): 

\begin{enumerate}
\item a  semilattice polymorphism,
\item a conservative semilattice polymorphism,
\item a commutative, associative polymorphism 
(that is, a commutative semigroup polymorphism).

\end{enumerate}

\end{theorem}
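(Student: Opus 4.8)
The plan is to establish membership in NP routinely, and to obtain NP-hardness of all three problems at once through a single reduction from the \textsc{Betweenness} problem: given a finite set $S$ and a collection $\cT$ of triples of pairwise distinct elements of $S$, decide whether there is a total order $\prec$ on $S$ such that for every triple $(a,b,c)\in\cT$ the element $b$ lies $\prec$-between $a$ and $c$. This problem is well known to be NP-complete. Membership of our three problems in NP is clear: guess a binary operation table on the domain $D$ of the input structure, and verify in polynomial time that it is commutative, associative (and, as the case requires, idempotent and/or conservative) and that it preserves every listed relation.

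For the hardness reduction, given a \textsc{Betweenness} instance $(S,\cT)$ I would build the at-most-binary structure $\bbB$ with universe $S$ that contains, for each triple $(a,b,c)\in\cT$, the binary relation $R_{(a,b,c)}=\{(a,b),(b,c)\}$, together with every singleton unary relation $\{d\}$ for $d\in S$. This structure is plainly computable in polynomial time. The combinatorial heart of the argument is the elementary observation, proved by a three-case analysis: for a total order $\prec$ on $S$, the relation $\{(a,b),(b,c)\}$ is closed under coordinatewise $\prec$-minimum if and only if $b$ lies $\prec$-between $a$ and $c$. Indeed, if $b$ is the $\prec$-least of the three then the coordinatewise minimum of $(a,b)$ and $(b,c)$ is $(b,b)$, which (as $a,b,c$ are distinct) is not in the relation; if $b$ is the $\prec$-greatest it is $(a,c)$, again not in the relation; and if $b$ is strictly between, one checks the relation is closed.

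Granting this, the two directions of the reduction run as follows. If $(S,\cT)$ is a yes-instance witnessed by a total order $\prec$, then $f=\min_\prec$ is a conservative semilattice polymorphism of $\bbB$: it is conservative, commutative, associative and idempotent; it preserves each $R_{(a,b,c)}$ by the observation; and it preserves every singleton relation because it is idempotent. Hence $\bbB$ admits a polymorphism of each of the three kinds. Conversely, suppose $\bbB$ admits a commutative associative polymorphism $g$ — the weakest of the three notions. Since $g$ preserves every singleton relation, $g$ is idempotent, hence a semilattice operation; let $\le$ be the meet-semilattice order it induces. Preservation of $R_{(a,b,c)}$ then forces $(a\wedge b,\, b\wedge c)\in\{(a,b),(b,c)\}$, i.e. $a\le b\le c$ or $c\le b\le a$, so each triple is a $\le$-chain with $b$ in the middle. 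Any linear extension $\preceq$ of $\le$ is then a total order solving the \textsc{Betweenness} instance. So a no-instance yields a structure with no commutative associative polymorphism, and a fortiori with no semilattice and no conservative semilattice polymorphism, which closes the reduction for all three problems simultaneously.

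The step I expect to require the most care is the soundness argument for part (3): a commutative associative operation need not be idempotent and so need not correspond to a semilattice order, which is exactly why the singleton unary relations are built into $\bbB$ — they force idempotence, while leaving the answers for parts (1) and (2) unchanged since the target operations there are idempotent anyway. The only other point needing attention is the passage to a linear extension, namely that extending a meet-semilattice order to a total order preserves the betweenness of each triple; this is immediate, as a linear extension preserves all order relations.
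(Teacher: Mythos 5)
Your proof is correct and follows essentially the same route as the paper: a reduction from \textsc{Betweenness} using the relation $\{(a,b),(b,c)\}$ per triple and the same cycle of implications (ordering $\Rightarrow$ conservative semilattice $\Rightarrow$ semilattice $\Rightarrow$ commutative associative $\Rightarrow$ ordering). The only cosmetic differences are that you use $\min$ rather than $\max$ and add singleton unary relations to force idempotence in part (3), whereas the paper adds all subsets of size at most two and, in the converse direction, defines the order directly on distinct pairs by $u<v \iff f(u,v)=v$, using associativity alone for transitivity without needing idempotence.
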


\begin{proof} Inclusion in NP holds, as one may guess the operation and verify that it has the desired form.
For NP-hardness, we use a reduction from 
the classical NP-complete problem
\emph{betweenness}~\cite{Opatrny79-betweenness}: 
\begin{itemize}
\item Input: A list of triples $(i,j,k)$ of distinct integers in $\{1,\dots,n\}$;
\item Question: is there a linear ordering of $\{1,\dots,n\}$ such that for each triple $(i,j,k)$ in the list, $j$ is between $i$ and $k$ ?
\end{itemize}

For a given list of triples, 
construct the following relational structure $\bbH$: 
its universe is $H = \{1,\dots,n\}$; 
each non-empty subset of $H$ 
having size less than or equal to $2$ is a relation; 
and, for each triple $(i,j,k)$ in the list,
we have the relation $\{(i,j),(j,k)\}$. 
We claim that the following are equivalent:
\begin{enumerate}
\item there exists a linear ordering satisfying the betweenness condition;
\item $\bbH$ admits a conservative semilattice polymorphism;
\item $\bbH$ admits a  semilattice polymorphism;
\item  $\bbH$ admits a commutative, associative polymorphism.
\end{enumerate}
For (1) $\Rightarrow$ (2),
if there is such an ordering, 
it is easily verified that the maximum operation of the ordering preserves every relation; this is a conservative semilattice polymorphism of $\bbH$.
The implications (2) $\Rightarrow$ (3) $\Rightarrow$ (4) 
are immediate.
For the implication (4) $\Rightarrow$ (1),
suppose $\bbH$ has a commutative, associative polymorphism $f$. Notice that we can induce a partial ordering of $\{1,\dots,n\}$ by setting, for distinct $u,v$,
$u < v \iff f(u,v) = v$. Indeed, the associative condition guarantees this relation is transitive. Choose any linear extension of this partial ordering; it is immediate that it satisfies the in-betweenness condition.  
\end{proof}

\subsection{An NL-hardness result for certain conservative polymorphisms}
\label{subsect:nl-hardness}
The remainder of this section focuses on the proof of the following theorem,
which establishes an NL-hardness result for a number of different
types of conservative polymorphisms, in the case of 
at most binary structures.

\begin{theorem} \label{thm-nl-hard} 
 Deciding if a relational structure admits 
any of the following polymorphisms is NL-hard, 
under first-order reductions,
even in the case of at most binary structures.
\begin{itemize}

\item a commutative conservative binary polymorphism,
\item a conservative $k$-ary TS polymorphism (for any $k \geq 2$),
\item a conservative set polymorphism,
\item a conservative $k$-ary symmetric polymorphism (for any even $k \geq 2$),
\item a conservative $k$-ary cyclic polymorphism (for any even $k \geq 2$).

\end{itemize}
\end{theorem}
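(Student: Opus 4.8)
The plan is to give a single first-order reduction from an NL-complete problem---the natural candidate is (directed) $st$-connectivity, or its complement, which is also NL-complete by Immerman--Szelepcs\'enyi---to the metaquestion for each of the listed polymorphism types. The key point is that all the listed conditions coincide or are closely related for \emph{binary} operations: commutative, TS, symmetric, and cyclic all mean the same thing for a binary operation, and a conservative binary commutative polymorphism can be obtained from a conservative $k$-ary symmetric or cyclic polymorphism of even arity by identifying variables in two equal blocks (as in the proof of Corollary~\ref{cor:even-conservative}), while a conservative set polymorphism restricts to a conservative binary commutative (in fact totally symmetric) polymorphism. So it suffices to build one family of at most binary structures $\bbH_I$, first-order definable from the instance $I$, such that $\bbH_I$ admits a conservative binary commutative polymorphism if and only if $I$ is a yes-instance (or no-instance), and simultaneously such that when $\bbH_I$ admits \emph{any} such polymorphism it admits a conservative set polymorphism (so that all five bullet points are pinned down by the same construction). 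The latter can be arranged by making the structure ``simple enough'' on the yes side---e.g.\ so that its core is a single element, or so that the maximum operation of a fixed linear order works---while on the no side having a two-element subset on which no conservative binary commutative operation survives, by the group obstruction appearing in the proof of Lemma~\ref{types}.

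The concrete mechanism I would use is a gadget encoding ``the only conservative binary commutative operations are projections'' versus ``a forced affine/group structure kills all of them.'' Recall from the proof of Lemma~\ref{types} that if a two-element set $C=\{0,c\}$ carries an Abelian group structure preserved (as the ternary relation $a+b=c$) by all polymorphisms restricted to $C$, then no conservative commutative polymorphism can exist. Over a two-element domain this amounts to forcing the relation corresponding to $x \oplus y = z$, equivalently (after the at-most-binary reduction, using that we may pp-define with extra elements and project, and that adding all non-empty unary relations is harmless for conservativity) forcing a ``parity'' constraint between two designated elements. The plan is to attach, along each potential path of the input digraph, a chain of such binary ``difference'' relations so that reachability of $t$ from $s$ propagates a parity constraint that creates a copy of the group obstruction; when $t$ is \emph{not} reachable, no such constraint is forced and one can exhibit an explicit conservative binary commutative (indeed set) polymorphism, e.g.\ $\max$ for a suitable ordering, or a projection-like operation. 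One must take care that the propagation itself uses only relations that are already preserved by $\max$/projections when the constraint is \emph{not} triggered; this is the standard trick of making the gadget ``inert'' unless the connectivity condition activates it.

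The steps, in order: (1) fix the source problem as directed $st$-(non)connectivity and recall it is complete for NL under first-order reductions; (2) describe the at-most-binary structure $\bbH_I$: domain built from the vertices of the input digraph times a small fixed set, with binary relations, one per edge plus a few boundary relations at $s$ and $t$, all first-order definable from $I$ (the definability is routine since the construction is uniform and local, just like in Lemma~\ref{lemma-all-or-nothing}); (3) on the ``easy'' side, exhibit an explicit conservative set polymorphism of $\bbH_I$, hence by restriction conservative binary commutative / TS / symmetric / cyclic polymorphisms of all the required arities; (4) on the ``hard'' side, show that any conservative binary commutative polymorphism, when restricted to the designated two-element set, would have to preserve an affine relation and derive the contradiction $c+c=c$ exactly as in Lemma~\ref{types}; (5) upgrade from conservative binary commutative to each of the remaining cases: a conservative set polymorphism or a conservative even-arity symmetric/cyclic polymorphism yields a conservative binary commutative one by the identification-of-variables argument, so the equivalence transfers; (6) conclude NL-hardness under first-order reductions for all five bullets, and note hardness propagates as indicated.

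The main obstacle I anticipate is making the reduction \emph{first-order} rather than merely logspace: the structure and its relations must be defined by fixed first-order formulas over the input (an edge relation plus constants $s,t$), with no iteration, so the ``propagation of the parity constraint along a path'' cannot literally be unrolled---instead the gadget must be designed so that a single, uniformly defined family of local relations has the property that the \emph{global} existence of a conservative commutative polymorphism is equivalent to non-reachability, with the path-following done implicitly by the polymorphism itself (much as Datalog/consistency does the path-finding in the metaquestion algorithms earlier in the paper). Getting the gadget simultaneously (a) inert enough to admit a set polymorphism on the no-path side and (b) rigid enough to force the group obstruction on the path side, while (c) remaining first-order definable and at most binary, is the delicate balancing act; I would model the relations closely on the $R_e = \{(u_i,v_j) : i\neq j\}$ style gadgets of Lemma~\ref{lemma-all-or-nothing} (which are already shown first-order definable and at most binary) but with a two-element ``colour'' set and orientation chosen so that traversing an edge composes a fixed-point-free involution, so that a round trip $s \to \dots \to t$ composes to the identity on one branch and to the swap on another, producing the forbidden affine constraint precisely when $s$ reaches $t$.
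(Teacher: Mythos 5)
Your high-level plan gets several things right: the source problem (directed reachability / strong-component membership, using $\mathrm{NL}=\mathrm{coNL}$), and the observation that all five bullet points reduce to the conservative binary commutative case (identification of variables in two blocks for even symmetric/cyclic, restriction for set polymorphisms, and exhibiting an explicit set polymorphism on the easy side to transfer back). But the heart of the proof is missing, and the mechanism you propose for the hard side is not the one that works. You appeal to the affine/group obstruction from Lemma~\ref{types}; that obstruction is a \emph{consequence} of deep structural results about the absence of bounded width, used there to prove a positive result, and it is not something you can ``force'' by a local parity gadget while simultaneously keeping the structure inert on the no-side. What you could force locally is the trivial obstruction (a binary relation containing $(u,v)$ and $(v,u)$ but neither constant pair), but you give no construction showing that this obstruction materializes exactly when $t$ is reachable from $s$, nor any argument that a conservative set polymorphism exists otherwise --- you explicitly defer both to a ``delicate balancing act'' that is never performed. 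As it stands, the proposal is a proof plan, not a proof.

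The paper's route is genuinely different and is worth contrasting with yours. It first proves an exact combinatorial characterization (Lemma~\ref{lemma}): for structures of the form $\bbH(\cT)$ built from triples (a variant of the betweenness gadgets of Theorem~\ref{thm:semilattice-hard}, with relations $\{(a,b),(b,b),(b,c)\}$), a conservative commutative binary polymorphism exists if and only if no pair $(a,b)$ lies in the same strong component as $(b,a)$ in an auxiliary implication digraph $\bbD(\cT)$ --- a 2-SAT-style argument in which the choice $f(a,b)\in\{a,b\}$ propagates along arcs, and the positive direction is established by a maximal-down-set construction of an orientation $F$ of the pairs. The same lemma handles all five polymorphism types at once because every basic relation projects onto a two-element set, so values on tuples with three or more distinct entries are irrelevant. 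The reduction then subdivides each arc of the input digraph into a path of length three, realizes the resulting digraph and its reversal as $\bbD(\cT)$ for an explicitly, locally (hence first-order) defined set of triples $\cT$, and glues the two copies with a constant number of extra triples at $s$ and $t$ so that the strong-component condition of Lemma~\ref{lemma} holds exactly when $s$ and $t$ are strongly connected. This characterization lemma and the explicit triple construction are precisely the content your proposal leaves open; without an analogue of them, your argument does not go through.
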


Before launching into the proof of the result we require some preparation. Given a set $\cT$ of triples $(i,j,k)$ of distinct integers in $\{1,...,n\}$, let $\bbH(\cT)$ denote the relational structure defined as follows 
(note that this is a slight variant of the structures used in 
Theorem~\ref{thm:semilattice-hard} above):
its universe is $\{1,\dots,n\}$, and for each triple $(a,b,c) \in \cT$ we introduce a relation $\{(a,b),(b,b),(b,c)\}$. Furthermore, let $\bbD(\cT)$ denote the following digraph: 
its vertices are the pairs $(a,b)$ of distinct integers that appear consecutively in some triple of $\cT$, 
and we have an arc $(a,b)\rightarrow(b',c)$ 
iff $b=b'$ and either $(a,b,c) \in \cT$ or $(c,b,a) \in \cT$.  

\begin{lemma} \label{lemma} Let $\cT$ be a set of triples, and let $k \geq 2$. Then the following are equivalent:
\begin{enumerate}
\item  For all $a,b$, $(a,b)$ and $(b,a)$ lie in different strong components of $\bbD(\cT)$;
\item $\bbH(\cT)$  admits a  conservative commutative binary polymorphism;
\item $\bbH(\cT)$  admits a  conservative $k$-ary TS polymorphism;
\item  $\bbH(\cT)$ admits a  conservative set polymorphism.
\end{enumerate}

If $k \geq 2$ is even, then these conditions are equivalent to the following:

\begin{enumerate}
\item[(5)]  $\bbH(\cT)$  admits a  conservative $k$-ary symmetric polymorphism;
\item[(6)] $\bbH(\cT)$  admits a  conservative $k$-ary cyclic polymorphism.
\end{enumerate}

\end{lemma}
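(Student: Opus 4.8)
The plan is to run the cycle of implications $(4)\Rightarrow(3)\Rightarrow(2)\Rightarrow(1)\Rightarrow(4)$, and, when $k$ is even, to splice $(3)\Rightarrow(5)\Rightarrow(6)\Rightarrow(2)$ into it. The implications among $(2)$--$(6)$ are the routine ``change of arity'' manipulations. For $(4)\Rightarrow(3)$: a conservative set polymorphism is in particular a map $f$ on the nonempty subsets with $f(X)\in X$, and then $f_m(x_1,\dots,x_m)=f(\{x_1,\dots,x_m\})$ is, for every $m$, a conservative totally symmetric polymorphism (Lemma~\ref{lemma-width1}), in particular for $m=k$. For $(3)\Rightarrow(2)$: from a conservative $k$-ary TS polymorphism $f$ set $g(x,y)=f(x,\dots,x,y)$; total symmetry forces $g$ commutative, conservativity of $f$ forces $g$ conservative. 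When $k$ is even, $(3)\Rightarrow(5)\Rightarrow(6)$ is immediate (TS $\Rightarrow$ symmetric $\Rightarrow$ cyclic), and for $(6)\Rightarrow(2)$ put $g(x,y)=f(x,y,x,y,\dots,x,y)$ with $k/2$ alternating blocks: a single cyclic shift of the argument tuple (this is where $k$ even enters) turns it into the argument tuple of $g(y,x)$, so $g$ is commutative, and it is conservative.

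The heart of the argument is that, for the structures $\bbH(\cT)$, being a polymorphism collapses to a condition on $2$-element sets. Fix a triple $(a,b,c)\in\cT$ (so $a,b,c$ distinct) with relation $R=\{(a,b),(b,b),(b,c)\}$. A finite inspection of the admissible pairs of tuples of $R$ (in the binary case) and of the tuples of the corresponding relation $R'$ in the power structure $\cP(\bbH(\cT))$ (in the set case) shows that a conservative commutative binary function $t$, respectively a conservative set function $t$, preserves $R$ if and only if one does \emph{not} simultaneously have $t(\{a,b\})=a$ and $t(\{b,c\})=c$. In particular the value of $t$ on sets of size at least $3$, and on $2$-element sets that are not vertices of $\bbD(\cT)$, is entirely free; this already yields $(2)\Leftrightarrow(4)$ once the role of $\bbD(\cT)$ is made explicit. (This case check is the one genuinely computational ingredient, but it is short and elementary.)

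For $(2)\Rightarrow(1)$: given such a $t$, define $\beta(x,y)=1$ if $t(x,y)=x$ and $\beta(x,y)=0$ otherwise, so that $\beta(x,y)+\beta(y,x)=1$. Applied to the triple $(a,b,c)$, the condition of the previous paragraph reads precisely $\beta(a,b)=1\Rightarrow\beta(b,c)=1$, i.e.\ $\beta$ can only propagate the value $1$ forward along the arc $(a,b)\to(b,c)$ of $\bbD(\cT)$. Since every arc of $\bbD(\cT)$ comes from a triple in this way, the set $\{v:\beta(v)=1\}$ is closed under out-neighbours, hence $\beta$ is constant on each strong component of $\bbD(\cT)$. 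If $(a,b)$ and $(b,a)$ were in a common strong component we would get $\beta(a,b)=\beta(b,a)$, contradicting $\beta(a,b)+\beta(b,a)=1$. This is condition~(1); note that $(a,b)$ is a vertex of $\bbD(\cT)$ iff $(b,a)$ is, because the arc set of $\bbD(\cT)$ is carried to itself by the map that reverses each arc and swaps the coordinates of its endpoints, and for pairs not appearing as vertices the condition is vacuous.

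For $(1)\Rightarrow(4)$: that coordinate-swap map $\tau$, $\tau(x,y)=(y,x)$, is an anti-automorphism of $\bbD(\cT)$ (it sends an arc $u\to v$ to the arc $\tau v\to\tau u$), so it induces on the poset $P$ of strong components of $\bbD(\cT)$ ordered by reachability an order-reversing involution, which by~(1) is fixed-point-free. The key combinatorial fact is then self-contained: \emph{a finite poset carrying a fixed-point-free order-reversing involution $\tau$ has an up-set $U$ with $U\sqcup\tau(U)=P$}, proved by induction on $|P|$ (delete a maximal element $K$ together with the minimal element $\tau K$, apply induction, and re-adjoin $K$ to the up-set). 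Pulling $U$ back gives a $0/1$-labelling $\beta$ of the vertices of $\bbD(\cT)$, constant on components and equal to $1$ exactly on the components in $U$; it satisfies $\beta(v)+\beta(\tau v)=1$ and propagates $1$ forward along arcs. Now define a conservative set function $f$ on $\bbH(\cT)$ by letting $\beta$ dictate the value on each $2$-element vertex-pair of $\bbD(\cT)$ and taking, say, the minimum on every other nonempty subset. By the reduction of the second paragraph and the forward-propagation of $\beta$, $f$ preserves every relation of $\bbH(\cT)$, so it is a conservative set polymorphism, which gives~(4) and closes the cycles. I expect the main obstacle to be exactly this direction $(1)\Rightarrow(4)$: realising that~(1) is precisely the obstruction-free condition allowing a globally coherent choice of ``winner'' on every relevant $2$-element set, and isolating the up-set lemma for fixed-point-free order-reversing involutions; getting the power-structure relations $R'$ right in the second paragraph is the other place where care is needed, though it is elementary.
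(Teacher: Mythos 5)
Your proof is correct and follows the same overall strategy as the paper's: the arity reductions are the same routine identifications, the key observation that preservation of each relation $\{(a,b),(b,b),(b,c)\}$ collapses to the single forbidden pattern ``$t(a,b)=a$ and $t(b,c)=c$'' (equivalently, forward propagation of the choice along the arcs of $\bbD(\cT)$) is exactly the paper's, and the direction $(2)\Rightarrow(1)$ is argued identically. The one place you genuinely diverge is the converse construction: the paper builds the polymorphism by taking a down-set $F$ of $\bbD(\cT)$ maximal subject to $(a,b)\in F\Rightarrow(b,a)\notin F$ and shows by a maximality argument that exactly one orientation of each vertex-pair lands in $F$, whereas you pass to the poset of strong components, observe that coordinate-swap induces a fixed-point-free order-reversing involution there, and prove by induction the existence of an up-set $U$ with $U\sqcup\tau(U)=P$. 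The two devices do the same job; yours isolates a clean, reusable combinatorial lemma and avoids the somewhat delicate minimal-component argument in the paper, at the cost of introducing the component poset explicitly. All the surrounding details you flag (that $(a,b)$ is a vertex iff $(b,a)$ is, that values on sets of size $\ge 3$ and on non-vertex pairs are unconstrained, and the even-arity identification $g(x,y)=f(x,y,\dots,x,y)$) check out.
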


\begin{proof} Let $\bbD = \bbD(\cT)$ and $\bbH = \bbH(\cT)$. First notice that, since the projection of any basic relation of $\bbH$ on any one coordinate has size 2, the value of a polymorphism on tuples with at least 3 distinct entries is irrelevant; thus from a commutative binary polymorphism we can trivially build a TS polymorphism of any arity, and it follows that (2), (3) and (4) are equivalent. It is immediate that (3) implies (5) and (5) implies (6) for all $k$; finally if $k=2n$ then (6) implies (2): if $t$ is a conservative cyclic polymorphism of arity $2n$ then $f(x,y) = t(x,\cdots,x,y,\cdots,y)$ is a conservative, commutative polymorphism, where we identified the first $n$ variables and the last $n$ respectively. 

Suppose (2) holds and let $f$ denote the polymorphism witnessing this.  By definition of $\bbH$, it is easy to see that, if $(x,y)\rightarrow(u,v)$ in $\bbD$, then  $f(x,y) = x$ implies that $f(u,v)=u$.  Suppose for a contradiction that (1) does not hold; then there is a pair $(a,b)$ with a directed path to $(b,a)$, thus if $f(a,b) = a$ then $f(b,a)=b$, a contradiction. Hence $f(a,b) = b$; by commutativity we get $f(b,a)=b$; using the path from $(b,a)$ back to $(a,b)$ we force $f(a,b) =  a$, again a contradiction. 

Suppose that (1) holds. By our previous arguments it suffices to prove (2), and by definition of $\bbH$, to do this it suffices to define the polymorphism only on pairs that are vertices of $\bbD$ (the polymorphism can be defined arbitrarily on all other pairs of distinct elements).  Notice that, by definition, the map $(a,b) \mapsto (b,a)$ is an edge-reversing bijection of $\bbD$, i.e. $(a,b) \rightarrow (c,d)$ if and only if $(d,c) \rightarrow (b,a)$.
Let $F$ be a down set of $\bbD$ (i.e. $y \in F$ and $x\rightarrow y$ implies $x \in F$) which is maximal for the property $(a,b) \in F \Rightarrow (b,a) \not\in F$.  There is at least one such non-empty set: indeed, by hypothesis all strong components of $\bbD$ satisfy the property and at least one strong component is a down set (if a vertex does not lie in a directed cycle we consider it to be a strong component). We claim that every $(x,y) \not\in F$ satisfies $(y,x) \in F$. Indeed, if this is not the case, then there exists $(x,y) \not\in F$ with $(y,x) \not\in F$ and such that the strong component $S$ of $(x,y)$ is minimal  for this property, with respect to the ordering induced on strong components by $\bbH$. Let  $(u,v)$ admit a directed path to $(x,y)$; then we obtain a directed path from $(y,x)$ to $(v,u)$, and since $(y,x) \not\in F$ we conclude that $(v,u) \not\in F$. By minimality of $S$, it follows that  $(u,v)\in S$ or $(u,v) \in F$. Hence $F' = F \cup S$ is  a downset. By maximality of $F$, it follows there must exist some $(c,d) \in F$ with $(d,c) \in S$. But then $(d,c)$ admits a directed path to $(x,y)$ and $(c,d) \in F$, again a contradiction. 
By the just-proved claim and the definition of $F$, for each pair $(x,y)$ of $\bbD$, we have that exactly one of $(x,y),(y,x)$ is in $F$.
Define $f(x,y) = y$ for $(x,y) \in F$ and $f(x,y)=x$ otherwise. By our previous remark $f$ is commutative, and it is obviously conservative. Finally, we must show that $f$ is a polymorphism of $\bbH$. Fix a triple $(a,b,c) \in \cT$. It is easy to see that, since $f$ is conservative and commutative, it suffices to prove that $(f(a,b),f(b,c)) \in R = \{(a,b),(b,b),(b,c)\}$. The only bad case is if $f(a,b) = a$ and $f(b,c)=c$; by definition of $f$ this can only happen if $(b,c) \in F$ and $(a,b) \not\in F$, but since $(a,b) \rightarrow (b,c)$ this is impossible. 

\end{proof}

\begin{proof} ({\em of Theorem \ref{thm-nl-hard}}) 

Our goal is to reduce the following (or rather, the negation of this decision problem), via a first-order reduction,  to our problem of deciding the existence of a conservative 2-TS polymorphism:

\begin{itemize}
\item Input: a digraph $\bbK$ with two specified, distinct vertices $s$ and $t$;
\item Question: are  $s$ and $t$ in the same strong component of  $\bbK$ ?
\end{itemize}

This problem is easily seen to be NL-hard, as we can restrict it to digraphs with an arc from $t$ to $s$ to obtain the standard directed reachability problem (here we use the fact that $NL = co-NL$). Let $\bbK$ be our input digraph: we may  assume without loss of generality that $\bbK$ has no isolated vertices and no loops. It is clear that, if we construct a new digraph $\bbG$ by replacing every arc $e=(u,v)$ of $\bbK$ by a directed path of length 3, $e_1=(u,x), e_2=(x,y), e_3=(y,v)$, then there are directed paths from $s$ to $t$ and back in  $\bbG$ if and only if the same holds in $\bbK$; notice also that the construction is clearly first order.

Our overall strategy is the following: we will first exhibit a set of triples $\cT$ whose associated digraph $\bbD(\cT)$ is isomorphic to the disjoint union of  $\bbG$ and  $\overline{\bbG}$, this last digraph being obtained from $\bbG$ by flipping all its arcs.  
We can then throw in a few other triples to glue these copies together so that the resulting digraph $\bbD(\cT')$ contains pairs $(a,b)$ and $(b,a)$ that lie on a directed cycle if and only if $s$ and $t$ have the same property in $\bbG$. The construction of the set  $\cT$ is fairly straightforward,  it consists of the following triples: for each arc $\alpha = (u,v)$ of $\bbK$, we have the triples $(u_1,u_2,\alpha), (u_2,\alpha,v_1), (\alpha,v_1,v_2)$.  \\

\noindent{\bf Claim 1.} {\em The digraph $\bbD(\cT)$ is  isomorphic to the disjoint union of $\bbG$ and  $\overline{\bbG}$. }\\

Indeed, let $\Gamma$ be the set of pairs $(c,d)$ such that $(c,d,e)$ or $(e,c,d)$ is in $\cT$ for some $e$,  and let $\overline{\Gamma} = \{(y,x): (x,y) \in \Gamma\}$. It is clear that these sets are disjoint and their union equals the vertex set of $\bbD(\cT)$; furthermore no arc of $\bbD(\cT)$ connects vertices from $\Gamma$ and  $\overline{\Gamma}$. To complete the proof, it suffices to prove that the subdigraph of $\bbD(\cT)$  induced by $\Gamma$ is isomorphic to $\bbG$. Indeed, the pairs $(u_1,u_2)$ constitute a copy of the set $K \subset G$; given an arc $\alpha = (u,v)$ in $\bbK$, we have a path 
$e_1=(u,x), e_2=(x,y), e_3=(y,v)$ in $\bbG$, which corresponds exactly to the arcs induced by the triples associated to $\alpha$ with $x$ mapped to $(u_2,\alpha)$ and $y$ to $(\alpha,v_1)$; and obviously every vertex and arc of the subdigraph induced by $\Gamma$ is of this form. \\

We are now ready to define the structure we need (see Figure \ref{digraph}.)  Let $H$ be the disjoint union of the underlying set of the triples $\cT$ and $\{a,b\}$ where $a$ and $b$ are two new distinct elements.  Let $\cT'$ be the following set of triples on $H$:
$ \cT' = \cT \cup \cL $ where 
$$\cL =\left\{(a,b,s_1),(b,s_1,s_2), (s_1,s_2,a), (s_2,a,b),(t_1,a,b),(t_2,t_1,a), (b,t_2,t_1), (a,b,t_2)\right\}.$$
For convenience in the proof,  let $\overline{u} = (y,x)$ if $u = (x,y)$ is any vertex of $\bbD(\cT')$, and extend this notation to subgraphs in the obvious way. Let $L$ (for left) be the set of pairs $(x,y)$ such that $(x,y,e)$ or $(e,x,y)$ is in $\cL$ for some $e$, and let $R = \overline{L}$.
By the last claim, we may view $\bbD(\cT)$ as the disjoint union of $\bbG$ and $\overline{\bbG}$; notice that this is an induced subdigraph of $\bbD(\cT')$, since every triple of $\cL$ contains an occurrence of $a$ or $b$. \\

\noindent{\bf Claim 2.} {\em There exists a pair $(x,y)$ in the same strong component of  $\bbD(\cT')$ as  $(y,x)$ if and only if $s$ and $t$ are in the same strong component of $\bbG$.} \\

 The following observations are immediate:
  \begin{itemize} 
 \item[(i)] $s = (s_1,s_2)$, $ \overline{t} = (t_2,t_1)$ and $L$ lie in the same strong component of $\bbD(\cT')$; similarly $t$, $\overline{s}$ and $R$ lie in the same strong component of $\bbD(\cT')$;
\item[(ii)] any directed path connecting  $\bbG$ and $\overline{\bbG}$ must pass through  $s$ and $\overline{t}$ (i.e. through $L$) or through $t$ and $\overline{s}$ (i.e. through $R$).
\item[(iii)]  In $\bbD(\cT')$, there is a directed path from $u$ to $v$ if and only if there is a directed path from $\overline{v}$ to $\overline{u}$; if the first path lies in $\bbG$ ($\overline{\bbG}$) the the second lies in $\overline{\bbG}$ ($\bbG$ respectively).

\end{itemize}

Suppose there exists a pair $u$ such that $u$ and $\overline{u}$ are in the same strong component of $\bbD(\cT')$. By the preceding observations, we may safely assume that $u$ is a vertex of $\bbG$ (and thus $\overline{u}$ is a vertex of $\overline{\bbG}$.) There are, without loss of generality, only two cases to consider: 

(1) {\em The directed paths joining $u$ to $\overline{u}$ and back both go through $L$}: then $u$ and $s$ lie in the same strong component of $\bbG$ and $\overline{u}$ lies in the strong component of $\overline{t}$ in 
$\overline{\bbG}$. By observation (iii), we obtain that $u$ is in the strong component of $t$ in $\bbG$ and we are done. (2) {\em One directed path goes through $L$ and the other through $R$.}  If this is the case there is (without loss of generality) a directed path in $\bbG$ from $s$ to $t$, and a directed path in $\overline{\bbG}$ from $\overline{s}$ to $\overline{t}$; by (iii), we obtain a directed path in $\bbG$ from $t$ to $s$ and we are done. 

The converse is immediate: if $s$ and $t$ are in the same strong component of $\overline{\bbG}$ then by observation (i) $t = (t_1,t_2)$ and $ \overline{t} = (t_2,t_1)$ are in the same strong component of $\bbD(\cT')$. \\

 It follows from Claim 2 and Lemma \ref{lemma} that the structure $\bbH(\cT')$ admits no conservative, commutative binary polymorphism if and only if there are paths in $\bbG$ from $s$ to $t$ and from $t$ to $s$. To complete the proof, we briefly outline why the reduction is first order. It is very easy to define the triples of $\cT$ from the digraph $\bbK$, one may use the constants $s$ and $t$ to play the role of the indices 1 and 2. Finally,  rewriting the triples as a binary relation is obviously no problem.

\end{proof}

\begin{figure}
\colorbox{black}{\includegraphics[width=0.8\hsize,angle=90]{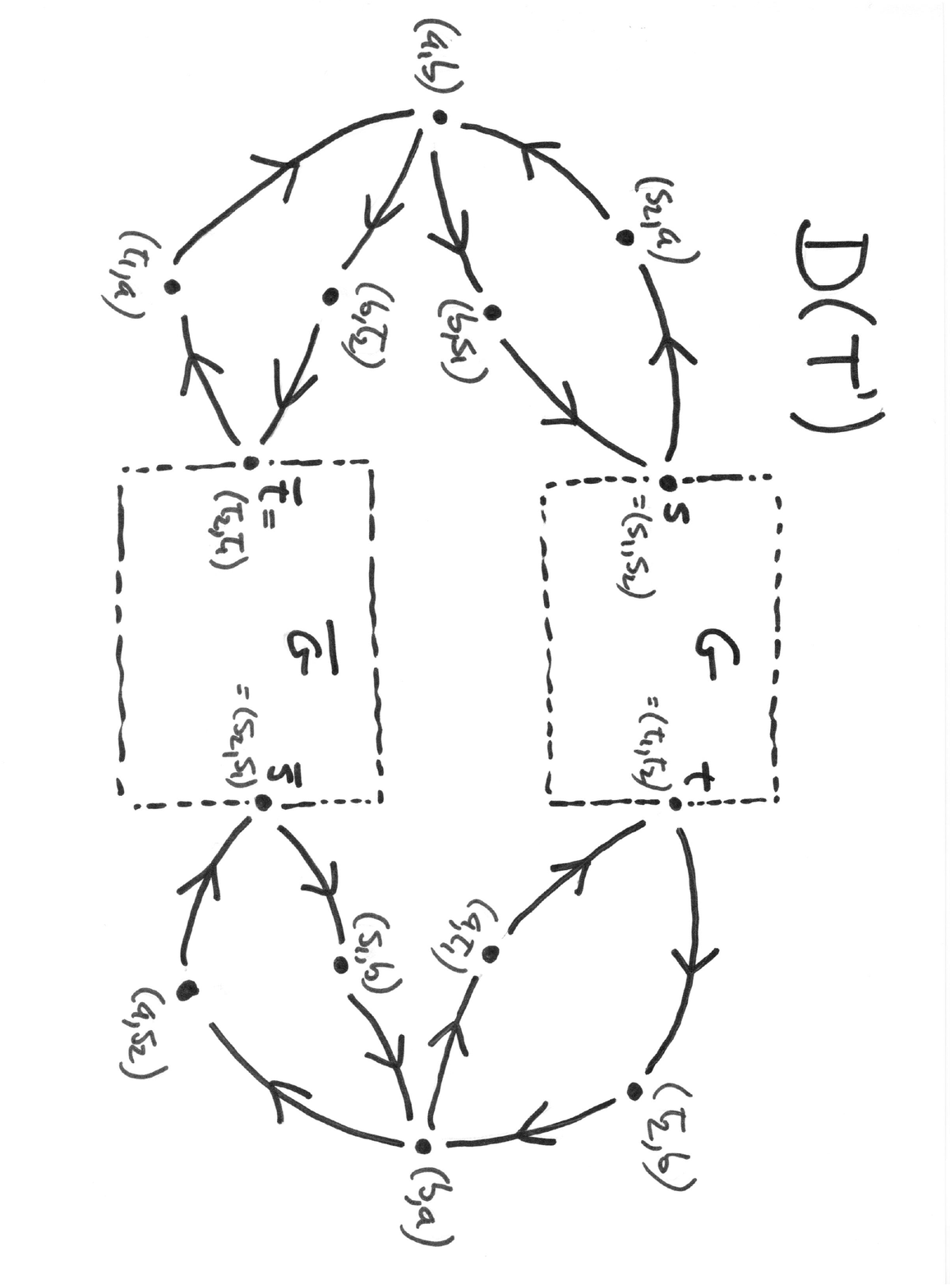}}
 \caption{The digraph $\bbD(\cT')$.} \label{digraph}
\end{figure}



\newpage

\section{Discussion}

In this section, we place in context
 some of the main themes of this article.

Let us first collect together some known relationships among the
polymorphisms and conditions that were considered.

Concerning idempotent polymorphisms, we have the following chain of implications.

\begin{center}
semilattice polymorphism $\Rightarrow$ idempotent set polymorphism
$\Rightarrow$ idempotent $k$-totally symmetric polymorphism
$\Rightarrow$ idempotent $k$-symmetric polymorphism
$\Rightarrow$ idempotent $k$-cyclic polymorphism
$\Rightarrow$ Siggers polymorphism
\end{center}

By $X \Rightarrow Y$, we mean to indicate that if a structure
has a polymorphism of type $X$, then it also has a polymorphism 
of type $Y$.  The first two implications here follow from
~\cite{DalmauPearson99-width1,FederVardi99-structure};
the 
next two are direct; and, the last 
follows from Proposition~\ref{prop:polymorphism-to-siggers}.

Concerning general polymorphisms, we have the following chain of implications.

\begin{center}
 set polymorphism
$\Rightarrow$  $k$-totally symmetric polymorphism
$\Rightarrow$  $k$-symmetric polymorphism
$\Rightarrow$  $k$-cyclic polymorphism
$\Rightarrow$ quasi-Siggers polymorphism
\end{center}

The first implication follows from ~\cite{FederVardi99-structure},
the next two are direct, and the last can be argued 
in the following way, via 
Lemma~\ref{lemma-down-to-core}: if a structure $\bbH$ has a 
$k$-cyclic polymorphism, then its core has an idempotent
$k$-cyclic polymorphism and hence a Siggers polymorphism;
it follows that $\bbH$ has a quasi-Siggers polymorphism.
Also, the two chains so far are related in the following way:
an idempotent set polymorphism is a particular type of set polymorphism,
so an idempotent set polymorphism directly implies a set polymorphism;
an analogous statement holds for $k$-totally symmetric polymorphisms,
$k$-symmetric polymorphisms, and $k$-cyclic polymorphisms.
It is readily verified that a Siggers polymorphism implies a
quasi-Siggers polymorphism.

As discussed earlier, \emph{bounded width} is a general known
sufficient condition for tractability of $\csp(\bbH)$.
It is known that if $\bbH$ has a $k$-totally symmetric polymorphism,
then $\bbH$ has bounded width; this is because, via
Lemma~\ref{lemma-down-to-core}, under the assumption,
the core of $\bbH$ has an idempotent $k$-totally symmetric polymorphism,
implying (by the proof of Corollary~\ref{cor:idempotent-ts-p})
that this core has bounded width, which in turn implies
that $\bbH$ has bounded width
(via Proposition~\ref{prop:bw-preserved-by-hom-equivalence}).

Another general known sufficient condition for the tractability
of $\csp(\bbH)$ is \emph{few subpowers}, studied by
\cite{BIMMVW10-fewsubpowers,IMMVW10-tractabilityfewsubpowers}.  
It is known that either the presence of a Maltsev polymorphism
or a $k$-near unanimity polymorphism implies 
few subpowers~\cite{BIMMVW10-fewsubpowers}
(although the metaquestion for $k$-near unanimity operations
is tractable, recall Theorem~\ref{cor:knu-polytime}).

We gave a general hardness result, 
Theorem~\ref{thm:np-completeness-maltsev-condition},
showing that, for a number of the quasi-versions of polymorphisms
and of the polymorphism types where idempotentcy is not required,
the metaquestion is hard 
(recall Example~\ref{example}
and
Corollary~\ref{cor:deciding-ts-symm-cyc}).

Concerning the applicability of the developed theory,
one can ask the following question.  
In the case that a structure $\bbH$ is known to 
have polymorphisms of some type that guarantee tractability
of $\csp(\bbH)$, in which cases can instances of $\csp(\bbH)$
be solved efficiently?
In the case of polymorphisms that imply
bounded width, the $(2,3)$-minimality algorithm can be employed
to efficiently solve the named instances
(recall Theorem~\ref{thm-barto-collapse}).
Otherwise, 
the notion of \emph{uniform polynomial-time algorithm}
from Section~\ref{sect:uniformity-and-metaquestions}
formalizes
the efficient solvability of $\csp(\bbH)$
over structures $\bbH$ having desirable polymorphisms of some type.
In the case of idempotent polymorphisms,
Theorem~\ref{thm:metaquestion-and-uniformity}
gives a characterization
of the existence of a uniform polynomial-time algorithm.





\section{Open issues}

We end by discussing a number of open issues
and posing further questions about metaquestions.

\begin{itemize}

\item A perusal of the table yields that, for some of the polymorphism types
studied, no complexity hardness result has yet been presented.
In particular, this is the case for $k$-symmetric and $k$-cyclic
polymorphisms with odd $k \geq 3$, Maltsev polymorphisms, and Siggers polymorphisms.  Can such hardness results be given?

\item Glancing at the table again, 
in all of the cases where a set polymorphism was considered, there is a wide gap between the upper bound of EXPTIME and 
the lower bound of NP-hard or NL-hard.  Can the gap be narrowed?
A possible next question could be to determine if 
deciding the presence of a general set polymorphism is coNP-hard
or not.

\item A number of 
solution procedures that extend arc consistency
have been proposed in the 
literature~\cite{ChenDalmauGrussien13-ACandfriends}.
For one of them,
\emph{look-ahead arc consistency}~\cite{ChenDalmau04-slaac,ChenDalmauGrussien13-ACandfriends}, 
the metaquestion of deciding whether or not the procedure solves
$\csp(\bbH)$ is known to be in polynomial time
(see footnote 2 in~\cite{ChenDalmau04-slaac}).
One can inquire about the complexity of the corresponding metaquestion for
other such extensions, such as \emph{peek arc consistency}~\cite{BodirskyChen10-peek,ChenDalmauGrussien13-ACandfriends}.

\item A \emph{coset-generating operation} on a set $H$
is an operation of the form $f(x,y,z) = xy^{-1}z$,
where the multiplication and inverse are relative to a 
group structure on $H$.
Can anything be said about the complexity of
deciding the presence of coset-generating polymorphism?
Such polymorphisms 
are known to imply tractability of the CSP; 
indeed, each coset-generating operation is a Maltsev operation.

Relatedly, a structure $\bbH$ having a
group polymorphism (that is, a polymorphism that is a binary operation
giving a group structure on $H$) can be readily verified
to have a coset-generating polymorphism.
What is the complexity of deciding if a structure $\bbH$ 
has a group polymorphism?

One can also ask these questions for restricted classes of groups.

\item 
Let us here say that a polymorphism is non-trivial
if it is not a projection.
What is the complexity of deciding if a given
structure has a non-trivial polymorphism?
The following observations can be made.

\begin{lemma} 
It is decidable to determine 
if a structure has a non-trivial idempotent polymorphism.
\end{lemma}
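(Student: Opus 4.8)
The plan is to reduce the question to a finite search by bounding the arity of a witnessing polymorphism. Concretely, I would prove that $\bbH$ has a non-trivial idempotent polymorphism if and only if it has an idempotent polymorphism of arity at most $N := \max(3,|H|)$ that is not a projection; decidability then follows immediately, since for each fixed $k \le N$ there are only finitely many operations $f \colon H^k \to H$, and for each one it is a routine finite check whether $f$ is idempotent (test $f(x,\dots,x)=x$ for all $x \in H$), whether $f$ is a polymorphism of $\bbH$ (for each relation $\theta$ of $\bbH$, given as an explicit finite list of tuples, and each $k$-tuple of elements of $\theta$, test that applying $f$ coordinatewise produces a tuple in $\theta$), and whether $f$ is a projection.

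For the nontrivial direction of the equivalence, I would argue as follows. The idempotent polymorphisms of $\bbH$ form a clone $\cC$ which always contains all projections. Suppose $\cC$ contains some non-projection, and let $f$ be one of minimal arity $k$. Since clones are closed under identification of variables, every operation obtained from $f$ by identifying two of its coordinates lies in $\cC$, is idempotent, and has arity $k-1$; by minimality of $k$ each such operation must be a projection. A unary idempotent operation is the identity, so $k \ge 2$; if $k \le 3$ we are already done. If $k \ge 4$, then by a classical lemma of \'Swierczkowski (an operation all of whose binary identification minors are projections is a semiprojection), $f$ is a semiprojection: there is a coordinate $i$ with $f(x_1,\dots,x_k)=x_i$ whenever $|\{x_1,\dots,x_k\}|<k$. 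But a $k$-ary semiprojection on the set $H$ forces $k \le |H|$, since for $k > |H|$ every $k$-tuple over $H$ has fewer than $k$ distinct entries, which would make $f$ equal to the projection $\pi_i$. Hence in all cases $k \le \max(3,|H|)$, as claimed.

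Putting these together, the algorithm is: loop over $k = 1,\dots,\max(3,|H|)$ and over all operations $f \colon H^k \to H$; retain $f$ only if it passes the idempotence and polymorphism tests above; answer ``yes'' if some retained $f$ is not a projection, and ``no'' otherwise. Correctness is exactly the equivalence established in the previous paragraph (the case $|H|=1$ is trivial, since then the unique operation is the identity).

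The only genuinely non-elementary input is \'Swierczkowski's lemma, which I would simply cite from the standard literature on clones; everything else is bookkeeping. I expect the one place requiring care in a full write-up to be exactly the minimal-arity analysis of the second paragraph: one must check carefully that the identification minors of $f$ are indeed forced to be projections (so that \'Swierczkowski applies), and that the semiprojection conclusion really does cap the arity at $|H|$ rather than at something larger.
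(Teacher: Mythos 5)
Your proposal is correct and follows essentially the same route as the paper: take a non-projection idempotent polymorphism of minimal arity, note that all its identification minors are projections, invoke \'Swierczkowski's lemma to get a semiprojection when the arity is at least $4$, and conclude the arity is at most $\max(3,|H|)$ since any tuple over $H$ with more than $|H|$ entries has a repetition. The explicit finite-search algorithm you append is the same routine bookkeeping the paper leaves implicit.
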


\begin{proof}  Let $\bbH$ be a finite structure. We show that if it admits a non-trivial idempotent polymorphism, it has one of arity at most $\max(3,|H|)$. Let $f$ be such a polymorphism with minimal arity. Then if we identify any two variables, we obtain a projection. By Swierczkowski's lemma~\cite{Swie60-independently}, if $f$ has arity 4 or more, then in fact, there exists a unique $i$ such that $f$ projects onto the $i$-th coordinate when we apply it to a tuple containing a repetition. In particular, if the arity is greater than $|H|$, $f$ is a projection, a contradiction. \end{proof}

Observe that this lemma implies the decidability
of determining presence of a non-trivial polymorphism:
one can first check for a non-trivial polymorphism of arity $1$;
if there is none, then every polymorphism is idempotent, 
and one can then invoke the algorithm of this lemma.

Relatedly, one can inquire about the complexity of 
deciding if a given structure has a polymorphism
that is not essentially unary.

\item In this article, we focused on the \emph{explicit representation}
of relational structures, where each relation is specified
by an explicit listing of its tuples.
In some contexts, however, it is natural to assume that the
second structure $\bbH$ of each CSP instance $(\bbG, \bbH)$
has its relations specified according to other representations
(see the discussion in~\cite{ChenGrohe10-succinct}).
Alternative representations have been considered in the literature;
for example, Marx~\cite{Marx11-csptruthtables} studied \emph{truth table representation},
and Chen and Grohe~\cite{ChenGrohe10-succinct} studied two representations
which they called 
\emph{generalized DNF representation} and
\emph{decision diagram representation}.
The latter two representations are more succinct than the
explicit representation, and many of the questions that we
have studied and discussed can be investigated for these representations.
For instance,
for each of these representations,
one can consider the complexity of any metaquestion 
where the input is a structure $\bbH$ under the representation.
One can also investigate whether or not uniform 
polynomial-time algorithms exist for various classes
of structures, under these succinct representations;
this was considered briefly by Chen and Grohe~\cite[Section 5]{ChenGrohe10-succinct},
but it seems that the theory could be developed much further.

\end{itemize}

\paragraph{\bf Acknowledgements.} The authors wish to express their gratitude to Matt Valeriote for numerous useful discussions and comments.
The first author was supported by the
Spanish Project MINECO COMMAS TIN2013-46181-C2-R, Basque Project GIU15/30, and Basque Grant UFI11/45. The second author acknowledges the support of FRQNT and NSERC. The authors are also grateful to the anonymous referees for their judicious comments.

\bibliographystyle{plain}
\bibliography{hubiebib-new}


\end{document}